\newcommand{\congest}{{\mathsf{CONGEST}}}
\newcommand{\clique}{{\mathsf{CONGEST~CLIQUE}}}
\newcommand{\local}{{\mathsf{LOCAL}}}
\title{Distributed Quantum Interactive Proofs} 
\author{Fran\c{c}ois Le Gall}{Graduate School of Mathematics, Nagoya University, Nagoya, Japan}{legall@math.nagoya-u.ac.jp}{}{}
\author{Masayuki Miyamoto}{Graduate School of Mathematics, Nagoya University, Nagoya, Japan }{masayuki.miyamoto95@gmail.com}{}{}
\author{Harumichi Nishimura}{Graduate School of Informatics, Nagoya University, Nagoya, Japan} {hnishimura@is.nagoya-u.ac.jp}{}{}
\authorrunning{F. Le Gall, M. Miyamoto and H. Nishimura} 
\keywords{distributed interactive proofs, distributed verification, quantum computation}
\begin{document}
\maketitle
%TODO mandatory: add short abstract of the document
\begin{abstract}
The study of distributed interactive proofs was initiated by Kol, Oshman, and Saxena [PODC 2018] as a generalization of distributed decision mechanisms (proof-labeling schemes, etc.), and has received a lot of attention in recent years. In distributed interactive proofs, the nodes of an $n$-node network $G$ can exchange short messages (called certificates) with a powerful prover. The goal is to decide if the input (including $G$ itself) belongs to some language, with as few turns of interaction and as few bits exchanged between nodes and the prover as possible. There are several results showing that the size of certificates can be reduced drastically with a constant number of interactions compared to non-interactive distributed proofs.

In this paper, we introduce the quantum counterpart of distributed interactive proofs: certificates can now be quantum bits, and the nodes of the network can perform quantum computation. The first result of this paper shows that by using quantum distributed interactive proofs, the number of interactions can be significantly reduced. More precisely, our result shows that for any constant~$k$, the class of languages that can be decided by a $k$-turn classical (i.e., non-quantum) distributed interactive protocol with $f(n)$-bit certificate size is contained in the class of languages that can be decided by a $5$-turn distributed quantum interactive protocol with $O(f(n))$-bit certificate size.  We also show that if we allow to use shared randomness, the number of turns can be reduced to 3-turn. Since no similar turn-reduction \emph{classical} technique is currently known, our result gives evidence of the power of quantum computation in the setting of distributed interactive proofs as well. 

As a corollary of our results,
we show that there exist 5-turn/3-turn distributed quantum interactive protocols with small certificate size for problems that have been considered in prior works on distributed interactive proofs such as 
[Kol, Oshman, and Saxena PODC 2018, Naor, Parter, and Yogev SODA 2020].

We then utilize the framework of the distributed quantum interactive proofs to test closeness of two quantum states each of which is distributed over the entire network.

\end{abstract}

\maketitle

\newpage

\section{Introduction}
\subsection{Distributed Interactive Proofs}
In distributed computing, efficient verification of graph properties of the network is useful from both theoretical and applied aspects. The study of this notion of verification in the distributed setting has lead to the notion of "distributed~$\mathsf{NP}$" in analogy with the complexity class $\mathsf{NP}$ in centralized computation: %\cite{korman2010proof,fraigniaud2011local,goos2016locally}. The distributed version is inspired by the concept of $\mathsf{NP}$ in the centralized model: 
A powerful prover provides certificates to each node of the network in order to convince that the network has a desired property; 
%the property we want to verify. 
%(or more generaly, input labels given to each node satisfies some boolean predicate). 
If the property is satisfied, all nodes must output "accept", otherwise at least one node must output "reject". This concept of "distributed~$\mathsf{NP}$" has been formulated in several ways, including \textit{proof-labeling schemes} (PLS)~\cite{korman2010proof}, \textit{non-deterministic local decision} (NLD)~\cite{fraigniaud2011local}, and \textit{locally checkable proofs} (LCP)~\cite{goos2016locally}.

As a motivating example, consider the problem of verifying whether the network is bipartite or not. While this problem cannot be solved in $O(1)$ round without prover \cite{sarma2012distributed}, it can easily be solved with a prover telling to each node to each part it belongs to, which requires only a $1$-bit certificate per node, and then each node broadcasting this information to its adjacent nodes (here the crucial point is that if the network is non-bipartite, then at least one node will be able to detect it). On the other hand, it is known that there exist  properties that require large certificate size to decide: G{\"o}{\"o}s and Suomela \cite{goos2016locally} have shown that recognizing symmetric graphs (\textsc{Sym}) and non 3-colorable graphs ($\overline{\textsc{3Col}}$) require $\Omega(n^2)$-bit certificates per node in the framework of LCP (which is tight since all graph properties are locally decidable by giving the $O(n^2)$-bit adjacency matrix of the graph).

To reduce the length of the certificate for such problems, the notion of distributed interactive proofs (also called distributed Arthur-Merlin proofs) was recently introduced by Kol, Oshman and Saxena \cite{kol2018interactive} as a generalization of distributed $\mathsf{NP}$. In this model there are two players, the prover (often called Merlin), who has unlimited computational power and sees the entire network but is untrusted (i.e., can be malicious), and the verifier (often called Arthur) representing all the nodes of the network, who can perform only local computation and brief communication with adjacent nodes. Generalizing the concept of distributed $\mathsf{NP}$, the nodes are now allowed to engage in multiple turns of interaction with the prover.
%The prover Merlin has unlimited computational power and sees the entire network (but may be malicious), and the distributed verifier Arthur only knows its local input at each node. 
%The power of interaction emerges when the nodes behave randomly (interaction is not useful for deterministic verifiers), 
As for distributed $\mathsf{NP}$, there are two requirements of the protocol: if the input is legal (yes-instance) then all nodes must accept with high probability (\textit{completeness}), and if the input is illegal then at least one node must reject with high probability (\textit{soundness}). 

In the setting of \cite{kol2018interactive}, each node has access to a private source of randomness, and sends generated random bits to the prover in Arthur's turn. For instance, a 2-turn protocol contains two interactions: Arthur first queries Merlin by sending a random string from each node, and then Merlin provides a certificate to each node. After that, nodes exchange messages with adjacent nodes to decide their outputs. The main  complexity measures when studying distributed interactive protocols
%, which we mention as $\mathsf{dAM}$ complexity, 
are the size of certificates provided to each node, the size of the random strings generated at each node and the size of the messages exchanged between nodes. Let us denote $\mathsf{dAM}[k](f(n))$ the class of languages that have $k$-turn distributed Arthur-Merlin protocols where Merlin provides $O(f(n))$-bit certificates, Arthur generates $O(f(n))$-bit random strings at each node and $O(f(n))$-bit messages are exchanged between nodes. Kol et al.~\cite{kol2018interactive} showed the power of interaction by giving a $\mathsf{dMAM}(\log n) = \mathsf{dAM}[3](\log n)$ protocol for graph symmetry (\textsc{Sym}) and a $\mathsf{dAMAM}(n\log n) = \mathsf{dAM}[4](n\log n)$ protocol for graph non-isomorphism (\textsc{GNI}), which are known to require $\Omega(n^2)$-bit certificate in LCP (see Appendix~\ref{appendix:problems} for the definition of these problems).

This model has been further studied in several works. Naor, Parter and Yogev \cite{naor2020power} showed that any $O(n)$-time centralized computation can be converted
into a $\mathsf{dMAM}(\log n)=\mathsf{dAM}[3](\log n)$ protocol. Using this compiler, for instance, %they constructed a $\mathsf{dAMAM}(\log n)=\mathsf{dAM}[4](\log n)$ protocol for $\textsc{GNI}$, and 
they constructed a $\mathsf{dMAMAM}(\log \log n)=\mathsf{dAM}[5](\log \log n)$ protocol for $\textsc{SetEquality}$ and a special case of $\textsc{Sym}$. Crescenzi, Fraigniaud and Paz \cite{crescenzi2019trade} initiated the study of distributed Arthur-Merlin protocols with shared randomness: in each Arthur's turn, Arthur generates a random string that can be seen from all nodes. In order to distinguish the two models we use $\mathsf{dAM}$ for the (standard) private randomness setting and $\mathsf{dAM}^{sh}$ for the shared randomness setting. %(if it is clear which randomness we are concerning about, we simply denote $\mathsf{dAM}$.). 
They showed that $\mathsf{dAM}$ protocols can simulate $\mathsf{dAM}^{sh}$ protocols by giving additional $O(\log n)$-size certificates. The role of shared randomness was further investigated by Montealegre, Ram{\'\i}rez-Romero and Rapaport \cite{montealegre2020shared}, who showed the computational power of small-certificate $\mathsf{dAM}^{sh}$ protocols without private randomness is relatively weak: for any constant $k$, $\mathsf{dAM}^{sh}[k]$ protocols with message size $m$ can be converted to locally checkable proofs (LCPs) with message size $O(2^m + \log n)$. %which does not hold for the standard $\mathsf{dAM}$ protocols (as mentioned above, the language $\textsc{DSym}$, which is a special case of $\textsc{Sym}$, cannot be decided by LCPs with message size $o(n^2)$ while there exists a $\mathsf{dMAMAM}(\log \log n)=\mathsf{dAM}[5](\log \log n)$ protocol for $\textsc{DSym}$). They also showed that in the case of $\dMA$ protocols (randomized proof-labeling schemes), shared randomness is more powerful than private randomness.  

Lower bounds on distributed Arthur-Merlin protocols for some concrete problems are known. Kol, Oshman and Saxena \cite{kol2018interactive} showed that if the language $\textsc{Sym}$ is in the class $\mathsf{dAM}[2](f(n))$, then $f(n)\in \Omega(\log \log n)$. As mentioned in \cite{fraigniaud2019distributed}, this lower bound can actually be improved to $f(n)\in \Omega(\log n)$. On the other hand, there is no known method to prove lower bounds when the number of turns is three or more.

\subsection{Quantum Interactive Proofs}\label{subsec:qip}
%Remember that in the centralized classical (i.e., non-quantum) setting interactive proof ($\mathsf{IP}$) systems were first introduced as a generalization of $\mathsf{NP}$ (and $\mathsf{MA}$) \cite{babai1985trading,goldwasser1989knowledge}: In this system a verifier is a $\mathsf{BPP}$ machine, and can interact with a powerful but untrusted prover to solve some problem. It is known that the class of languages that can be efficiently decided by $\mathsf{IP}$s with a polynomial number of interaction 
%of polynomial size
%is equal to the class $\mathsf{PSPACE}$ \cite{lund1992algebraic,shamir1992ip} of languages that can be decided with polynomial space. 
Quantum interactive proofs ($\mathsf{QIP}$) were introduced by Watrous \cite{watrous2003pspace} in the centralized setting as a variant of classical interactive proofs ($\mathsf{IP}$) in which the verifier can perform polynomial-time quantum computation (instead of polynomial-time classical computation), and the prover and verifier can exchange quantum bits (instead of classical bits). Kitaev and Watrous \cite{kitaev2000parallelization} first showed that $\mathsf{QIP}$, the class of languages that can be decided by a quantum interactive protocol with polynomial number of interactions, is contained in $\mathsf{EXP}$, the class of languages decided in exponential time.  This containment was improved by Jain, Ji, Upadhyay, and Watrous \cite{jain2011qip}, who showed that $\mathsf{QIP}$ is actually contained in $\mathsf{PSPACE}$, which implies that $\mathsf{QIP}$ collapses to the complexity class $\mathsf{IP}$  ($\mathsf{QIP}=\mathsf{IP}=\mathsf{PSPACE}$). 

While the above result shows that quantum interactive proofs are not more powerful than classical interactive proofs, there is a striking property of quantum interactive proofs that is not expected to hold for classical interactive proofs: in the quantum case the number of interactions can be significantly reduced. More precisely, Watrous first showed that any language in $\mathsf{PSPACE}$ can be decided by a three-turn $\mathsf{QIP}$ protocol \cite{watrous2003pspace}. After that, Kitaev and Watrous \cite{kitaev2000parallelization} showed that any $\mathsf{QIP}$ protocol with a polynomial number of interaction can be parallelized to three turns ($\mathsf{QIP} = \mathsf{QIP}[3]$). Marriott and Watrous \cite{marriott2005quantum} additionally showed that the verifier's turn in $\mathsf{QIP}[3]$ protocols can be replaced by a 1-bit coin flip ($\mathsf{QIP}[3]$=$\mathsf{QMAM}$). Kempe, Kobayashi, Matsumoto, and Vidick~\cite{kempe2009using} showed an alternative proof of $\mathsf{QIP} = \mathsf{QIP}[3]$.

\subsection{Our Results}\label{subsec:our_results}
In this paper we introduce the quantum counterpart of distributed interactive proofs, which we call distributed quantum interactive proofs (or sometimes distributed quantum interactive protocols) and write $\mathsf{dQIP}$, and show their power. 
Roughly speaking, distributed quantum interactive proofs are defined similarly to the classical distributed interactive proofs (i.e., distributed Arthur-Merlin proofs) defined above, but the messages exchanged between the prover and the nodes of the network can now contain quantum bits (qubits), the nodes can now do any (local) quantum computation (i.e., each node can apply any unitary transform to the registers it holds), and each node can now send messages consisting of qubits to its adjacent nodes. In analogy to the classical case, the main complexity measures when studying distributed quantum interactive protocols are the size of registers exchanged between the prover and the nodes, and the size of messages exchanged between the nodes. We give the formal definition of $\mathsf{dQIP}$ in Section \ref{section:definitions}. The class $\mathsf{dQIP}[k](f(n))$ is defined as the set of all languages that can be decided by a $k$-turn $\mathsf{dQIP}$ protocol where both the size of the messages exchanged between the prover and the nodes, and the size of the messages exchanged between the nodes are $O(f(n))$ qubits.

Our first result is the following theorem.

\begin{theorem}\label{theorem:main_theorem}%[Main Result]
For any constant $k\geq 5$, $\mathsf{dAM}[k](f(n))\subseteq \mathsf{dQIP}[5](f(n))$.
\end{theorem}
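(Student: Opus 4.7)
The plan is to adapt the Kitaev--Watrous parallelization of quantum interactive proofs (which established $\mathsf{QIP}[k]=\mathsf{QIP}[3]$ in the centralized setting) to the distributed setting. The fact that we end up with 5 turns rather than 3 reflects the extra handling needed for the nodes' private randomness and for the coupling between nodes in a $\mathsf{dAM}$ protocol.

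I would first fix a $\mathsf{dAM}[k]$ protocol $\Pi$ for $L$ with $O(f(n))$-bit certificates and view it as a coherent quantum interaction on a joint Hilbert space, with registers for Merlin and for each node's private coins, local memory, and channels to/from Merlin. The honest execution then becomes a sequence of unitaries producing well-defined intermediate joint states $|\psi_0\rangle,\ldots,|\psi_k\rangle$, with the classical accept/reject decision being a measurement of each node's output register after the final round of neighbor communication. The simulating 5-turn protocol $\Pi'$ would have the following structure: in Turn~1 Merlin sends each node its local share of the claimed final state $|\psi_k\rangle$; in Turn~2 each node sends a random challenge bit $b_v$ selecting which half of $\Pi$ to audit; in Turn~3 Merlin sends the claimed midpoint state $|\psi_{k/2}\rangle$; in Turn~4 each node sends a second random bit $c_v$ narrowing the audit to one quarter of $\Pi$; and in Turn~5 Merlin sends the claimed state at the chosen quarter together with the auxiliary registers the node needs in order to locally run the corresponding segment. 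Each node then reruns the (constantly many) unitaries of its selected segment, verifies consistency with the claimed endpoint via a SWAP-test--style measurement, performs the usual single round of neighbor communication from $\Pi$, and outputs accordingly.

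The main obstacle is the soundness analysis. In the centralized $\mathsf{QIP}[3]$ argument, the key step shows that no cheating prover can simultaneously pass both halves of the random-half check with probability noticeably above what would allow reconstructing an honest accepting execution. I expect the same high-level strategy to apply here, but two distributed-specific points require care: (a) Merlin might try to send inconsistent claimed transcripts to different nodes, which must be caught by the single round of neighbor communication inherited from $\Pi$ (this round already plays the analogous role in the classical soundness analysis of $\Pi$), and (b) the random challenges $b_v,c_v$ of each node must still be hidden from Merlin at the time of its Turn~1 commitment, which follows automatically from the turn ordering. Because $k$ is a constant, two rounds of binary subdivision suffice to bring the verification granularity down to segments of constant length, and the error accumulated across nodes and subdivisions should remain controllable.
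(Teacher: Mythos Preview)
Your outline has the right high-level idea (parallelize via a midpoint snapshot and a random forward/backward check), but two concrete features of your construction would not survive the soundness analysis.

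First, you let each node draw its own challenge bits $b_v,c_v$. In the distributed setting this is fatal: if different nodes audit different segments, there is no global execution being checked, and a cheating Merlin can tailor the states he sends in Turns~3 and~5 to each node's already-revealed bits. The paper's construction uses a \emph{single} coin, and the reason the final protocol has 5 turns rather than 3 is precisely the cost of distributing that single coin without prior shared randomness: a designated leader flips it and two extra prover turns (plus a spanning-tree check) are spent broadcasting it reliably to all nodes. Your sentence about ``extra handling needed for the nodes' private randomness'' gestures at this but your protocol does not implement it.

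Second, your verification step---``each node reruns the unitaries of its selected segment and checks consistency via a SWAP-test''---does not type-check. A segment of the original protocol consists of alternating verifier unitaries $V_i=\bigotimes_u V_{u,i}$ (local) and prover unitaries $P_i$ acting jointly on $(\mathsf{P},\mathsf{M})$; a node holds only $(\mathsf{V}_u,\mathsf{M}_u)$ and cannot apply any $P_i$, nor can it compute the target state to SWAP-test against. In the paper the prover is an \emph{active} participant in the forward/backward simulation: after the snapshot and the coin, the remaining turns are used for the prover to apply (what should be) $P_{2\ell+3},P_{2\ell+5},\ldots$ or $P_{2\ell-1}^{\dagger},P_{2\ell-3}^{\dagger},\ldots$, interleaved with the verifier locally applying $V_i$ or $V_i^{\dagger}$. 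The final check is not a SWAP test: on the forward branch one applies the original acceptance projection, and on the backward branch one checks that each $\mathsf{V}_u$ has returned to $\ket{0\cdots 0}$. The soundness then follows from the fidelity inequality $F(\rho,\sigma)^2+F(\xi,\sigma)^2\le 1+F(\rho,\xi)$ exactly as in Kempe et al., not from a SWAP-test bound. Finally, the recursion in the paper gets stuck at 7 turns (since $4\ell+1\mapsto 2\ell+3$ has fixed point $\ell=2$), and a separate Marriott--Watrous-style argument is needed for the last step from 7 to 5; your two-level subdivision does not address this.
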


Theorem \ref{theorem:main_theorem} shows that by using distributed quantum interactive proofs, the number of interactions in distributed interactive proofs can be significantly reduced. To prove this result, we develop a generic \emph{quantum} technique for turn reduction in distributed interactive proofs. Since no similar turn-reduction \emph{classical} technique is currently known, our result gives evidence of the power of quantum computation in the setting of distributed interactive proofs as well. 

We also show that if we allow to use randomness shared to all nodes (we denote this model by $\mathsf{dQIP}^{sh}$), the number of turns can be further reduced to three turns. 
\begin{theorem}\label{theorem:main_theorem_shared}
For any constant $k\geq 3$, $\mathsf{dAM}[k](f(n))\subseteq \mathsf{dQIP}^{sh}[3](f(n))$.
\end{theorem}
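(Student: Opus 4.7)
The plan is to adapt the Kitaev--Watrous and Marriott--Watrous round-compression techniques, which reduce any $\mathsf{QIP}$ protocol to three turns in the centralized model, to our distributed setting. The central observation is that the ``single public-coin'' middle turn used in the $\mathsf{QMAM}$ characterization of $\mathsf{QIP}[3]$ corresponds very naturally to a shared-randomness message in $\mathsf{dQIP}^{sh}$: with shared randomness, all nodes produce and forward the same string $r$ to Merlin in a single turn, whereas with private randomness this step would require inter-node coordination and seems to cost additional turns (which is precisely what forces $5$ turns in Theorem~\ref{theorem:main_theorem}).

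Starting from the $5$-turn protocol $\Pi_5 = M_1 A_1 M_2 A_2 M_3$ produced by Theorem~\ref{theorem:main_theorem} for a language $L \in \mathsf{dAM}[k](f(n))$, I would construct a 3-turn $\mathsf{dQIP}^{sh}$ protocol $\Pi_3$ of the form $M_1' A_1' M_2'$. In $M_1'$, the honest prover sends, along with the first message $M_1$, quantum ``snapshot'' registers carrying the half of $M_2$ that would arise after the first Merlin--Arthur exchange of $\Pi_5$, averaged coherently over all possible values of Arthur's first random string $r_1$. In $A_1'$, every node reads the shared random tape to obtain a common pair $(r_1, r_2)$, with no inter-node communication, and forwards it to Merlin. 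In $M_2'$, the honest prover opens its commitment on the branch $(r_1, r_2)$ and sends the registers that the nodes would hold at the end of $\Pi_5$.

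After the three turns, each node runs a two-part distributed verification: (i) the original verification of $\Pi_5$ on the revealed branch $(r_1,r_2)$, and (ii) a consistency test between the snapshot registers and the late message $M_2'$, implemented by a distributed SWAP-test gadget in which neighboring nodes exchange $O(f(n))$-qubit messages. Completeness is immediate, since the honest prover can prepare the snapshots coherently and collapse them on the revealed branch. Soundness follows by a Kitaev--Watrous style argument: if the prover's snapshots are inconsistent with the late message, the distributed SWAP test detects it with constant probability averaged over $(r_1,r_2)$; if they are consistent, then up to small error the prover is effectively running a strategy for $\Pi_5$ and is rejected by the embedded $\Pi_5$ verifier.

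The main obstacle I expect is implementing the SWAP-based consistency test in a fully distributed fashion while preserving the soundness guarantees of the centralized Kitaev--Watrous analysis. In the centralized setting Arthur holds both the snapshot and the late message in a single register and applies one global SWAP unitary; in our model these registers are spread across $n$ nodes, so the test must be decomposed into local SWAP tests on each node's share and aggregated by an ``accept if all nodes accept'' rule, which naively degrades the soundness gap. I would address this by reusing the distributed SWAP-test machinery already developed for the proof of Theorem~\ref{theorem:main_theorem}, together with parallel repetition (which does not increase the turn count), to amplify the resulting constant gap to the required $1 - 1/\mathrm{poly}(n)$ guarantee.
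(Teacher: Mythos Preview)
Your proposal diverges substantially from the paper's argument and rests on a mechanism whose soundness is not established.

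The paper never routes through Theorem~\ref{theorem:main_theorem}. Instead it converts the $\mathsf{dAM}[k]$ protocol to a $\mathsf{dQIP}[k]$ protocol (Theorem~\ref{theorem:simulation}) and then applies Theorem~\ref{theorem:shared} recursively: a $(4\ell+1)$-turn $\mathsf{dQIP}^{sh}$ protocol becomes a $(2\ell+1)$-turn one by having the prover send the \emph{middle} snapshot, the verifier flip a single shared coin, and then run the original protocol either forward (checking the acceptance projector) or backward (checking that the verifier's private registers return to $\ket{0\cdots 0}$). There is no SWAP test anywhere in the turn-reduction proofs; the forward/backward simulation \emph{is} the consistency check, and its soundness follows from the fidelity inequality of Lemma~\ref{lem:inequalities} exactly as in Kempe et al.

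Your scheme is different in kind: you want the prover to send a coherent ``commitment'' over all $r_1$ in turn~1, reveal $(r_1,r_2)$ in turn~2, have the prover ``open'' in turn~3, and then SWAP-test the snapshot against the late message. Two concrete problems. First, the ``distributed SWAP-test machinery already developed for the proof of Theorem~\ref{theorem:main_theorem}'' does not exist; that proof uses only the forward/backward trick (Theorems~\ref{theorem_5_turn} and~\ref{thm:7_to_5}), and the SWAP-test material in the paper appears only in Section~\ref{section:dSWAP} for the unrelated $\mathsf{DQCT}$ problem. Second, and more seriously, a SWAP test between two states \emph{both supplied by the prover} cannot by itself yield soundness: a cheating prover can make the snapshot and the late message perfectly consistent with each other while both are far from any honest transcript. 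In the Kempe et al.\ analysis soundness comes from comparing the prover's state against something the verifier \emph{knows} (the all-zero initial state on one branch, the acceptance projector on the other); your proposal lacks such an anchor on the ``consistency'' branch, and the sketch you give does not explain how parallel repetition would repair this.

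The clean fix is to drop the commitment/SWAP-test layer entirely and apply the halving of Theorem~\ref{theorem:shared} directly: with shared randomness the single coin flip in the middle costs one turn, so $5$ turns collapse to $3$ immediately (the case $\ell=1$), and recursion handles arbitrary constant $k$.
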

On the other hand, in the classical case, it is known that allowing shared randomness does not change the class \cite{crescenzi2019trade}: $\mathsf{dAM}^{sh}[k](f(n))\subseteq \mathsf{dAM}[k](f(n))$ for all $k\geq 3$.\footnote{In fact, the authors of \cite{crescenzi2019trade} showed $\mathsf{dAM}^{sh}[k](f(n))\subseteq \mathsf{dAM}[k](f(n)+\log n)$ for all $k\geq 1$ where the additional $\log n$ comes from constructing a spanning tree, but for $k\geq 3$, a spanning tree can be constructed with $O(1)$-sized messages between the prover and the nodes in three turns~\cite{naor2020power}, so $\log n$ can be removed.}

As mentioned above, for (classical) $\mathsf{dAM}$ protocols increasing the number of turns %to more than three 
is helpful to reduce the complexity (in particular, the certificate size) for many problems. Our result thus shows if we allow quantum resource, such protocols can be simulated in five turns, and in three turns if we allow shared randomness. More precisely, we obtain the following corollary (see Appendix~\ref{appendix:problems} for the precise definitions of these problems and Theorems \ref{th:cl1} and \ref{th:cl2} in Section \ref{section:simulation} for a statement of the corresponding classical results):

\begin{corollary}\label{corollary:concrete_problems}
\begin{enumerate}
\item There exist 
    \begin{itemize}
        \item a $\mathsf{dQIP}^{sh}[3](\log n)$ protocol for $\textsc{Asym}$,
        \item a $\mathsf{dQIP}^{sh}[3](\log n)$ protocol for $\textsc{GNI}$,% where $\textsc{GNI}$ is the harder version defined in Section 2,
        \item a $\mathsf{dQIP}^{sh}[3](\log \log n)$ protocol for $\textsc{SetEquality}$,
        \item a $\mathsf{dQIP}^{sh}[3](\log \log n)$ protocol for $\textsc{DSym}$.
        \item a $\mathsf{dQIP}[5](\log n)$ protocol for $\textsc{GNI}$.
    \end{itemize} 
\item There exists a constant $\delta$ such that if
a language $\mathcal{L}$ can be decided in $\mathrm{poly}(n)$ time and $n^{\delta}$ space, then $\mathcal{L}\in \mathsf{dQIP}[5](\log n)$ and $\mathcal{L}\in \mathsf{dQIP}^{sh}[3](\log n)$.
\end{enumerate}
\end{corollary}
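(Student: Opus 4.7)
The plan is to obtain this corollary directly from Theorem~\ref{theorem:main_theorem} and Theorem~\ref{theorem:main_theorem_shared} together with classical distributed Arthur--Merlin protocols that will be established (or cited) in Section~\ref{section:simulation} as Theorems~\ref{th:cl1} and~\ref{th:cl2}. The observation driving everything is the following: whenever a language $\mathcal{L}$ admits a classical $\mathsf{dAM}[k](f(n))$ protocol for some constant $k\geq 5$, Theorem~\ref{theorem:main_theorem} upgrades it to a $\mathsf{dQIP}[5](f(n))$ protocol and Theorem~\ref{theorem:main_theorem_shared} upgrades it to a $\mathsf{dQIP}^{sh}[3](f(n))$ protocol. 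The entire proof thus reduces to locating or constructing the right classical starting point for each statement in the corollary.

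For part~(1), I would dispatch each listed problem by exhibiting a classical $\mathsf{dAM}[k]$ protocol of matching certificate size and then invoking Theorem~\ref{theorem:main_theorem_shared} (and, for the $\mathsf{dQIP}[5](\log n)$ statement about $\textsc{GNI}$, Theorem~\ref{theorem:main_theorem}). The two $O(\log\log n)$ claims, for $\textsc{SetEquality}$ and $\textsc{DSym}$, reduce to the $\mathsf{dMAMAM}(\log\log n) = \mathsf{dAM}[5](\log\log n)$ protocol of Naor, Parter, and Yogev~\cite{naor2020power} (with $\textsc{DSym}$ handled by the natural directed variant of their construction). The $O(\log n)$ claims for $\textsc{Asym}$ and $\textsc{GNI}$ are obtained by feeding polynomial-time centralized algorithms for these two decision problems into the Naor--Parter--Yogev centralized-simulation compiler, yielding constant-turn $\mathsf{dAM}[k](\log n)$ protocols that Theorems~\ref{theorem:main_theorem_shared} and~\ref{theorem:main_theorem} then convert into the desired quantum protocols.

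For part~(2), the key ingredient is Theorem~\ref{th:cl2}, a classical compiler stating that for some constant $\delta>0$, every language decidable in $\mathrm{poly}(n)$ time and $n^\delta$ space admits a constant-turn $\mathsf{dAM}[k](\log n)$ protocol with $k\geq 5$. Once this compiler is in hand, both inclusions in part~(2) are immediate from Theorems~\ref{theorem:main_theorem} and~\ref{theorem:main_theorem_shared}.

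The main obstacle is not in the quantum step, which is fully packaged into the two main theorems, but in Theorem~\ref{th:cl2}: the Naor--Parter--Yogev construction handles only $O(n)$-time centralized computations with logarithmic-size certificates, and extending it to $\mathrm{poly}(n)$-time computations with subpolynomial space $n^\delta$ while keeping the per-node certificate at $O(\log n)$ (by distributing the simulated machine's configuration across the $n$ nodes) is the delicate part. Choosing $\delta$ correctly, and controlling how the simulated tape is partitioned and consistency-checked across the network without inflating either the number of turns above a constant or the certificate size above $O(\log n)$, is where the technical work lies; everything else in the corollary is then a mechanical application of the two main theorems.
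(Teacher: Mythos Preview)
Your overall strategy---cite classical $\mathsf{dAM}$ protocols (Theorems~\ref{th:cl1} and~\ref{th:cl2}) and apply Theorems~\ref{theorem:main_theorem} and~\ref{theorem:main_theorem_shared}---is exactly what the paper does. But two concrete points in your plan are off.

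First, you propose to obtain the $\mathsf{dAM}[k](\log n)$ protocols for $\textsc{Asym}$ and $\textsc{GNI}$ by ``feeding polynomial-time centralized algorithms for these two decision problems into the Naor--Parter--Yogev centralized-simulation compiler.'' This does not work: neither graph asymmetry nor graph non-isomorphism is known to be in $\mathrm{P}$ (both are tied to the graph-isomorphism problem), so there is no polynomial-time algorithm to feed in. The protocols for $\textsc{Asym}$ and $\textsc{GNI}$ in~\cite{naor2020power} are obtained by genuinely interactive (Goldwasser--Sipser style) arguments, not via the deterministic-simulation compiler. The paper simply cites these as part of Theorem~\ref{th:cl1}.

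Second, you describe Theorem~\ref{th:cl2} as ``where the technical work lies'' and sketch how one might extend the $O(n)$-time compiler to $\mathrm{poly}(n)$ time and $n^\delta$ space. In the paper this is not work to be done: Theorem~\ref{th:cl2} is already a result of~\cite{naor2020power} and is cited as such. The corollary therefore requires no new classical ingredient at all---both Theorems~\ref{th:cl1} and~\ref{th:cl2} are black boxes from prior work, and the proof is a one-line application of Theorems~\ref{theorem:main_theorem} and~\ref{theorem:main_theorem_shared}.
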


We also introduce a \textit{quantum} problem (i.e., the inputs are quantum states) which arises naturally when considering distributed quantum networks. More specifically, we consider the following task: There are two quantum states $\ket{\psi}$ and $\ket{\phi}$ as the inputs, each of which is distributed over the entire network (each node $u\in V$ has $N_u$-qubit of $\ket{\psi}$ and $\ket{\phi}$, where $\sum_{u\in V}N_u = N$). The goal of the task is to measure closeness of these states. We call this problem $N$-qubit Distributed Quantum Closeness Testing ($\mathsf{DQCT}_N$).
\begin{comment}
Here we formalize this task as $\mathsf{DQCT}_N$.
\begin{definition}[Distributed Quantum Closeness Testing]\label{def:dqct}
The network has two $N$-qubit quantum states $\ket{\psi}$ and $\ket{\phi}$ as the input in the following distributed manner: each node $u\in V$ has $N_u$-qubit of $\ket{\psi}$ and $\ket{\phi}$, where $\sum_{u\in V}N_u = N$. The aim is to
accept if the distance $\mathrm{dist}(\ket{\psi},\ket{\phi})$ (see Section~\ref{sec:quantum_information} for the definition of $\mathrm{dist}$) is small, and reject otherwise. We call this problem $N$-qubit Distributed Quantum Closeness Testing ($\mathsf{DQCT}_N$).
\end{definition}
\end{comment}
For this task, we show the following theorem.
\begin{theorem}\label{thm:closeness}
There is a $\mathsf{dQIP}[5](O(1))$ protocol for $\mathsf{DQCT}_N$, where the completeness and the soundness conditions are defined as follows:
\begin{itemize}
    \item \textbf{Completeness:} If $\ket{\psi} = \ket{\phi}$ and the prover is honest, the protocol is accepted with probability~1.
    %\item  For any $\ket{\psi}$ and $\ket{\phi}$, the protocol is accepted with probability at most $\frac{1}{2}+\frac{1}{2}|\braket{\psi|\phi}|^2 + \varepsilon$ for any small constant $\varepsilon > 0$.
    \item \textbf{Soundness:} If the protocol is accepted with probability $1-1/z$, $\mathrm{dist}(\ket{\psi},\ket{\phi}) \leq \sqrt{2/z} + \varepsilon$ for any small constant $\varepsilon > 0$.
\end{itemize}
\end{theorem}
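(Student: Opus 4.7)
Plan: My strategy is to implement a distributed version of the quantum SWAP test in which the prover supplies a GHZ (cat) state $\frac{1}{\sqrt{2}}(\ket{0}^{\otimes n} + \ket{1}^{\otimes n})$ to coherently control a single global swap across the network. Recall the SWAP test of Buhrman, Cleve, Watrous and de Wolf: applied to $\ket{\psi}$ and $\ket{\phi}$ held in a single register with ancilla $\ket{+}$, it accepts with probability $\frac{1}{2}(1 + |\langle\psi|\phi\rangle|^2)$. Combined with the identity $\mathrm{dist}(\ket{\psi}, \ket{\phi})^2 = 1 - |\langle\psi|\phi\rangle|^2$ for the trace distance between pure states, acceptance probability $1 - 1/z$ already gives $\mathrm{dist}(\ket{\psi}, \ket{\phi}) = \sqrt{2/z}$, which is exactly the target bound up to the $\varepsilon$ slack, and completeness (accept with probability $1$ when $\ket{\psi} = \ket{\phi}$) is immediate from the SWAP test's structure.

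To realise this test in the distributed setting, where each node $u$ holds the $N_u$-qubit portion $A_u$ of $\ket{\psi}$ and $B_u$ of $\ket{\phi}$, I would distribute an $n$-qubit GHZ state with one qubit per node and have each node apply a local controlled-SWAP on $(A_u, B_u)$ with its GHZ qubit as control. A direct computation shows that after all nodes apply $H$ to their control qubits and measure, the probability that the XOR of the $n$ outcomes is $0$ equals $\frac{1}{2}(1 + |\langle\psi|\phi\rangle|^2)$, reproducing the SWAP statistic. The protocol therefore has the following shape: in turn $1$ the prover distributes the claimed GHZ qubits; the nodes perform the local CSWAPs, Hadamards, and measurements; and the remaining four turns are used both to aggregate the XOR-parity of the $n$ measurement bits and to verify the integrity of the control state via stabilizer certificates on $X^{\otimes n}$ and the $Z_u Z_{u'}$ pairs along a spanning tree. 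Parity aggregation on $n$ distributed bits is itself a $\mathsf{dAM}$-style subproblem that, together with the stabilizer check, can be packaged into the remaining turns using the turn-reduction machinery developed for Theorem~\ref{theorem:main_theorem}, while keeping every prover-node and node-node message at $O(1)$ qubits.

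The main obstacle will be the soundness analysis against a prover who may cheat both in the initial distribution of the control state and in all later responses. I would handle this with a hybrid argument that compares the real execution to an idealised one in which the control is a genuine GHZ and parity aggregation is perfectly faithful; the deviation between the two can be bounded by (i) the distance between the prover's control state and GHZ, certified by the stabilizer sub-test, and (ii) the soundness error of the parity sub-protocol. Applying the ordinary SWAP-test calculation inside the ideal hybrid and making the two residual errors smaller than any fixed $\varepsilon > 0$ via parallel repetition of the verification subprotocols then delivers the claimed $\sqrt{2/z} + \varepsilon$ upper bound on $\mathrm{dist}(\ket{\psi}, \ket{\phi})$.
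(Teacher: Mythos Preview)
Your high-level strategy---a distributed SWAP test where the prover supplies a GHZ state and each node applies a local controlled-SWAP on its share of $\ket{\psi}$ and $\ket{\phi}$---matches the paper's, and your SWAP-test arithmetic is correct. But there is a real gap in the GHZ verification step. You have each node apply CSWAP, then Hadamard, then \emph{measure} its control qubit, and you say the remaining turns ``verify the integrity of the control state via stabilizer certificates on $X^{\otimes n}$ and the $Z_uZ_{u'}$ pairs.'' Once the control qubits have been measured in the $X$ basis they are gone; you cannot also check $Z_uZ_{u'}$ on the same qubits, and there is nothing left to certify. The missing ingredient is a cut-and-choose structure: the prover sends $N{+}1$ copies of the claimed GHZ state, the verifier picks one at random (announced only \emph{after} the prover has committed) to use for the SWAP test, and runs stabilizer tests on the other $N$. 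Without this your hybrid argument has no bound on the distance to $\ket{GHZ}$. The paper obtains exactly this via the Zhu--Hayashi graph-state verification protocol (Theorem~\ref{thm:ZH} and Figure~\ref{dQIP_GHZ}), using that $\ket{GHZ}$ is locally equivalent to the star-graph state; the random choices occupy turns~2--3.

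There is also a structural difference in how the SWAP outcome is read out. You measure locally and aggregate the XOR classically with prover help. The paper instead keeps the computation coherent: the leader copies its GHZ share into a private qubit $\mathsf{B}'$ that the prover never touches, the whole control register $\mathsf{B}$ is shipped to the prover in turn~4 and returned in turn~5, and acceptance requires every returned $\mathsf{B}(u)$ to be $\ket{0}$ together with $\mathsf{B}'=\ket{0}$ after a final Hadamard. This buys a clean soundness lemma (Lemma~\ref{lem:acc_prob_swap}): since the prover cannot touch $\mathsf{B}'$, $\mathsf{R}_1$, $\mathsf{R}_2$, the acceptance probability is at most $\tfrac{1}{2}(1+|\braket{\psi|\phi}|^2)$ by a direct calculation, \emph{independent of the prover's action in turns~4--5}. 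Your measure-then-aggregate variant may be salvageable, but you would additionally have to rule out correlated cheating between the state supplied in turn~1 and the prover's later messages in the parity sub-protocol, which your sketch does not address.
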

Without the prover, a naive approach is to accumulate all of the input to the leader node, and perform local operations at the leader node to measure their closeness. Obviously this approach is inefficient in the following sense: (1) it requires $\Omega(D)$-round of communication where $D$ is the diameter of the network; (2) the amount of communication is linear in $N$, the size of the input quantum states. Theorem~\ref{thm:closeness} means that in the $\mathsf{dQIP}$ setting, (1) it only needs 1-round of communication between the nodes; (2) the amount of communication (the size of messages per edge, and the size of messages between each node and the prover) is $O(1)$, regardless of the input size. Note that the main result of the recent paper \cite{fraigniaud2021distributed} (see Section~\ref{subsec:related_work} for their result) immediately shows that if the two input quantum states are hold by some specific two nodes respectively and there is no input for the other nodes, $\mathsf{DQCT}_N$ in an $n$-node network can be done with $O(N\cdot \mathrm{poly}(n))$ size of quantum proofs and 1-round of communication between nodes, in the non-interactive setting. Our setting is more general in the sense that the input quantum states can be distributed over the entire network.

Lastly, in Appendix \ref{section:perfect}, we show how to transform $\mathsf{dQIP}$ protocols with two-sided (i.e., completeness and soundness) bounded error into $\mathsf{dQIP}$ protocols with perfect completeness. 
We show that if we allow the communication between nodes in the middle of interaction (we call this model as $\mathsf{dQIP}c$), achieving perfect completeness is possible. Thus $\mathsf{dQIP}c$ protocols can be converted to 5-turn protocols with perfect completeness using parallel repetition with a fairly small increase of the message size.

%\subsubsection{Brief overview of our approach.}  In a nutshell, the basic strategy to prove Theorem~\ref{theorem:main_theorem} is to convert the classical $k$-turn distributed Arthur-Merlin protocol into a quantum $k$-turn distributed quantum interactive protocol, and then apply our novel turn-reduction quantum technique. While this novel turn-reduction technique is inspired by the turn-reduction techniques by Kitaev and Watrous \cite{kitaev2000parallelization} for the centralized setting mentioned above, making this technique work in the distributed setting requires several new ideas and insights. Especially, the final step of the reduction (the conversion from a 5-round quantum protocol into a 3-round quantum protocol) requires special conditions on the 5-round protocol in order to be applicable. One key insight is that when starting from a classical $k$-turn distributed Arthur-Merlin protocol with $k$ constant, these conditions are satisfied and the conversion is possible.

\subsection{Organization and Overview of our Approach}
We start by considering a more powerful model than $\mathsf{dQIP}$, which allows nodes to use a shared randomness. That is, at each turn the network can send a shared random string of limited length to the prover. We call this model $\mathsf{dQIP}^{sh}$. In Section \ref{subsection:shared}, we first show how to reduce the number of turns by half in the $\mathsf{dQIP}^{sh}$ model. This is shown by adapting to the distributed setting the method of Kempe et al.~\cite{kempe2009using}, which reduces the number of turns of $\mathsf{QIP}$ by half. More precisely, we show that for any $\ell \geq 1$, $(4\ell + 1)$-turn $\mathsf{dQIP}^{sh}$ protocols can be parallelized to $(2\ell + 1)$-turn. The main idea of \cite{kempe2009using} is the following. In the first turn the honest prover provides the verifier with a snapshot state of at the $(2\ell + 1 )$-th turn, which includes the state of its private register and the message register in the original protocol. In the second turn the verifier flips a fair coin and sends it to the prover. In the remaining turns they perform the forward or backward simulation of the original protocol, according to the result of the coin flip. We then go back to the $\mathsf{dQIP}$ model in Section~\ref{subsection:private} and show how to reduce the number of turns by half in the $\mathsf{dQIP}$ model by using the same argument as in the case of $\mathsf{dQIP}^{sh}$, by using two additional turns in order to share the result of the coin flip. %(since all nodes have to share the result of the coin flip in order to run the same protocol as in Section \ref{subsection:shared} that runs in the $\mathsf{dQIP}^{sh}$ model. This requires two additional turns. 
We can thus parallelize $(4\ell + 1)$-turn $\mathsf{dQIP}$ protocols to $(2\ell + 3)$-turn. Applying recursively this approach makes possible to reduce the number of turns down to $7$ (corresponding to $\ell=2$), but not lower.
After that, we focus on how to parallelize 7-turn $\mathsf{dQIP}$ protocols to 5-turn. Starting from 7-turn, we can reduce the number of turns to 5 in the $\mathsf{dQIP}^{sh}$ model, as in the $\mathsf{QIP}$ model. In $\mathsf{dQIP}p$ model we need additional two turns, so we need a different approach to turn-reduction when we start 7-turn protocols. To parallelize 7-turn to 5-turn, we use a protocol similar to the Marriott-Watrous protocol~\cite{marriott2005quantum}. Their protocol is used to show that $\mathsf{QIP}[3]\subseteq \mathsf{QMAM}$, where $\mathsf{QMAM}$ is the subclass of $\mathsf{QIP}[3]$ in which messages Arthur can send to Merlin are random bits. We construct a similar protocol, which can be used to parallelize 7-turn $\mathsf{dQIP}p$ protocols to 5-turn $\mathsf{dQIP}p$ protocols. 

%After that, we focus on how to parallelize 5-turn $\mathsf{dQIP}$ protocols to 3-turn, in Section \ref{section:3_turns}. When we are only allowed to use three turns, we cannot use interactions to share the coin flip. The nodes thus cannot share random bits at all. Therefore, instead of using the same method, we consider a protocol that simultaneously runs the forward and backward simulations. In the first turn the honest prover provides the verifier with a quantum state that can simulate the forward and backward simulations simultaneously. %two copies of the snapshot state of at the third turn of the original 5-turn protocol. 
%Each node then decides which register to use for the forward simulation uniformly at random. This randomness is necessary in order not to be fooled by an malicious prover. However, in general, the snapshot state sent by the (honest) prover may be entangled across multiple nodes. In this case, randomly selecting  registers may result in the state being different from the snapshot state. Therefore, in order to use this approach, additional conditions about the snapshot state have to be imposed.

In Section \ref{section:simulation} we then prove Theorem~\ref{theorem:main_theorem} and Theorem~\ref{theorem:main_theorem_shared}. We first discuss how to convert $\mathsf{dAM}[k](f(n))$ protocols to $\mathsf{dQIP}[k](f(n))$ protocols. This is achieved by doing all the computation of the $\mathsf{dAM}$ protocol in a reversible manner (i.e., unitary computation). Since the verification phase remains classical, the probability of being fooled is as low as the original $\mathsf{dAM}$ protocol, no matter what entangled state the malicious prover sends. This converted $\mathsf{dQIP}[k](f(n))$ protocol is then parallelized to 5-turn (3-turn in $\mathsf{dQIP}^{sh}$ model, respectively) by repeatedly using the technique we show in Section~\ref{section:general}. We also have to discuss the size of messages. Since in $\mathsf{dAM}[k](f(n))$ protocols, the private registers of the nodes are used only to store a copy of certificates, the size of the private registers of nodes in the converted $\mathsf{dQIP}[k]$ protocol is $O(f(n))$. Therefore the size of the snapshot states given by the prover is also $O(f(n))$.

In Section~\ref{section:dSWAP}, we tackle with the task to test closeness of two distributed quantum states ($\mathsf{DQCT}_N$ in Section~\ref{subsec:our_results}), and present a $\mathsf{dQIP}$ protocol for this task. The main difficulty is the implementation of the controlled SWAP gate, since there is no prior shared entanglement in the network. To resolve this issue, we utilize the protocol of Zhu and Hayashi~\cite{zhu2019efficient} to make the nodes share the GHZ state $\frac{1}{\sqrt{2}}(\ket{0^n} + \ket{1^n})$. Using the prover, we show the tests (described by some POVM measurement) in the protocol of~\cite{zhu2019efficient} can be implemented in the distributed setting.
Another difficulty is to avoid to be fooled by the malicious prover. This is achieved by carefully constructing the protocol, which ensures that the malicious prover cannot fraudulently increase the acceptance probability.

\begin{comment}
In our protocol, as in the SWAP test, the verifier accepts with probability proportional to the inner product of two inputs (if the two states are identical, the acceptance probability is 1, and if the two states are orthogonal, the acceptance probability is 
1/2). We also have to note that the message size of this protocol is $O(1)$ since the verifier needs the help of the prover only to distribute control qubits to implement the controlled-SWAP gate.
The key property of our protocol is that if the verifier accepts with high probability, then the trace distance between two input states is small. While this kind of property has already been shown even for entangled inputs~\cite{fraigniaud2021distributed}, we constructed a protocol that satisfies this property even under the existence of a potentially malicious prover.
\end{comment}

\subsection{Related Works}\label{subsec:related_work}
Although there is no previous result about distributed interactive proofs with quantum resources, Fraigniaud, Le Gall, Nishimura and Paz \cite{fraigniaud2021distributed} investigated the quantum version of \textit{randomized proof-labeling schemes} (or equivalently, $\mathsf{dMA}$ protocols). They considered the following problem: $N$-bit inputs (where $N$ is sufficiently larger than~$n$) are given to several nodes in a network and the goal is to check if all inputs are equal. They gave a $\mathsf{dQMA}$ protocol with $O(\log N)$ certificate size, and showed that any classical $\mathsf{dMA}$ protocol requires $\Omega(N)$ certificate size, which shows the superiority of quantum certification in this setting.

The study of distributed interactive proofs for some concrete problems has been developed recently, including two or three turn distributed Merlin-Arthur protocols for recognition of cographs, distance-hereditary graphs, and some
geometric intersection graph classes \cite{montealegre2021compact,jauregui2021distributed}. 

Research on quantum distributed algorithms that outperform classical distributed algorithms in several standard distributed models has been very active recently: there have been investigations showing the superiority of quantum distributed algorithm in the $\congest$ model \cite{legall2018sublinear,izumi2020quantum,censor2022quantum}, the $\clique$ model \cite{izumi2019quantum} and the $\local$ model \cite{legall2019quantum}.

\section{Definitions}\label{section:definitions}

\subsection{Distributed Interactive Proofs}\label{sub:dAM}
In this section we describe classical distributed interactive proofs, following the definition by Kol, Oshman and Saxena \cite{kol2018interactive}.

In distributed interactive proofs the verifier consists of a network represented by connected graph $G=(V,E)$ with $|V|=n$ nodes, and each node $u\in V$ is given its input label $I(u)$ where $I:V\rightarrow \{0,1\}^{*}$ is a function. We let $\mathcal{G}$ be the set of all connected graphs on vertices $V$, and $\mathcal{I}$ be the set of functions that maps $V$ to $\{0,1\}^{*}$. Define a language $\mathcal{L}$ by a subset
\begin{align*}
    \mathcal{L}\subseteq \mathcal{G}\times \mathcal{I}.
\end{align*}

Given a network configuration $(G,I)\in\mathcal{G}\times \mathcal{I}$ and a language $\mathcal{L}$, we consider an interactive protocol that consists of a series of interactions between a prover (\textit{Merlin}) and a distributed verifier (\textit{Arthur}). The goal of the protocol is to decide if $(G,I)\in \mathcal{L}$. The prover Merlin has unlimited computational power, and knows all information about $(G,I)$. The verifier Arthur is distributed, and initially each node $u$ only knows its input $I(u)$. Merlin is not trusted and tries to convince Arthur that $(G,I)\in \mathcal{L}$ by sending bit strings (certificates). Arthur provides Merlin some random queries. % in order not to be fooled.
There exist two types of randomness that can be used by Arthur: private randomness and shared randomness.
In the private randomness setting, each node can generate random bits that cannot be seen by the other nodes.
In the shared randomness setting, all random bits generated by Arthur are shared between all nodes. We denote the private randomness setting by $\mathsf{dAM}$ and the shared randomness setting by $\mathsf{dAM}^{sh}$.

A $k$-turn distributed interactive protocol (also called $k$-turn distributed Arthur-Merlin protocol in \cite{kol2018interactive}) begins with Merlin's turn if $k$ is odd, and Arthur's turn otherwise. 
%Here we consider $k$ is odd, 
If $k$ is odd, the protocol begins with Merlin's turn. In the first turn, Merlin chooses a function $c_1:\{0,1\}^*\rightarrow \{0,1\}^*$ determined by the network configuration $(G,I)$, and sends $c_1(u)$ to each node $u$.
In the second turn, Arthur picks a random string $r_2(u)$ at each node $u$, and sends them to Merlin. (In $\mathsf{dAM}^{sh}$ interactive protocols Arthur picks one random string and it can be seen by all nodes.)
This series of interactions continues for $k$ turns. More precisely, if the $j$-th turn is Merlin's turn, he sends a certificate $c_j(u)$ to each node $u$ where $c_j:\{0,1\}^*\rightarrow \{0,1\}^*$ is a function of $(G,I)$ and all of random strings $\{r_i(u)\}_{u\in V,i\in \{2,4,...,j-1\}}$ received from Arthur, and if the $j$-th turn is Arthur's, he picks a random string $r_j(u)$ at each node $u$, and sends them to Merlin. If $k$ is even then the first certificate is regarded as $c_1(u)=\emptyset$ and the protocol begins with Arthur's turn. 

The protocol completes with the verification phase. In this phase every node $u$ broadcasts a message $M_u$ to its neighbors which may depend on its input $I(u)$, random strings $u$ picked, and the certificates $u$ received. Finally, $u$ decides its output (accept or reject) by all information accumulated by $u$. Arthur accepts if and only if all nodes accept.
We say that a protocol has completeness $c$ and soundness $s$ for a language $\mathcal{L}$ if the following conditions hold for the verifier $G$ and the input label $I$:

\begin{enumerate}
    \item (Completeness) If $(G,I) \in \mathcal{L}$, then there exists a prover \textit{P} such that $\mathrm{Pr}(\text{all nodes accept})\geq c$. 
    \item (Soundness) If $(G,I) \notin \mathcal{L}$, then for any prover \textit{P}, $\mathrm{Pr}(\text{all nodes accept})\leq s$.
\end{enumerate}

As in \cite{kol2018interactive}, we define the class $\mathsf{dAM}[k](f(n))$ as the class of languages accepted by such $k$-turn distributed interactive protocols in which in each turn the prover and the verifier exchange $O(f(n))$ bits per node, and each node exchanges $O(f(n))$ bits with its neighbors during the verification procedure. The formal definition is as follows.

\begin{definition}[\cite{kol2018interactive}]\label{def:dAM}
The class $\mathsf{dAM}[k](f(n))$ is the class of languages $\mathcal{L}\subseteq \mathcal{G}\times \mathcal{I}$ that have a $k$-turn distributed Arthur-Merlin protocol 
with completeness $2/3$ and soundness $1/3$ 
satisfying the following conditions:
%which satisfies the following conditions:
\begin{itemize}
    \item At each Merlin's turn, Merlin sends certificates of $O(f(n))$ bits per node, and at each Arthur's turn, each node sends $O(f(n))$ random bits to Merlin.
    \item The size of messages exchanged between two adjacent nodes in the verification phase is $O(f(n))$ bits.
%    \item If $(G,I)\in \mathcal{L}$, the probability Arthur accepts is at least $\frac{2}{3}$.
%    \item If $(G,I)\notin \mathcal{L}$, the probability Arthur accepts is at most $\frac{1}{3}$.
\end{itemize}
\end{definition}

\subsection{Distributed Quantum Interactive Proofs}\label{sub:dQIP}
In this section we define the quantum counterpart of distributed interactive proofs, which we call distributed quantum interactive proofs. We assume the reader is familiar with the basic notions of quantum computation such that bra-ket notation of qubits, quantum circuits, and density operators (see \cite{nielsen2002quantum}, for instance, for a good reference).

Distributed quantum interactive proofs are defined similarly to the classical distributed interactive proofs of Section~\ref{sub:dAM}, but now the messages exchanged between the prover and the nodes consist of qubits, the nodes can do any (local) quantum computation, and each node can send messages consisting of qubits to its adjacent nodes. To make this rigorous and define the complexity of the protocol, we need to carefully specify how the messages are encoded using quantum registers and who owns the registers during the computation.\footnote{Since quantum information differs from classical information in several fundamental ways (in particular, quantum information cannot be copied and quantum message can share ``entanglement''), when studying quantum communication complexity or quantum distributed computation, a quantum message is represented as a quantum register (i.e., a physical system comprising multiple qubits) and the action of sending a quantum message is represented by sending this quantum register. The message size corresponds to the size of the register.}  Here is the formal definition.

\begin{definition}\label{def:dQIP}
A $k$-turn distributed quantum interactive proof ($\mathsf{dQIP}$) is a protocol between a prover and a distributed verifier who interact in the following way:
\begin{itemize}
    \item \textbf{The configuration:} The verifier consists of an $n$-node network $G=(V,E)$. Each node $u$ begins with a quantum register $\mathsf{V}_u$. We denote $\mathsf{V}$ the set of registers $\{\mathsf{V}_{u}\}_{u\in V}$. The prover begins with a quantum register $\mathsf{P}$. In addition, there is a quantum message register $\mathsf{M}_u$ for each $u$. %$\{\mathsf{M}_u\}_{u\in V}$,
    Let $\mathsf{M}$ be the set of registers $\{\mathsf{M}_{u}\}_{u\in V}$. The prover initially has the register $\mathsf{M}$ if $k$ is odd, otherwise the node $u$ initially has the register $\mathsf{M}_u$. The initial state in $\mathsf{V}$ and $\mathsf{M}$ is the all-zero pure state $\ket{0\cdots 0}$.
    \item \textbf{The interaction:} The interaction of a $\mathsf{dQIP}$ system is the repetition of prover's turn and verifier's turn. In the prover's turn, the prover performs arbitrary unitary transform denoted $P_i$ to $(\mathsf{M},\mathsf{P})$ and sends each $\mathsf{M}_u$ to~$u$ in the $i$-th turn. In the verifier's turn, the verifier can do any local (quantum) computation. More precisely, each node $u$ performs an arbitrary unitary transform denoted $V_{u,i}$ to $(\mathsf{V}_u,\mathsf{M}_u)$. Then, each node $u$ sends $\mathsf{M}_u$ to the prover in the $(i+1)$-th turn. 
    %$2i$-th turn. 
    We let $V_i=\bigotimes_{u\in V}V_{u,i}$ be the unitary transform applied by the verifier in the $i$-th turn.  
    %\item \textbf{The verification:} After the interaction phase, each node $u$ prepares registers $\mathsf{W}_{u,v}$ for $(u,v)\in E$ which are initialized to $\ket{0\cdots 0}$.
    %Then $u$ performs arbitrary unitary transform to $\mathsf{V}_u,\mathsf{M}_u$ and $\mathsf{W}_{u,v}$ for $(u,v)\in E$. After that for any $(u,v)\in E$, the two registers $\mathsf{W}_{u,v}$ and $\mathsf{W}_{v,u}$ are swapped by the SWAP gate, which is a two-qubit gate that transforms $|a\rangle|b\rangle$ to $|b\rangle|a\rangle$ for any $a,b\in\{0,1\}$. 
    %Here, we let $V_{k+1}$ be the unitary transform that is performed in the verification phase. In summary, if $k$ is odd then the interaction begins with the prover's turn, and the entire unitary transform is written by $Q=V_{k+1} P_{k} \cdots V_2 P_1$. If $k$ is even then $Q$ is written by $Q=V_{k+1} P_{k} \cdots P_2 V_1$. After that, each node $u$ performs a POVM measurement $(\Pi_{\text{acc},u},\Pi_{\text{rej},u}=I-\Pi_{\text{acc},u})$ on register $\mathsf{V}_{u},\mathsf{M}_{u}$ and $\mathsf{W}_{u,v}$ for $(u,v)\in E$ to obtain its output. Without loss of generality, we can assume $\Pi_{\text{acc},u}=\ket{0}\bra{0}\otimes I$ for all $u\in V$, i.e., node $u$ accepts the protocol iff the first qubit of register $\mathsf{V}_{u}$ is in the state $\ket{0}$.
    \item \textbf{The verification:} After the interaction phase, each node $u$ prepares registers $\mathsf{W}_{u,v}$ for $(u,v)\in E$ which are initialized to $\ket{0\cdots 0}$ and used for communication.
    Then $u$ performs an arbitrary unitary transform on the registers $\mathsf{V}_u,\mathsf{M}_u$ and all $\mathsf{W}_{u,v}$ for $(u,v)\in E$. After that, each node communicates with its neighbors, performs a measurement, and then decides reject/accept based on the outcome of the measurement (a more formal description of this step is given to the end of Section \ref{sub:dQIP}).
\end{itemize} 
\end{definition}

    Note that distributed quantum interactive proofs as defined above can simulate random bits.\footnote{Concretely, simulating one random bit can be done by using the Bell pair $\frac{1}{\sqrt{2}}|00\rangle+\frac{1}{\sqrt{2}}|11\rangle$ and keeping one qubit of the pair.} The size of the certificate sent from the prover and node $u$ at each prover's turn is the size of the register $\mathsf{M}_u$. The size of the message sent from node $u$ to the prover at each verifier's turn is also the size of the register $\mathsf{M}_u$. At the verification phase, the size of the message exchanged between node $u$ and $v$ is the size of the register $\mathsf{W}_{u,v}$. This leads to the following definition of the complexity class $\mathsf{dQIP}[k](f(n))$, as the natural quantum variant of the complexity class $\mathsf{dAM}[k](f(n))$ of Definition \ref{def:dAM}.

\begin{definition}
The class $\mathsf{dQIP}[k](f(n))$ is the class of languages $\mathcal{L}$ such that there exists a $k$-turn $\mathsf{dQIP}$ protocol for $\mathcal{L}$  with completeness $\frac{2}{3}$ and soundness $\frac{1}{3}$ satisfying the following conditions:
\begin{itemize}
    \item The size of register $\mathsf{M}_u$ for each node $u$ is $O(f(n))$. 
    \item The size of register $\mathsf{W}_{u,v}$ exchanged between $u$ and $v$ in the verification phase is $O(f(n))$ for any $(u,v)\in E$.
\end{itemize}
\end{definition}

\subsubsection{Technical Details about the Verification Phase}
We now give a more formal (and more technical) description of the last step of the verification phase in Definition \ref{def:dQIP}.
The communication and measurement operations can be specifically described as follows:
for any $(u,v)\in E$, the two registers $\mathsf{W}_{u,v}$ and $\mathsf{W}_{v,u}$ are swapped by the SWAP gate, which is a two-qubit gate that transforms $|a\rangle|b\rangle$ to $|b\rangle|a\rangle$ for any $a,b\in\{0,1\}$. 
    Here, we let $V_{k+1}$ be the unitary transform that is performed in the verification phase. If $k$ is odd then the interaction begins with the prover's turn, and the entire unitary transform is written by $Q=V_{k+1} P_{k} \cdots V_2 P_1$. If $k$ is even then $Q$ is written by $Q=V_{k+1} P_{k} \cdots P_2 V_1$. After that, each node $u$ performs a POVM measurement $(\Pi_{\text{acc},u},\Pi_{\text{rej},u}=I-\Pi_{\text{acc},u})$ on register $\mathsf{V}_{u},\mathsf{M}_{u}$ and $\mathsf{W}_{u,v}$ for $(u,v)\in E$ to obtain its output. Without loss of generality, we can assume $\Pi_{\text{acc},u}=\ket{0}\bra{0}\otimes I$ for all $u\in V$, i.e., node $u$ accepts the protocol iff the first qubit of register $\mathsf{V}_{u}$ is in the state $\ket{0}$.
\subsubsection{Variants of the Definition}
The above definition corresponds to the distributed quantum interactive proofs with private randomness where communication between nodes of the networks only happens after the interaction with the prover. This is the natural quantum analog of the definition of classical distributed interactive proofs by \cite{kol2018interactive} given in Section \ref{sub:dAM}.

A possible variant is distributed quantum interactive proofs with shared randomness, in which nodes are allowed to use shared randomness. In order to distinguish this model with the settings of private randomness, we denote it $\mathsf{dQIP}^{sh}$. We denote $\mathsf{dQIP}^{sh}[k](f(n))$ the complexity class defined for this variant similarly to Definition \ref{def:dAM}, with the additional condition that at each turn the size of (shared) random bits sent to the prover is also $O(f(n))$-bit (i.e., each node $u$ can send its message register and a random string $s$ of size $O(f(n))$, but $s$ must be the same as those of the other nodes). 

Another variant, which we call $\mathsf{dQIP}c$, is the variant where nodes can communicate with each other in the middle of interaction with the prover. While in this paper we do not focus on this variant (since the classical version did not consider communication in the middle of the interaction with the prover either), we present a general result about this natural setting in Appendix \ref{section:perfect}.  We denote $\mathsf{dQIP}c[k](f(n))$ the complexity class defined for this variant similarly to Definition \ref{def:dAM}.

%Therefore in the interaction phase node $u$ can apply arbitrary unitary transform to $\mathsf{W}_{v,u}$. Also, $\mathsf{W}_{u,v}$ and $\mathsf{W}_{v,u}$ can be swapped by the SWAP gate for $(u,v) \in E$.

\section{General Turn Reduction Technique for Distributed Quantum Interactive Proofs}\label{section:general}

In this section we show a general reduction technique to reduce the number of turns by half while keeping the soundness parameter relatively low. The complexity 
%(i.e., $\mathsf{dQIP}$ complexity) 
only increases by the size of the private register (i.e., the amount of quantum memory used for local computation at each node).

\subsection{Distributed Quantum Interactive Proofs with Shared Randomness}\label{subsection:shared}
We first consider the case of $\mathsf{dQIP}^{sh}$ model, and show the following theorem.

\begin{theorem}\label{theorem:shared}
Let $\ell \geq 1$ be an integer, $\mathcal{L}\subseteq \mathcal{G}\times\mathcal{I}$ be a language that has a $\mathsf{dQIP}^{sh}[4\ell + 1](f(n))$ protocol with completeness $c$ and soundness $s$ for some $c^2 > s$ where the protocol uses $g(n)$ space register at each node $u$ in the interaction phase. Then $\mathcal{L}$ has a 
%$(2\ell +1)$-turn 
$\mathsf{dQIP}^{sh}[2\ell + 1](f(n)+g(n))$ protocol with completeness $\frac{1+c}{2}$ and soundness $\frac{1+\sqrt{s}}{2}$.
\end{theorem}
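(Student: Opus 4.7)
The plan is to adapt the snapshot-based parallelization of Kempe, Kobayashi, Matsumoto and Vidick~\cite{kempe2009using} to the distributed shared-randomness setting. Let $\Pi$ be the given $(4\ell+1)$-turn $\mathsf{dQIP}^{sh}$ protocol with prover unitaries $P_j$ (for odd $j$) and verifier unitaries $V_j = \bigotimes_{u \in V} V_{u,j}$ (for even $j$), and write $|\psi_{2\ell+1}\rangle$ for the joint pure state on $(\mathsf{V}, \mathsf{M}, \mathsf{P})$ reached after the middle turn $2\ell+1$ under the honest execution. Because the shared randomness is publicly available to the prover and each $V_{u,j}$ depends only on the node input together with this randomness, the honest prover can compute $|\psi_{2\ell+1}\rangle$ in advance.

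I would construct the new $(2\ell+1)$-turn protocol $\Pi'$ as follows. In turn~1 the honest prover prepares $|\psi_{2\ell+1}\rangle$ and sends the registers $\mathsf{V}$ and $\mathsf{M}$ to the verifier; each node $u$ receives $\mathsf{V}_u \otimes \mathsf{M}_u$ of total size $O(f(n)+g(n))$, while the prover keeps $\mathsf{P}$. In turn~2 the verifier draws a single shared coin $b\in\{0,1\}$ and returns the register $\mathsf{M}$ to the prover together with $b$. The remaining $2\ell-1$ turns then carry out either a \emph{forward} simulation of the second half of $\Pi$ if $b=0$, applying $V_{2\ell+2}, P_{2\ell+3}, \ldots, P_{4\ell+1}$ in order, or a \emph{backward} simulation of the first half if $b=1$, applying $P_{2\ell+1}^{\dagger}, V_{2\ell}^{\dagger}, \ldots, P_{1}^{\dagger}$ in order; any leftover verifier unitary needed to align the turn parities is absorbed into turn~2 or into the verification phase. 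In the verification phase, each node $u$ accepts when $b=0$ iff the original verification of $\Pi$ accepts, and when $b=1$ iff every qubit of $(\mathsf{V}_u, \mathsf{M}_u)$ measures $|0\rangle$. Completeness is then immediate: for the honest prover the forward branch accepts with probability at least $c$ by completeness of $\Pi$, and the backward branch returns to $|0\cdots 0\rangle$ with probability exactly $1$, so averaging over $b$ gives $(1+c)/2$.

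The heart of the proof, and where I expect the main difficulty, is soundness. Fix any cheating prover, let $|\xi\rangle$ denote the joint state it prepares in turn~1 (with its own purification carried in $\mathsf{P}$), and write $p_0, p_1$ for its conditional acceptance probabilities on the two branches, so that $p = (p_0+p_1)/2$. The backward check forces $p_1$ to equal the squared overlap between $|\xi\rangle$ and a state produced by the inverses of the first half of $\Pi$ acting on $|0\cdots 0\rangle$. Applying Uhlmann's theorem on the $\mathsf{P}$ purification, one can then exhibit a (possibly cheating) prover strategy for the first half of $\Pi$ whose joint output has fidelity $\sqrt{p_1}$ with $|\xi\rangle$; composing this strategy with the cheater's forward continuation produces a prover for $\Pi$ accepted with probability at least $(\sqrt{p_0}-\sqrt{1-p_1})^2$, by a Kitaev--Watrous-type amplitude triangle inequality. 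Bounding this quantity by $s$ via the soundness of $\Pi$ and optimizing under the constraint $p_0 + p_1 = 2p$ yields $p \leq (1+\sqrt{s})/2$.

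The distributed structure of the verifier turns out to be benign for the above argument: the global verifier unitary factorizes as $\bigotimes_{u} V_{u,j}$, and the final acceptance predicate is a conjunction across nodes, both of which are preserved under the KKMV fidelity analysis without any new ingredient. The only genuine adaptation is that the coin $b$ flipped in turn~2 must be broadcast identically to every node, which is possible at $O(1)$ cost precisely because we are in the $\mathsf{dQIP}^{sh}$ model; this is also the only step that would break in the private-randomness setting and is what will necessitate the two extra turns in the $\mathsf{dQIP}$ version treated in Section~\ref{subsection:private}.
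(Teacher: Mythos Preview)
Your approach is precisely the paper's: adapt the KKMV snapshot-and-bisection argument to the distributed setting, using one bit of shared randomness for the forward/backward coin. The soundness sketch via Uhlmann and the fidelity triangle inequality is the right one, and your observation that the shared coin is the only place where the $\mathsf{dQIP}^{sh}$ assumption is used is exactly what the paper emphasizes.

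There is, however, a small but genuine slip in your backward branch. With the snapshot taken after $P_{2\ell+1}$, inverting the first half requires the $\ell+1$ prover unitaries $P_{2\ell+1}^{\dagger},P_{2\ell-1}^{\dagger},\ldots,P_{1}^{\dagger}$, but after turn~2 only $\ell$ prover turns remain (turns $3,5,\ldots,2\ell+1$); your ``absorb leftovers into turn~2 or the verification phase'' trick handles extra \emph{verifier} unitaries but cannot absorb a prover unitary. Consequently $P_{1}^{\dagger}$ is never applied, the honest prover cannot drive $\mathsf{M}_u$ back to $|0\rangle$, and your completeness claim for $b=1$ fails. The paper avoids this by taking the snapshot one step later, after $V_{2\ell+2}$, and in the backward branch testing only that $\mathsf{V}_u$ (not $\mathsf{M}_u$) is all-zero: since $P_1$ acts on $(\mathsf{M},\mathsf{P})$ alone, undoing down through $V_2^{\dagger}$ already leaves $\mathsf{V}$ in $|0\cdots 0\rangle$, and no final $P_1^{\dagger}$ is needed. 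Either fix (shift the snapshot, or drop $P_1^{\dagger}$ and relax the check to $\mathsf{V}_u$ only) repairs your protocol without affecting the soundness analysis.
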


\begin{proof}
We will prove this theorem by adapting the method halving the number of turns of quantum interactive proofs given by \cite{kempe2009using} into the distributed setting. In their method, the prover first provides the snapshot state of the message register and the private register at (almost) half of turns in the original protocol. Then the verifier flips a coin and decides to execute either a forward-simulation or a backward-simulation of the original protocol. The honest prover can perform the simulation according to the verifier's coin flip. On the other hand, due to the randomness of the verifier's choice, the malicious prover cannot fool 
%cheat to fool
the verifier. In order to implement this in the distributed setting, we only need to simulate the verifier's coin flip.

Let $\mathcal{L}$ be a language that has a $k$-turn $\mathsf{dQIP}^{sh}$ protocol where $k= 4\ell +1$ for some integer $\ell \geq 1$, and let $Q=V_{k+1}P_k\cdots V_2 P_1$ be the unitary transform that is applied to register $(\mathsf{V},\mathsf{M},\mathsf{P})$ in the protocol. We show a $(2\ell + 1)$-turn $\mathsf{dQIP}^{sh}$ protocol in Figure \ref{fig:protocol_shared}.

\begin{figure}[htbp]
\begin{mdframed}
\begin{enumerate}
	\item The prover sends $(\mathsf{V}_u,\mathsf{M}_u)$ to the node $u$ for all $u\in V$.	
	\item The network generates a random bit $r$ and sends it to the prover. %creates a Bell pair $\frac{1}{\sqrt{2}}\ket{00}+\frac{1}{\sqrt{2}}\ket{11}$ in the shared register $\mathsf{R}$.
	\item 
	\begin{enumerate}
	    \item The network applies $V_{2\ell+2}^{\dagger}$
	    to $(\mathsf{V},\mathsf{M})$ if $r=1$,
	    and sends $\mathsf{M}$.
	    \item For $j=1$ to $\ell-1$
	    %$\ell$, 
	    do the following:\\
	    Each node $u$ receives $\mathsf{M}_u$ from the prover. The network applies $V_{2\ell + 2j +2}$
	    %$V_{2\ell + 2j}$ 
	    if $r=0$, and $V_{2\ell - 2j +2}^{\dagger}$
	    %$V_{2\ell - 2j}^{\dagger}$ 
	    if $r=1$.
	    %(here we define $V_0 = I$). 
	    Then each node $u$ sends $\mathsf{M}_u$ to the prover.
	    \item Each node $u$ receives $\mathsf{M}_u$ from the prover. The network applies $V_{4\ell + 2}$
	    %$V_{4\ell + 1}$ 
	    if $r=0$, 
	    and applies $V_2^\dagger$ if $r=1$.
	    If $r = 0$, then each node $u$ performs the POVM measurement $\{\Pi_{\text{acc},u},I-\Pi_{\text{acc},u}\}$ on register $(\mathsf{V}_{u},\mathsf{M}_{u})$ and obtains its output.
	    If $r=1$, each node $u$ measures the content of $\mathsf{V}_u$ and accepts iff the qubits in $\mathsf{V}_u$ are in $\ket{0\cdots 0}$.
	\end{enumerate}
\end{enumerate}
\end{mdframed}
\caption{$(2\ell + 1)$-turn $\mathsf{dQIP}^{sh}$ protocol.}
\label{fig:protocol_shared}
\end{figure}

It is obvious that the protocol in Figure \ref{fig:protocol_shared} is a $(2\ell +1)$-turn $\mathsf{dQIP}^{sh}$ protocol. The analysis of completeness and soundness can be shown in the same way as in \cite{kempe2009using}.
\end{proof}

Applying recursively Theorem~\ref{theorem:shared} makes possible to reduce the number of turns down to 3: Let $m$ be the minimum integer that satisfies $k\leq 2^m+1$. Then, the number of turns can be reduced from $k$ to 3 by applying Theorem~\ref{theorem:shared} $m-1$ times.

\subsection{Distributed Quantum Interactive Proofs without Shared Randomness}\label{subsection:private}

Next, we consider $\mathsf{dQIP}$ protocols and show the analogous result of Theorem \ref{theorem:shared} in the $\mathsf{dQIP}$ model.
We can implement the protocol of Figure \ref{fig:protocol_shared} in the $\mathsf{dQIP}$ model by simulating the step (2) of Figure \ref{fig:protocol_shared} (the verifier's coin flip) without shared randomness. In order to simulate it, we need additional two turns:
In the first turn the prover sends the information that represents a rooted spanning tree along with the snapshot state. 
The root (denoted by $\ell$) of the spanning tree creates a Bell pair $\frac{1}{\sqrt{2}}\ket{0}_{\mathsf{M}_{\ell}}\ket{0}_{\mathsf{V}_{\ell}} + \frac{1}{\sqrt{2}}\ket{1}_{\mathsf{M}_{\ell}}\ket{1}_{\mathsf{V}_{\ell}}$ using one qubit of its message register and one qubit of its private register, then sends the message register to the prover in the second turn.
The prover in the third turn creates $\frac{1}{\sqrt{2}}\ket{0^{n}}_{\mathsf{M}} + \frac{1}{\sqrt{2}}\ket{1^{n}}_{\mathsf{M}}$ using the CNOT gate, sends one qubit of them to all nodes except the root $\ell$, and keeps one-qubit. The construction of a rooted spanning tree in 1-turn requires $\Omega(\log n)$ witness size \cite{korman2010proof}, but we can construct a rooted spanning tree in $O(1)$ witness size in 3-turn by the result of \cite{naor2020power}. Therefore the size of witnesses is unchanged without a constant factor and we obtain the following theorem.

\begin{theorem}\label{theorem_5_turn}
Let $\ell \geq 1$ be an integer, $\mathcal{L}\subseteq \mathcal{G}\times\mathcal{I}$ be a language that has a $\mathsf{dQIP}p[4\ell + 1](f(n))$ protocol with completeness $c$ and soundness $s$ for some $c^2 > s$ where the protocol uses $g(n)$ space register at each node $u$ in the interaction phase. Then $\mathcal{L}$ has a $\mathsf{dQIP}p[2\ell + 3](f(n)+g(n))$ protocol with completeness $\frac{1+c}{2}$ and soundness $\frac{1+\sqrt{s}}{2}$.
\end{theorem}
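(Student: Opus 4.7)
The plan is to implement the shared-coin protocol of Theorem~\ref{theorem:shared} in the private-randomness model by replacing its single classical shared coin with a quantum coin realised through a GHZ state prepared jointly by the network (anchored at a spanning-tree root) and the prover. Preparing the GHZ state costs two additional turns, which is exactly why the number of turns rises from $2\ell+1$ to $2\ell+3$.

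First, I would describe the protocol explicitly. In turn~1 (P$\to$V), the prover sends each node $u$ the snapshot registers $(\mathsf{V}_u,\mathsf{M}_u)$ exactly as in Figure~\ref{fig:protocol_shared}, together with the $O(1)$-size spanning-tree certificates of Naor, Parter and Yogev~\cite{naor2020power}, which are verifiable within the first three turns of our protocol. In turn~2 (V$\to$P), the root prepares a Bell pair on one ancilla of its own $\mathsf{V}$-register and one qubit of its $\mathsf{M}$-register, keeps the $\mathsf{V}$-half, and sends the $\mathsf{M}$-qubit to the prover. In turn~3 (P$\to$V), the honest prover fans this qubit out by CNOTs into a fresh coin qubit of each $\mathsf{M}_u$ for every non-root $u$, producing the state $\tfrac{1}{\sqrt{2}}(\ket{0^n}+\ket{1^n})$ on the network's coin qubits (entangled with the root's ancilla), and returns all message registers. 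In turn~4 (V$\to$P), each node measures its coin qubit in the computational basis to obtain a bit, performs step~3(a) of Figure~\ref{fig:protocol_shared} using this bit in place of $r$, and sends $\mathsf{M}_u$ back to the prover. The remaining $2\ell-1$ turns run steps~3(b) and~3(c) of Figure~\ref{fig:protocol_shared} verbatim.

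The resource bounds follow immediately: the spanning-tree certificates and the coin ancillae add only $O(1)$ qubits to each $\mathsf{V}_u$ and $\mathsf{M}_u$, so every register stays of size $O(f(n)+g(n))$. For completeness, an honest prover produces a genuine GHZ state, so all nodes agree on a uniform bit in turn~4 and the subsequent turns literally reproduce the shared-randomness protocol of Theorem~\ref{theorem:shared}; the completeness bound $(1+c)/2$ carries over.

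The main obstacle is soundness. A cheating prover can entangle its turn-1 snapshot with private ancillae, deviate arbitrarily from the honest CNOT fan-out in turn~3, and keep ancillae correlated with the coin qubits, so the coin bits observed by the different nodes need not even coincide. I would handle this by purifying the prover's strategy and invoking the deferred-measurement principle to postpone the turn-4 measurements until after the verification step; the whole protocol then becomes a fixed unitary circuit acting on a prover-chosen initial state, with a single terminal projective measurement for acceptance. The crucial structural fact is that the root's half of the Bell pair is maximally mixed from the prover's perspective before turn~4, so the marginal distribution of the root-anchored coin bit is exactly uniform regardless of what the prover does. After this reduction, the convex-combination soundness argument of Kempe, Kobayashi, Matsumoto and Vidick~\cite{kempe2009using} invoked in Theorem~\ref{theorem:shared} carries over essentially verbatim and gives the claimed soundness $(1+\sqrt{s})/2$.
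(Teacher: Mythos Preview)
Your construction is essentially the paper's: the paper also simulates the shared coin of Theorem~\ref{theorem:shared} by having the spanning-tree root create a Bell pair in turn~2 and having the prover fan it out via CNOTs in turn~3, piggybacking on the $O(1)$-bit three-turn spanning-tree certificates of~\cite{naor2020power}. So the protocol you describe matches the paper's.

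There is, however, a real gap in your soundness sketch. You correctly identify that a cheating prover can hand different coin values to different nodes, but the remedy you propose---deferred measurement plus the observation that the root's bit is marginally uniform---does not by itself resolve this. Uniformity of the root's bit only lets you write the acceptance probability as $\tfrac12(p_0+p_1)$ conditioned on the \emph{root's} outcome; it says nothing about what the other nodes are doing. If, say, the prover always delivers coin~$0$ to every non-root node, then in the branch where the root sees~$1$ you have the root running the backward check while every other node runs the forward check, and the Kempe--Kobayashi--Matsumoto--Vidick bound no longer applies. The fix is not analytic but protocol-level: each node must broadcast its coin bit to its neighbours in the verification phase and reject on any mismatch, so that (by connectivity) acceptance forces all coin bits to equal the root's bit. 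The paper's proof of Theorem~\ref{theorem_5_turn} is too terse to state this explicitly, but the same mechanism appears verbatim in the analogous $7\to5$ protocol of Figure~\ref{fig:protocol_7_to_5} (``all random bits $b_v$ where $v\in N(u)$ are the same as $b_u$''). Once you add this one-line check, your soundness argument does carry over from Theorem~\ref{theorem:shared} as claimed.
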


Note that applying recursively Theorem~\ref{theorem_5_turn} makes possible to reduce the number of turns down to $7$ (corresponding to $\ell=2$), but not lower: Let $\ell$ be the minimum integer that satisfies $k\leq 4\ell+1$. Starting from $k$-turn, using Theorem~\ref{theorem_5_turn}, it is reduced to $2\ell+3$. For $\ell' = \lfloor \frac{\ell}{2} \rfloor$, we have $4(\ell'+1)+1\geq 2\ell+3$. Using Theorem~\ref{theorem_5_turn} again, it is reduced down to $2(\ell'+1)+3$, which is at most $2\ell+1$ if $\ell\geq 3$. However if $\ell\leq 2$, we cannot reduce from $4\ell+1$ to $2\ell+1$ using Theorem~\ref{theorem_5_turn}, that is, the recursion stops at 7-turn. %We discuss how to overcome this barrier in Section~\ref{subsection:7_to_5}.
%\subsection{Reduction from 7-turn to 5-turn}\label{subsection:7_to_5}
In Appendix~\ref{appendix:7_to_5}, we show the following theorem, which enables us to parallelize 7-turn protocols. %The proof is found in Appendix~\ref{appendix:7_to_5}.

\begin{theorem}\label{thm:7_to_5}
Let $\mathcal{L}\subseteq \mathcal{G}\times\mathcal{I}$ be a language that has a $\mathsf{dQIP}p[7](f(n))$ protocol with completeness $c$ and soundness $s$ where the protocol uses $g(n)$ space register at each node $u$ in the interaction phase. Then $\mathcal{L}$ has a $\mathsf{dQIP}p[5](f(n)+g(n))$ protocol with completeness $\frac{1+c}{2}$ and soundness $\frac{1+\sqrt{s}}{2}$.
\end{theorem}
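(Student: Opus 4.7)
The plan is to adapt the Marriott-Watrous construction \cite{marriott2005quantum} (which established $\mathsf{QIP}[3]\subseteq\mathsf{QMAM}$) to the distributed model, rather than attempting a direct application of the halving technique from Theorem~\ref{theorem_5_turn}, which starting from 7 turns would still yield 7 turns in $\mathsf{dQIP}p$ because of the two-turn overhead for broadcasting a shared coin. The structural feature of $\mathsf{QMAM}$ I want to exploit is that the verifier's middle message is a \emph{classical} random bit, not a quantum register: a single classical bit can be routed through the prover and folded into an existing prover turn instead of requiring two dedicated turns for its generation and distribution.

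Concretely, given a 7-turn $\mathsf{dQIP}p$ protocol with unitaries $V_8 P_7 V_6 P_5 V_4 P_3 V_2 P_1$, I would have the prover's first message in the new 5-turn protocol consist of (i) a snapshot of the registers $\mathsf{V}$ and $\mathsf{M}$ taken at an intermediate position of the original protocol (chosen so that the forward and backward branches each fit in the remaining prover-verifier alternation) together with (ii) a rooted spanning tree certificate which, by \cite{naor2020power}, can be supplied and checked with $O(1)$-bit messages inside the remaining four turns. The four subsequent turns then implement a Marriott-Watrous-style rewinding test: a coin bit generated at the root is sent to the prover, who in the next prover-turn forwards it back along the spanning tree so that every node learns it; the bit selects either a forward simulation of the remaining original unitaries applied to the snapshot, or a backward check applying $P^{\dagger}, V^{\dagger}, \ldots$ to the snapshot and testing that the input register is $\ket{0\cdots 0}$.

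The analysis would follow three steps: (1) completeness holds because the honest prover supplies the true snapshot and faithfully routes the coin, so both branches accept with probability $c$, giving completeness $(1+c)/2$; (2) soundness follows from the standard Marriott-Watrous quantum-rewinding argument, which bounds the joint probability that both branches of the test accept, converting a cheating strategy in the 5-turn protocol with acceptance $p$ into a cheating strategy in the 7-turn protocol with acceptance at least $(2p-1)^2$, yielding soundness $(1+\sqrt{s})/2$; (3) the message size $f(n)+g(n)$ comes from adding the private-register snapshot of size $g(n)$ to the original messages of size $f(n)$.

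The hard part will be the implementation of the classical coin in the distributed setting without paying the two-turn overhead of Theorem~\ref{theorem_5_turn}. The root node must generate the bit and the prover must broadcast it back to every node, and all of this must fit within the five-turn budget; moreover the protocol must be designed so that a dishonest prover that delivers inconsistent values of the coin bit to different nodes cannot inflate its acceptance probability above what the standard Marriott-Watrous soundness bound allows. Once this distributed coin routing is correctly folded into a prover-turn already used for state exchange, the usual rewinding argument transfers to the distributed setting and yields the claimed parameters.
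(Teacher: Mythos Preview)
Your proposal is essentially the same approach as the paper's: a Marriott--Watrous style forward/backward test in which the snapshot is delivered in the first prover turn, the leader's coin is sent up in turn~2, and the prover folds the coin broadcast into turn~3 together with the message register (whose content can depend on the coin, absorbing one prover unitary of the original protocol); nodes then check coin-consistency with their neighbors in the verification phase. One small slip: in your completeness sketch you write ``both branches accept with probability $c$,'' but the backward branch accepts with probability~$1$ for an honest prover (it is an exact uncomputation), and it is precisely the average $\tfrac{1}{2}(1+c)$ of the two branches that gives the stated completeness.
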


%%%%%%%%%%%%%%%%%%%%%%%%%%%%%%%%%%%%
\section{Quantum Simulation of Distributed Arthur-Merlin Interactive Protocols}\label{section:simulation}
%%%%%%%%%%%%%%%%%%%%%%%%%%%%%%%%%%%%
\begin{figure}[htbp]
\begin{mdframed}
\begin{enumerate}
    \item \textbf{The prover's turn:} The prover applies arbitrary unitary $U_j$ to the register $(\mathsf{P},\mathsf{M})$. Then $\mathsf{M}_u$ is sent to the node $u$.
    \item \textbf{The verifier's turn:} Each node $u$ stores the state in $\mathsf{M}_u$ to its private register $\mathsf{V}_u$ by the SWAP gate, and creates  $\frac{1}{\sqrt{2^m}}\sum_{r\in\{0,1\}^m}\ket{r}_{\mathsf{M}_u}\ket{r}_{\mathsf{V}_u}$ in ${\mathsf{M}_u}$ and a fresh part of ${\mathsf{V}_u}$. Then  $\mathsf{M}_u$ is sent to the prover.
    \item \textbf{The verification phase:} Each node measures its private register in the computational basis, then broadcasts the outcome to its neighbors, and decides the output of the protocol (accept or reject).
\end{enumerate}
\end{mdframed}
\caption{$\mathsf{dQIP}pp$ protocol simulating a $\mathsf{dAM}$ protocol.}
\label{fig:protocol_simulation}
\end{figure}

In this section we see how to convert $\mathsf{dAM}$ protocols to $\mathsf{dQIP}$ protocols, and parallelize the converted $\mathsf{dQIP}$ protocol to 5-turn. Assume the size of each witness and random bits is $m$. Let $c_j$ be a function that represents the witnesses provided by Merlin at the $j$-th turn. That is, if random bits generated by Arthur in the $i$-th turn is $r_i = r_i(u_1)r_i(u_2)\cdots r_i(u_n)\in \{0,1\}^{mn}$, $c_j(r_2,r_4,\ldots,r_{j-1})=c_j(u_1)c_j(u_2)\cdots c_j(u_n)\in \{0,1\}^{mn}$ represents the witness where $c_j(u)$ is provided to $u$. In order to simulate $k$-turn $\mathsf{dAM}$ protocols, each computation by the prover has to be converted to a form of reversible computation. Thus $c_j$ must be realized by a unitary transform
\begin{align*}
    U_{c_j}:\ket{r_2,...,r_{j-1},b}\rightarrow \ket{r_2,...,r_{j-1},b \oplus c_j}.
\end{align*}

The protocol proceeds as Figure \ref{fig:protocol_simulation}.
Let $c$ and $s$ be the completeness and the soundness of the original $\mathsf{dAM}[k]$ protocol, respectively. We show the following theorem.
\begin{theorem}\label{theorem:simulation}
The protocol in Figure~\ref{fig:protocol_simulation} has completeness $c$ and soundness $s$.
\end{theorem}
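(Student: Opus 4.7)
The plan is to prove completeness and soundness separately, leveraging the observation that each subregister of the verifier's private register $\mathsf{V}_u$ is written to during the interaction (either by a SWAP absorbing a witness, or by preparing the entangled pair $\frac{1}{\sqrt{2^m}}\sum_{r}|r\rangle_{\mathsf{M}_u}|r\rangle_{\mathsf{V}_u}$), and then never touched again until the final computational-basis measurement. This makes the protocol amenable to the principle of deferred (equivalently, implicit) measurement.

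For completeness I would proceed by induction on the turn index, assuming the honest prover applies the unitary $U_{c_j}$ at each prover turn $j$. The induction shows that after the $j$-th turn the global state is, up to normalization, a uniform superposition over past verifier random strings $r_i$, with each branch containing (i) the classical witnesses $c_j$ that the honest classical prover would have produced, stored in $\mathsf{V}$ by the SWAPs, and (ii) coherent copies of the $r_i$ stored in fresh parts of $\mathsf{V}$. At the end of the interaction, the verifier's computational-basis measurement on $\mathsf{V}$ then yields exactly a sample from the classical $\mathsf{dAM}[k]$ protocol's transcript distribution, and the verification phase is literally the classical one. Hence the quantum acceptance probability equals the classical one and is at least $c$.

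For soundness, fix an arbitrary (possibly adversarial) prover. Since each subregister of $\mathsf{V}_u$ is untouched after its creation, I can \emph{hasten} its final measurement back to the moment of creation without changing the final accept/reject statistics. Hastening the measurement of the $\mathsf{V}_u$-half of the entangled pair collapses $\mathsf{M}_u$ to a uniformly random classical string $r$ before it is sent to the prover, while hastening the measurement of the swapped-in witness register collapses the prover's message to a classical string $c_j$ before the next prover turn. After all hastenings, the protocol executes, transcript by transcript, as the original classical $\mathsf{dAM}[k]$ protocol played against some induced (possibly randomized) classical prover, so the classical soundness bound $s$ applies directly.

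The main obstacle will be justifying the hastenings rigorously in the presence of a quantum prover that may keep coherence across turns via $\mathsf{P}$ and entangle $\mathsf{P}$ with the outgoing messages $\mathsf{M}$. This is handled by the standard fact that a projective measurement commutes past any unitary that does not act on the measured register: because the measured subregisters of $\mathsf{V}$ never appear in any subsequent unitary of either party, one can slide each verifier measurement all the way back to the step in which the corresponding subregister is first populated without altering any outcome probabilities. Once this commuting step is in place, the relevant registers behave classically and the reduction to the classical soundness bound is immediate.
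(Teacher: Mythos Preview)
Your proposal is correct and follows essentially the same approach as the paper. The completeness argument (inductively tracking the global state under the honest prover's unitaries $U_{c_j}$ to see that the final measurement samples the classical transcript distribution) and the soundness argument (the $r_i$ outcomes are uniform regardless of the prover's actions, the swapped-in witnesses collapse to classical strings, and the classical verification then inherits the bound $s$) both match the paper's proof; your explicit use of the deferred-measurement principle is simply a more rigorous justification of the step the paper states tersely as ``each node $u$ obtains $\ket{r_2(u),\ldots,r_{k-1}(u)}$ uniformly at random by its measurement regardless of the prover's action.''
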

Using this theorem and the results in Section~\ref{section:general}, we can show Theorems~\ref{theorem:main_theorem} and~\ref{theorem:main_theorem_shared}. The proof can be found in Appendix~\ref{appendix:main_theorem}.

\subsection{Applications of Theorem \ref{theorem:main_theorem} and \ref{theorem:main_theorem_shared}}
We can apply Theorem~\ref{theorem:main_theorem} and Theorem~\ref{theorem:main_theorem_shared} to the following $\mathsf{dAM}$ protocols by \cite{naor2020power} (see Appendix~\ref{appendix:problems} for the definition of these problems), obtaining Corollary~\ref{corollary:applications}:

\begin{theorem}[\cite{naor2020power}]\label{th:cl1}
There exist 
\begin{itemize}
    \item a $\mathsf{dAM}[4](\log n)$ protocol for $\textsc{Asym}$,
    \item a $\mathsf{dAM}[O(1)](\log n)$ protocol for $\textsc{GNI}$,
    \item a $\mathsf{dAM}[5](\log \log n)$ protocol for $\textsc{SetEquality}$,
    \item a $\mathsf{dAM}[5](\log \log n)$ protocol for $\textsc{DSym}$.
\end{itemize}
\end{theorem}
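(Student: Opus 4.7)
The proof is a problem-by-problem invocation of a single centralized-to-distributed compiler from \cite{naor2020power}, combined with classical $\mathsf{AM}$-style tricks (fingerprinting, hashing, and canonical labelings). The central tool, already advertised in the introduction, is the Naor--Parter--Yogev compiler: any centralized deterministic computation running in $\tilde O(n)$ time can be simulated by a $\mathsf{dMAM}(\log n) = \mathsf{dAM}[3](\log n)$ protocol. Morally the construction hands each node an $O(\log n)$-bit slice of an encoded execution transcript (implemented through a Merkle-hash chain), lets Arthur pick a random slice to audit, and verifies local transition consistency along an auxiliary spanning tree, which is itself $\mathsf{dAM}[3](O(1))$-constructible (this is what the footnote after Theorem~\ref{theorem:main_theorem_shared} is using). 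With this backbone, the four results follow a common template: reduce the language to a centralized $\tilde O(n)$-time task, feed it through the compiler, and then, when a $\log\log n$ bound is demanded, compress the certificates by one further fingerprinting layer.

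For $\textsc{Asym}$ in $\mathsf{dAM}[4](\log n)$, Merlin first supplies a canonical labeling of $G$ and declares ``no nontrivial automorphism''; the extra turn (beyond the compiler's three) is used for Arthur to challenge Merlin with the random position of a hypothetical automorphism, whose non-existence then reduces to an $\tilde O(n)$-time check plugged into the compiler. For $\textsc{GNI}$ in $\mathsf{dAM}[O(1)](\log n)$, I would simulate the classical Goldreich--Micali--Wigderson centralized $\mathsf{AM}$ protocol in distributed form: Arthur picks a bit $b$ and a random permutation $\pi$, transmitted implicitly as the short seed of a pairwise-independent family; Merlin replies with $b$; the compiler is invoked to check in distributed form that $\pi(G_b)$ matches Merlin's commitment. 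Finally, for $\textsc{SetEquality}$ and $\textsc{DSym}$ in $\mathsf{dAM}[5](\log\log n)$, I would apply the two-level hashing trick of \cite{naor2020power}: the prover first commits to a hash family of description length $O(\log n)$, Arthur then picks an index of size only $O(\log\log n)$, and the remaining turns open the chosen fingerprint, whose verification again fits into the compiler. This double-exponential compression is precisely what drops the certificate from $\log n$ to $\log\log n$, at the cost of two extra turns.

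The hard part will not be the centralized protocol design but the \emph{distributed consistency check} at the end. After the compiler leaves each node holding a short claimed slice of the transcript together with Merkle/fingerprint tags, the nodes must verify agreement with their neighbors using only $O(\log n)$ (resp.\ $O(\log\log n)$) bits per edge. The standard device is to propagate a single authentication tag along the spanning tree so that every node compares the tag passed down by its parent with the tag it recomputes from its children's tags and its own slice. Keeping the tag field small enough to obey the $\log\log n$ certificate bound, while still beating the collision probability needed to preserve the $\tfrac{2}{3}$ versus $\tfrac{1}{3}$ completeness-soundness gap, forces either a mild parallel repetition baked into the compiler round or a carefully chosen hash family (e.g.\ polynomial hashing over a polylogarithmic field with boosted soundness), and that is where I expect most of the technical accounting to live.
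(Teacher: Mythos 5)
This statement is imported verbatim from \cite{naor2020power}; the paper gives no proof of it (Appendix~\ref{appendix:problems} merely restates which classes these languages were shown to lie in), so there is no in-paper argument to compare yours against. Your general framing --- reduce each language to a near-linear-time centralized task, push it through the Naor--Parter--Yogev compiler, and use an extra hashing layer to shrink certificates to $O(\log\log n)$ --- is the right high-level picture of what that reference does. But two of your per-problem sketches have genuine gaps.

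For $\textsc{Asym}$, you cannot certify the \emph{non-existence} of a nontrivial automorphism by having Arthur ``challenge Merlin with the random position of a hypothetical automorphism'': a single random probe gives no soundness against a graph that does have an automorphism, and deciding asymmetry is not known to be doable in $\tilde O(n)$ centralized time, so the compiler cannot be invoked on the decision problem itself. The actual route is a Goldwasser--Sipser set-size lower bound: $|\{\pi(G):\pi\in S_n\}|=n!/|\mathrm{Aut}(G)|$ equals $n!$ iff $G$ is asymmetric, so Merlin proves a lower bound on this orbit via public hashing, and only the hash evaluation and consistency checks are fed to the compiler. For $\textsc{GNI}$, the protocol you describe (Arthur secretly picks $b$ and $\pi$, Merlin must return $b$) is the \emph{private-coin} interactive proof; in the $\mathsf{dAM}$ model of Section~\ref{sub:dAM} every random string Arthur generates is sent to Merlin, so Merlin sees $b$ and wins trivially. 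You again need the public-coin set-lower-bound version (on $\{(\pi(G_b),b)\}$), distributed via the compiler. Your $\textsc{SetEquality}$/$\textsc{DSym}$ paragraph is closer in spirit, though the soundness accounting for $O(\log\log n)$-bit tags along the spanning tree is exactly where the cited work does the real work, and you have only gestured at it.
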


\begin{theorem}[\cite{naor2020power}]\label{th:cl2}
There exists a constant $\delta$ such that if
a language $\mathcal{L}$ can be decided in $\mathrm{poly}(n)$ time and $n^{\delta}$ space, then $\mathcal{L}\in \mathsf{dAM}[O(1)](\log n)$.
\end{theorem}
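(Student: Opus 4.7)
The plan is to compile a centralized doubly-efficient interactive proof (IP) for small-space polynomial-time computations into a constant-round distributed Arthur-Merlin protocol. First I would invoke a centralized doubly-efficient IP for languages decidable by a Turing machine running in $T=\mathrm{poly}(n)$ time and $S=n^{\delta}$ space. Results of Reingold--Rothblum--Rothblum type (combined with the standard tableau-encoding via a low-degree extension over a prime field of size $\mathrm{poly}(n)$) give an IP with $O(1)$ rounds in which every message is $\widetilde{O}(S)$ bits, and in which the verifier's only input-dependent work is evaluating the low-degree extension of the input (and of a few tableau slices) at a small number of random field elements. By choosing $\delta$ small enough that $\widetilde{O}(n^{\delta})=O(\log n)$, every centralized message becomes $O(\log n)$ bits.

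Second, I would simulate the centralized IP with a distributed Arthur--Merlin protocol. In his very first turn Merlin broadcasts, together with his IP message, the labels of a rooted spanning tree, using the $3$-turn $O(\log n)$-bit certified spanning-tree construction of [naor2020power] to ensure it is genuine; call the root $\ell$. In each subsequent Merlin turn, the prover sends the corresponding centralized-prover message to every node (identical copy, $O(\log n)$ bits per node, checked for consistency by having each node compare the copies it receives from Merlin in two consecutive turns via a standard hash fingerprint). In each Arthur turn the root $\ell$ samples the IP challenge $r_i$ and sends it to Merlin; all other nodes also send random bits, but those are simply reflected back by Merlin in the next turn and compared locally so that Merlin is forced to act consistently with $r_i$. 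Thus $O(1)$ Arthur--Merlin turns faithfully emulate the centralized transcript with $O(\log n)$ bits per message.

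Third, the distributed verification phase. The centralized verifier accepts iff a specific low-degree polynomial $\widetilde{I}$ (the multilinear or Reed--Muller extension of the input) evaluates at a prover-supplied point $\vec{\alpha}$ to a claimed value $v$. Because $\widetilde{I}(\vec{\alpha})=\sum_{u\in V}\chi_u(\vec{\alpha})\cdot I(u)$ is linear in the input, each node $u$ can compute its own summand $\chi_u(\vec{\alpha})\cdot I(u)$ locally. During the verification phase these $O(\log n)$-bit summands are aggregated up the spanning tree in one pipelined sweep (each node sends to its parent the sum of its own contribution and those of its children), and $\ell$ compares the total against $v$; a second check on the tree validates that all nodes received the same Merlin messages. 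Every message in the sweep is $O(\log n)$ bits.

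The main obstacle is soundness after distribution. A cheating prover could try to send inconsistent IP messages to different nodes, or to manipulate the aggregated sum. Consistency of Merlin's broadcasts is handled by a few rounds of pairwise-random fingerprint checks (each node forwards a fresh $O(\log n)$-bit hash of the Merlin transcript, and any discrepancy along an edge causes rejection), while manipulation of the final aggregated sum reduces to the Schwartz--Zippel soundness of the centralized IP, since the aggregated value is a deterministic function of the honest input and of the challenge $\vec{\alpha}$ chosen by $\ell$. Together these two checks preserve the constant-completeness/soundness gap of the centralized protocol, yielding a $\mathsf{dAM}[O(1)](\log n)$ protocol as claimed.
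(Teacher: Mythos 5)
First, note that the paper does not prove this statement at all: Theorem~\ref{th:cl2} is imported verbatim from \cite{naor2020power}, so the relevant comparison is with the proof given there. Your high-level plan --- start from a constant-round doubly-efficient interactive proof \`a la Reingold--Rothblum--Rothblum for $\mathrm{poly}(n)$-time, $n^{\delta}$-space languages, then simulate it distributively with a certified spanning tree and tree-aggregation of the low-degree extension of the input --- is indeed the route taken in \cite{naor2020power}. However, your execution contains a quantitative error that breaks the argument. You claim that ``by choosing $\delta$ small enough that $\widetilde{O}(n^{\delta})=O(\log n)$, every centralized message becomes $O(\log n)$ bits.'' This is false: for \emph{every} constant $\delta>0$, $n^{\delta}$ is $\omega(\mathrm{polylog}(n))$, so no choice of the constant $\delta$ makes the centralized messages fit in $O(\log n)$ bits. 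Consequently your step in which Merlin broadcasts ``the corresponding centralized-prover message to every node (identical copy, $O(\log n)$ bits per node)'' cannot be carried out, and the same problem infects the verifier's local processing of those messages, which requires $\mathrm{poly}(n^{\delta})\cdot n^{o(1)}$ work on data no single node can hold.

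The missing idea is precisely the one that makes \cite{naor2020power} nontrivial: the $n^{\Theta(\delta)}$-bit prover messages and the near-linear-time verifier computation of the centralized IP must be \emph{sharded across the $n$ nodes} (so that each node holds only $O(\log n)$ bits per round, for $O(n\log n)$ total bandwidth), and the verifier's computation on this distributed data is then delegated back to the prover via the RAM-to-$\mathsf{dMAM}(\log n)$ compiler --- i.e., any $\widetilde{O}(n)$-time centralized computation, including each round of the RRR verifier, is itself verified by a $3$-turn distributed protocol with $O(\log n)$-bit certificates. Your tree-aggregation of $\sum_{u}\chi_u(\vec{\alpha})I(u)$ is a correct distributed primitive for the input-dependent part of the verifier's work, but it does not cover the verifier's processing of the (polynomially long) transcript, which is where the compiler is indispensable. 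Without it, the protocol you describe has per-node message size $n^{\Theta(\delta)}$, not $O(\log n)$.
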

%Applying Theorem \ref{theorem:simulation} we obtain the following corollary.
\setcounter{theorem}{2}
\begin{corollary}\label{corollary:applications}
\begin{enumerate}
\item There exist 
    \begin{itemize}
        \item a $\mathsf{dQIP}^{sh}[3](\log n)$ protocol for $\textsc{Asym}$,
        \item a $\mathsf{dQIP}^{sh}[3](\log n)$ protocol for $\textsc{GNI}$,% where $\textsc{GNI}$ is the harder version defined in Section 2,
        \item a $\mathsf{dQIP}^{sh}[3](\log \log n)$ protocol for $\textsc{SetEquality}$,
        \item a $\mathsf{dQIP}^{sh}[3](\log \log n)$ protocol for $\textsc{DSym}$.
        \item a $\mathsf{dQIP}[5](\log n)$ protocol for $\textsc{GNI}$.
    \end{itemize} 
\item There exists a constant $\delta$ such that if
a language $\mathcal{L}$ can be decided in $\mathrm{poly}(n)$ time and $n^{\delta}$ space, then $\mathcal{L}\in \mathsf{dQIP}[5](\log n)$ and $\mathcal{L}\in \mathsf{dQIP}^{sh}[3](\log n)$.
\end{enumerate}
\end{corollary}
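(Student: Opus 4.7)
The plan is to derive both parts of Corollary~\ref{corollary:applications} as a direct consequence of the main simulation theorems (Theorem~\ref{theorem:main_theorem} and Theorem~\ref{theorem:main_theorem_shared}) applied to the pre-existing classical Arthur-Merlin protocols recorded in Theorems~\ref{th:cl1} and~\ref{th:cl2}. There is no new protocol to design: everything follows by plugging the classical protocols into the generic quantum turn-reduction machinery already developed in Sections~\ref{section:general} and~\ref{section:simulation}.

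For Part~1, I would handle each problem separately. By Theorem~\ref{th:cl1} we have $\textsc{Asym}\in\mathsf{dAM}[4](\log n)$, $\textsc{GNI}\in\mathsf{dAM}[k_{\mathrm{GNI}}](\log n)$ for some constant $k_{\mathrm{GNI}}$, and $\textsc{SetEquality},\textsc{DSym}\in\mathsf{dAM}[5](\log\log n)$. Since each of these values of $k$ is at least $3$, Theorem~\ref{theorem:main_theorem_shared} immediately yields the four shared-randomness $\mathsf{dQIP}^{sh}[3]$ statements with the claimed certificate sizes. For the $\mathsf{dQIP}[5]$ protocol for $\textsc{GNI}$, Theorem~\ref{theorem:main_theorem} requires $k\geq 5$; if $k_{\mathrm{GNI}}<5$ in the classical protocol, I would first pad it to a $5$-turn $\mathsf{dAM}$ protocol (adding dummy Merlin/Arthur turns in which the message is an ignored fixed string preserves completeness and soundness and does not change the order of the certificate size), and then apply Theorem~\ref{theorem:main_theorem} to obtain $\textsc{GNI}\in\mathsf{dQIP}[5](\log n)$.

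For Part~2, let $\delta$ be the constant given by Theorem~\ref{th:cl2}. For any language $\mathcal{L}$ decidable in $\mathrm{poly}(n)$ time and $n^{\delta}$ space, that theorem supplies a $\mathsf{dAM}[k_0](\log n)$ protocol for some constant $k_0$. After padding $k_0$ up to $\max(k_0,5)$ turns as above, Theorem~\ref{theorem:main_theorem} gives $\mathcal{L}\in\mathsf{dQIP}[5](\log n)$, and Theorem~\ref{theorem:main_theorem_shared} (which only needs $k\geq 3$) gives $\mathcal{L}\in\mathsf{dQIP}^{sh}[3](\log n)$.

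The only subtle point is the turn-count bookkeeping: Theorem~\ref{theorem:main_theorem} is stated for $k\geq 5$ and Theorem~\ref{theorem:main_theorem_shared} for $k\geq 3$, so I just need the trivial observation $\mathsf{dAM}[k](f(n))\subseteq\mathsf{dAM}[k+1](f(n))$ to lift any protocol with too few turns to the required range while keeping the certificate size in $O(f(n))$. Once this padding is in place, the corollary is immediate and no further calculation is needed; the main obstacle was already taken care of by the general simulation and parallelization theorems proved earlier in the paper.
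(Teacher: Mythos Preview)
Your proposal is correct and matches the paper's own approach: the corollary is obtained simply by applying Theorems~\ref{theorem:main_theorem} and~\ref{theorem:main_theorem_shared} to the classical $\mathsf{dAM}$ protocols from Theorems~\ref{th:cl1} and~\ref{th:cl2}, with the trivial turn-padding observation you mention handling the hypothesis $k\ge 5$ (resp.\ $k\ge 3$). The paper gives no further detail beyond this, so your write-up is if anything more explicit than the original.
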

\setcounter{theorem}{15}

%======================================
\section{Testing Closeness of Two Quantum States} \label{section:dSWAP}
%======================================
\paragraph*{Verification of the GHZ state.}
In this section we briefly explain how to show Thereom~\ref{thm:closeness}. Tehchnically, our protocol can be viewed as the distributed implmentation of the SWAP test~\cite{buhrman2001quantum}. To do this, we need to implement the controlled SWAP gate, but it is not possible by local operations at each node if the inputs are distributed since there is no prior entanglement in our setting. To resolve this issue, we create the quantum state that is called the GHZ state using the prover. Let $\ket{GHZ}$ be the $n$-qubit GHZ state
\begin{align*}
    \ket{GHZ}=\frac{1}{\sqrt{2}}(\ket{0^n}+\ket{1^n}).
\end{align*}
The detail of our approach is omitted from the main body of the paper due to space constraint. It can be found in Appendix~\ref{appendix:GHZ}.
Ultimately, we present a $\mathsf{dQIP}p$ protocol $\mathcal{P}_{GHZ}$, and show the following theorem. %See Appendix~\ref{appendix:GHZ} for the detail.
\begin{theorem}\label{thm:protocol_2}
Let $\mathcal{P}_{GHZ}$ be the protocol shown in Figure~\ref{dQIP_GHZ}.
Then $\mathcal{P}_{GHZ}$ has the following properties:
\begin{itemize}
    \item (completeness): If the prover is honest, the protocol is accepted with probability 1.
    \item (soundness): If the protocol is accepted with probability $\delta$, then the reduced state $\rho$ of the output register $\mathsf{R}_{b_{target}}(u)$ satisfies
    \begin{align*}
        \bra{GHZ}\rho \ket{GHZ} \geq 1- \varepsilon.
    \end{align*}
\end{itemize}
\end{theorem}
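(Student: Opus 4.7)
The plan is to reduce to the Zhu--Hayashi certification scheme for the GHZ state~\cite{zhu2019efficient} and to argue that making it distributed costs nothing in soundness. In that scheme, the prover supplies $N$ claimed copies of $\ket{GHZ}$; one copy is singled out as the target that will be output, and on each remaining copy a uniformly random Pauli stabilizer of $\ket{GHZ}$ is measured. The possible stabilizers --- $Z_iZ_j$ for pairs $i<j$, together with $X^{\otimes n}$ --- are tensor products of single-qubit Paulis, so the measurement itself is local at every node; only the final parity check needs classical communication, which fits the verification phase of Definition~\ref{def:dQIP}.

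First I would specify $\mathcal{P}_{GHZ}$ concretely: the prover's first turn places, for every node $u$ and every copy index $b\in\{1,\ldots,N\}$, one qubit into a register $\mathsf{R}_b(u)$. The next turns jointly sample the target index $b_{target}$ and the stabilizer label $t$ assigned to every non-target copy, using the Bell-pair-plus-prover-assisted spanning-tree trick of Section~\ref{subsection:private} so that these choices are uniform and stay hidden from the prover until it has committed to its state. Each node then applies the Pauli change of basis prescribed by $t$ on its qubit of every non-target copy, measures it, broadcasts the outcome along a spanning tree built with $O(1)$-size hints via~\cite{naor2020power}, and accepts iff every non-target copy produces the correct parity. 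Completeness is then immediate: on the honest input $\ket{GHZ}^{\otimes N}$ every stabilizer test returns $+1$ deterministically and copy $b_{target}$ is untouched, so the output register literally holds $\ket{GHZ}$.

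For soundness, let $\sigma$ be the (possibly entangled) state the prover actually delivers on the $N$ copies and let $\rho$ be the reduced state of copy $b_{target}$ conditioned on acceptance. I would invoke the Zhu--Hayashi spectral bound, which says that the averaged single-copy acceptance operator over the stabilizer tests has $\ket{GHZ}\!\bra{GHZ}$ as its unique unit eigenvector with a constant gap $g>0$; iterating over the $N-1$ tested copies yields, for any $\sigma$ and any acceptance probability $\delta$,
\begin{align*}
\delta\cdot\bigl(1-\bra{GHZ}\rho\ket{GHZ}\bigr) \leq (1-g)^{N-1}.
\end{align*}
Choosing $N=\Theta(\log(1/(\varepsilon\delta)))$ then forces $\bra{GHZ}\rho\ket{GHZ}\geq 1-\varepsilon$ as soon as acceptance occurs with probability at least $\delta$, matching the theorem. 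Because the stabilizer set is constant and each node holds only $O(N)$ qubits at once, the message and register sizes stay $O(1)$.

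The hard part will be soundness against a prover who entangles all $N$ copies with its own workspace and tries to bias the random test choice using the extra rounds. Two ingredients are needed: the randomness fixing $b_{target}$ and $t$ must be fresh relative to the prover's first message, which I would ensure by having the nodes create it from Bell pairs that are returned to the prover only after being measured locally --- exactly the coin-flip simulation of Section~\ref{subsection:private} --- and the distributed parity check must be all-or-nothing, so that a single flipped outcome causes at least one node to reject. With both in place, the prover sees an i.i.d.\ uniform test on the $N-1$ non-target copies regardless of the structure of $\sigma$, and the Zhu--Hayashi spectral bound transfers verbatim to the distributed setting.
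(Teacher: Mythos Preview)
Your high-level plan---reduce to Zhu--Hayashi certification and argue that nothing is lost when the tests are split across nodes---is exactly what the paper does, and your completeness argument is fine. But the specific test set you chose hides a real problem, and the paper's proof takes a different route precisely to avoid it.

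You propose sampling uniformly from the set $\{Z_iZ_j:i<j\}\cup\{X^{\otimes n}\}$. First, this set has $\binom{n}{2}+1$ elements, so your claim that ``the stabilizer set is constant'' is false, and even labelling a test already costs $\Theta(\log n)$ bits. More importantly, the spectral gap $g$ of the resulting verification operator is \emph{not} constant: the state $\ket{GHZ^-}=\tfrac{1}{\sqrt{2}}(\ket{0^n}-\ket{1^n})$ is orthogonal to $\ket{GHZ}$, passes every $Z_iZ_j$ test with probability~$1$, and fails only the single $X^{\otimes n}$ test. Under uniform sampling its acceptance probability is $1-\frac{1}{\binom{n}{2}+1}$, so $g=O(1/n^2)$. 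Plugging this into your own bound $\delta\,(1-\bra{GHZ}\rho\ket{GHZ})\le(1-g)^{N-1}$ forces $N=\Theta(n^2\log\tfrac{1}{\varepsilon\delta})$, not $O(1)$, and the whole $O(1)$-message claim (which Theorem~\ref{thm:closeness} relies on) collapses. Switching to the full stabilizer group restores $g=1/2$ but then each test label is $n$ bits, so you are stuck either way.

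The paper escapes this dilemma by first observing (Lemma~\ref{lem1}) that $(I\otimes H^{\otimes n-1})\ket{GHZ}=\ket{S_n}$, the graph state of the star $S_n$. Since $S_n$ is $2$-colorable, the Zhu--Hayashi \emph{coloring} protocol applies with only two projectors $P_0,P_1$, one per colour class; the paper works these out explicitly and shows each amounts to ``measure every qubit in $X$ or in $Z$ and check a parity''. One random bit per copy then suffices, the gap is $1/2$, and $N=\Theta(\tfrac{1}{\varepsilon}\log\tfrac{1}{\delta})$ is genuinely constant. The randomness is handled more simply than you suggest: the leader draws $b_{\mathrm{test}},b_{\mathrm{target}}$ \emph{after} the prover has committed in Turn~1 and hands them back to the prover to broadcast, with neighbour-consistency checks catching any tampering; the parity aggregation is likewise delegated to the prover (partial sums $s_u$ along the spanning tree) and verified locally. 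The Bell-pair coin-flip machinery from Section~\ref{subsection:private} is not needed here.
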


%=====================
\paragraph*{The $\mathsf{dQIP}p$ protocol for $\mathsf{DQCT}_N$.}
%===================== 
In Appendix~\ref{appendix:dqct}, using the protocol $\mathcal{P}_{GHZ}$, we present a protocol $\mathcal{P}_{\mathsf{DQCT}}$, which satisfies the desired conditions appeard in Theorem~\ref{thm:closeness}.

\bibliography{mybibliography}

\begin{appendix}

\section{Problems}\label{appendix:problems}
In this appendix we formally define the problems Set Equality, Graph Asymmetry, Dumbbell Symmetry and Graph Non-Isomorphism.
\begin{definition}[Set Equality~\cite{naor2020power}]
Let $G$ be a graph and $I$ be an input such that the label $I(u)$ for each node $u$ contains two lists of $\ell$ elements  $\mathcal{A}_u = \{a_{u,1},\ldots,a_{u,\ell}\}$ and $\mathcal{B}_u=\{b_{u,1},\ldots,b_{u,\ell}\}$ where $\ell \leq n$ is an integer and each element in $\mathcal{A}_u$ and $\mathcal{B}_u$ can be represented in $O(\log n)$-bit.
The language $\textsc{SetEquality}$ is the set of graphs and labels such that $\{\mathcal{A}_u\}_{u\in V} = \{\mathcal{B}_u\}_{u\in V}$ as multisets.
\end{definition}

\begin{definition}[Graph Asymmetry~\cite{naor2020power}]
The language $\textsc{Asym}$ is the set of all connected graphs that do not have a nontrivial automorphism.
\end{definition}

\begin{definition}[Dumbbell Symmetry~\cite{kol2018interactive}]
Let $m,k$ be positive integers and let $n=2m+2k+1$. An $n$-vertex connected graph $G=(\{0,1,\ldots ,n-1\},E)$ is a \textit{dumbbell graph} if it satisfies following conditions:
\begin{itemize}
    \item Let $G_0$ be the vertex-induced subgraph of $G$ on vertices $\{0,\ldots,m-1\}$ and $G_1$ be the vertex-induced subgraph of $G$ on vertices $\{m,\ldots,2m-1\}$.
    \item $G_0$ and $G_1$ are connected to each other by the following path of length $2k+2$ 
    \begin{center}
        $0-(2m)-(2m+1)-\cdots - (2m+2k)-(m)$.
    \end{center}
    \item $E$ consists of all edges in $G_0$ and $G_1$, and the path-edges.
\end{itemize}
The automorphism $\sigma$ is given as follows:
    \begin{align*}
    \sigma(i) = \left\{
\begin{array}{ll}
m+i & \text{if $i\in\{0,\ldots,m-1\}$}\\
i-m & \text{if $i\in\{m,\ldots,2m-1\}$}\\
4m+2k-i & \text{if $i\in\{2m,\ldots,2m+2k\}$}
%2k - i & \text{if $i\in\{2m+k+1,\ldots,2m+2k\}$}
\end{array}
\right.
    \end{align*}
The language $\textsc{DSym}$ is the set of all dumbbell graphs $G$ such that $\sigma(G)$ is isomorphic to $G$. 
\end{definition}

\begin{definition}[Graph Non-Isomorphism~\cite{kol2018interactive}]
The language $\textsc{GNI}$ is the set of all pairs of graphs $(G_0,G_1)$ where $G_0$ is not isomorphic to $G_1$. %There exist two different definitions of $\textsc{GNI}$. The first one, which was introduced by \cite{kol2018interactive}, is harder and in this problem 
We assume that the communication graph is $G_0$, and nodes cannot communicate on $G_1$-edges. %We call this version $\textsc{GNI}^{hard}$. The second one \cite{naor2020power} is easier, and we assume that the communication graph is the union of $G_0$ and $G_1$. 
\end{definition}
 %Hence the input $I(u)$ to each node $u$ contains its incident $G_0$-edges and $G_1$-edges.
 
 In \cite{naor2020power}, it was shown that $\textsc{SetEquality} \in \mathsf{dAM}[2](\log n)$, $\textsc{SetEquality} \in \mathsf{dAM}[4](\log \log n)$, $\textsc{Asym} \in \mathsf{dAM}[4](\log n)$, $\textsc{DSym} \in \mathsf{dAM}[4](\log \log n)$, $\textsc{GNI} \in \mathsf{dAM}[k](\log n)$ for some constant $k>4$. \footnote{If we add the condition that the nodes can communicate on $G_1$-edges, there is a $\mathsf{dAM}[4](\log n)$ protocol~\cite{naor2020power}.} For $\textsc{GNI}$, there also exists a $\mathsf{dAM}[4](n\log n)$ protocol showed by \cite{kol2018interactive}.

\section{Quantum information}\label{sec:quantum_information}
We assume that the readers are familiar with basic concepts of quantum information such as density matrices, measurements, and quantum circuits (See, e.g., \cite{NC00,Wat18book,Wil17}).

Let $\mathcal{H}$ be a finite dimensional Hilbert space, and $\rho,\sigma$ be any quantum states in $\mathcal{H}$. The fidelity of two quantum states $\rho,\sigma$ is defined as $F(\rho,\sigma) = \mathrm{tr}\left[\sqrt{\sqrt{\rho}\sigma\sqrt{\rho}}\right]$. Note that for two pure states $\rho = \ket{\psi_{\rho}}\bra{\psi_{\rho}}$ and $\sigma = \ket{\psi_{\sigma}}\bra{\psi_{\sigma}}$ we have $F(\rho,\sigma) = |\braket{\psi_{\rho}|\psi_{\sigma}}|$. Let $\mathrm{dist}(\rho,\sigma) = \frac{1}{2}\|\rho-\sigma\|_{\mathrm{tr}}$ be the trace distance of $\rho,\sigma$, where $\| A \|_{\mathrm{tr}} = \mathrm{tr}\sqrt{A^{\dagger}A}$. For two pure states $\ket{\psi}\bra{\psi},\ket{\phi}\bra{\phi}$ we denote $\mathrm{dist}(\ket{\psi},\ket{\phi})$.
Here we summarize some useful inequalities about the fidelity and the trace distance, which are used multiple times in this paper.

\begin{lemma}\label{lem:inequalities}
For any quantum states $\rho,\sigma,\xi$ in $\mathcal{H}$, we have
\begin{enumerate}
    \item {\rm\cite{fuchs1999cryptographic}:} $
            1-F(\rho,\sigma) \leq \mathrm{dist}(\rho,\sigma) \leq \sqrt{1-F(\rho,\sigma)^2},
        $
    \item {\rm\cite{nayak2003bit,spekkens2001degrees}:} $
        F(\rho,\sigma)^2  + F(\xi,\sigma)^2 \leq 1 +     F(\rho,\xi).
        $
    %\item {\rm\cite{wildelecture}:} $F(\ket{\psi}\bra{\psi},\rho)^2  + F(\ket{\psi}\bra{\psi},\sigma)^2 \leq \mathrm{dist}(\rho,\sigma).$
\end{enumerate}

\end{lemma}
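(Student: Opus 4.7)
The plan is to reduce both inequalities to statements about pure states via Uhlmann's theorem, which states that for any two density operators $\rho,\sigma$ on $\mathcal{H}$ there exist purifications $\ket{\Psi_\rho},\ket{\Psi_\sigma}$ in $\mathcal{H}\otimes\mathcal{H}'$ with $F(\rho,\sigma)=|\braket{\Psi_\rho|\Psi_\sigma}|$, and more generally that $F(\rho,\sigma)=\max|\braket{\Psi_\rho|\Psi_\sigma}|$ over all joint purifications in the same auxiliary space.

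For the upper bound in part~1, I would fix purifications $\ket{\psi},\ket{\phi}$ of $\rho,\sigma$ achieving $F(\rho,\sigma)=|\braket{\psi|\phi}|$, and use the fact that the trace distance is non-increasing under partial trace to write $\mathrm{dist}(\rho,\sigma)\leq \mathrm{dist}(\ket{\psi}\bra{\psi},\ket{\phi}\bra{\phi})$. A direct diagonalization of $\ket{\psi}\bra{\psi}-\ket{\phi}\bra{\phi}$ in the (at most) two-dimensional subspace spanned by $\ket{\psi}$ and $\ket{\phi}$ then yields $\mathrm{dist}(\ket{\psi}\bra{\psi},\ket{\phi}\bra{\phi})=\sqrt{1-|\braket{\psi|\phi}|^2}$. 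For the lower bound, I would combine the variational formula $\mathrm{dist}(\rho,\sigma)=\max_{0\leq P\leq I}\mathrm{tr}(P(\rho-\sigma))$ with the measurement characterization of fidelity $F(\rho,\sigma)=\min_{\{E_i\}}\sum_i\sqrt{\mathrm{tr}(E_i\rho)\mathrm{tr}(E_i\sigma)}$; restricting to the optimal two-outcome POVM, the claim reduces to the elementary classical inequality $1-\sum_i\sqrt{p_iq_i}\leq\frac{1}{2}\sum_i|p_i-q_i|$, which follows by rewriting the left-hand side as $\frac{1}{2}\sum_i(\sqrt{p_i}-\sqrt{q_i})^2$ and comparing term by term with $\frac{1}{2}\sum_i|\sqrt{p_i}-\sqrt{q_i}|(\sqrt{p_i}+\sqrt{q_i})$.

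For part~2, the strategy is again to lift everything to purifications. I would fix any purification $\ket{\Psi_\sigma}$ of $\sigma$ in a sufficiently large joint space, and invoke Uhlmann's theorem (with $\ket{\Psi_\sigma}$ held fixed) to obtain purifications $\ket{\Psi_\rho},\ket{\Psi_\xi}$ such that $\braket{\Psi_\rho|\Psi_\sigma}=F(\rho,\sigma)$ and $\braket{\Psi_\xi|\Psi_\sigma}=F(\xi,\sigma)$, adjusting phases so both inner products are non-negative real. The definition of the fidelity as a maximum over purifications then gives $|\braket{\Psi_\rho|\Psi_\xi}|\leq F(\rho,\xi)$ for free. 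The claim therefore reduces to the elementary Hilbert-space inequality
\[
|\braket{a|b}|^2+|\braket{c|b}|^2\leq 1+|\braket{a|c}|
\]
for any three unit vectors $a,b,c$. To prove this I would expand $b$ in an orthonormal frame aligned with $a$ and the component of $c$ orthogonal to $a$, reducing the left-hand side to $\cos^2\phi+\cos^2(\phi-\theta)$ where $\cos\theta=|\braket{a|c}|$, and then simplify via $\cos^2\phi+\cos^2(\phi-\theta)=1+\cos\theta\cos(2\phi-\theta)\leq 1+\cos\theta$.

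The main technical obstacle is verifying the three-vector inner-product inequality used in part~2 and confirming that adjusting global phases on the purifications is sufficient to reduce the general complex case to the real one treated above; the remainder of the argument is standard and appears in the cited references.
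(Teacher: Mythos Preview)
The paper does not actually prove this lemma: it is stated as a known fact with citations to \cite{fuchs1999cryptographic} for part~1 and \cite{nayak2003bit,spekkens2001degrees} for part~2, and is used as a black box elsewhere in the paper. So there is no ``paper's own proof'' to compare against.

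That said, your proposal is correct and follows the standard arguments from those references. For part~1, purifying and using monotonicity of the trace norm under partial trace is exactly the textbook proof of the upper Fuchs--van de Graaf bound, and your reduction of the lower bound to the classical Hellinger/total-variation inequality via the measurement characterization of fidelity is also standard. For part~2, fixing a purification of $\sigma$ and invoking Uhlmann separately for $\rho$ and $\xi$ is precisely the approach of Nayak--Shor and Spekkens--Rudolph; the only point worth spelling out more carefully is the passage from the complex three-vector inequality to the real two-dimensional computation. Concretely, after projecting $b$ into $\mathrm{span}(a,c)$ and writing $c=\cos\theta\, e_1+\sin\theta\, e_2$ with $e_1=a$, the cross term in $|\braket{c|b}|^2$ involves $\mathrm{Re}(\beta_1\bar\beta_2)$, which you bound by $|\beta_1||\beta_2|$ to reach the real case $\cos^2\phi+\cos^2(\phi-\theta)$. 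This is exactly the ``phase adjustment'' you flag at the end, and it goes through without difficulty.
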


\section{Proof of Theorem~\ref{thm:7_to_5}}\label{appendix:7_to_5}

Fix the input $x$ and a $\mathsf{dQIP}p[7](f(n))$ protocol $\pi$ for $\mathcal{L}$ described by a sequence of unitaries $P_1,V_2,P_3,V_4,P_5,V_6,P_7,V_8$ in this order, which has completeness $c$ and soundness $s$. Our converted protocol is shown in Figure~\ref{fig:protocol_7_to_5} (we call this protocol $\pi'$). We denote $\mathsf{R}_1=\{\mathsf{R}_{u,1}\}_{u\in V}$ and $\mathsf{R}_2=\{\mathsf{R}_{u,2}\}_{u\in V}$. In $\pi'$, we consider the entire register is $(\mathsf{P,R_1,R_2})$ where $\mathsf{P}$ is the prover's private register: Initially, there is no verifier's private register, and after receiving $\mathsf{R}_1$, the private register of each node $u$ is $\mathsf{R}_{u,1}$. 
Here we analyze the completeness and the soundness of $\pi'$.
Define two quantum states $\ket{\psi_4} = V_4P_3V_2P_1\ket{0\cdots 0}_{(\mathsf{P,R_1,R_2})}$ and $\ket{\psi_5} = P_5\ket{\psi_4}$ and their reduced states
$\sigma_1=\mathrm{tr}_{(\mathsf{P},\mathsf{R}_2)}(\ket{\psi_4}\bra{\psi_4})=\mathrm{tr}_{(\mathsf{P},\mathsf{R_2})}(\ket{\psi_5}\bra{\psi_5})$, $\sigma_2=\mathrm{tr}_{(\mathsf{P},\mathsf{R}_1)}(\ket{\psi_4}\bra{\psi_4})$, and $\sigma_3=\mathrm{tr}_{(\mathsf{P},\mathsf{R}_1)}(\ket{\psi_5}\bra{\psi_5})$. (Here we abuse the notation by thinking unitaries $V_i$ act on both $(\mathsf{M,V})$ and $(\mathsf{R_1,R_2})$, and also unitaries $P_i$ act on both $(\mathsf{P,M})$ and $(\mathsf{P,R_2})$ since they have the same size.)
\par
\noindent\textbf{Proof of completeness:}
Assume that $x\in \mathcal{L}$. The honest prover does the following.
\begin{itemize}
    \item \textbf{Turn 1:} Send $\sigma_1$.
    \item \textbf{Turn 3:} Broadcast $b$. If $b=0$, send $\sigma_3$. If $b=1$, send $\sigma_2$.
    \item \textbf{Turn 5:} If $b=0$, apply $P_7$. If $b=1$, apply $P_3^{\dagger}$.
\end{itemize}
After Step 5 of Figure~\ref{fig:protocol_7_to_5}, if $b=0$, the entire quantum state is $V_8P_7V_6P_5V_4P_3V_2P_1\ket{0\cdots 0}$ and if $b=1$, the entire quantum state is $P_1\ket{0\cdots 0}=(P_1\ket{0\cdots 0}_{(\mathsf{P,R_2})})\otimes \ket{0\cdots 0}_{\mathsf{R}_1}$. Thus the acceptance probability of $\pi'$ is $\frac{1+c}{2}$.

\par
\noindent\textbf{Proof of soundness:}
Assume that $x\notin \mathcal{L}$. 
Let $\ket{\psi}$ be the initial state in $(\mathsf{P,R_1,R_2})$, that is, in Turn 1 of $\pi'$, the verifier receives the register $\mathsf{R_1}$ and its reduced state is $\mathrm{tr}_{(\mathsf{P,R_2})}(\ket{\psi}\bra{\psi})$.
Assume that, when the random bit in Turn 2 is $b=i$, the prover applies $U_i\otimes I_{\mathsf{R_1}}$ and sends $\mathsf{R_2}$ in Turn 3, and applies $W_i\otimes I_{\mathsf{R_1}}$ and sends $\mathsf{R_2}$ in Turn 5. 
Define unitaries $Q_0$ and $Q_1$ by $Q_0=(I_{(\mathsf{P,R_2})} \otimes V_8)(W_0\otimes I_{\mathsf{R_1}})(I_{(\mathsf{P,R_2})} \otimes V_6)(U_0\otimes I_{\mathsf{R_1}})$ and $Q_1=(I_{(\mathsf{P,R_2})} \otimes V_2^{\dagger})(W_1\otimes I_{\mathsf{R_1}})(I_{(\mathsf{P,R_2})} \otimes V_4^{\dagger})(U_1\otimes I_{\mathsf{R_1}})$, and let 
\begin{align*}
    \ket{\alpha}= \frac{1}{\| \Pi_{acc}Q_0 \ket{\psi} \|}\Pi_{acc}Q_0 \ket{\psi} \text{ and }
    \ket{\beta}= \frac{1}{\| \Pi_{init}Q_1 \ket{\psi} \|}\Pi_{init}Q_1 \ket{\psi},
\end{align*}
where $\Pi_{acc}$ is the projection onto the acceptance state of $\pi$, and $\Pi_{init} = I_{(\mathsf{P,R}_2)}\otimes \ket{0\cdots 0}\bra{0 \cdots 0}_{\mathsf{R_1}}$.

Let $p_i$ be the acceptance probability of $\pi'$ when the random bit in Turn 2 is $b=i$. Then we have
\begin{align*}
    p_0 &= \| \Pi_{acc}Q_0\ket{\psi}\|^2 =  \frac{1}{\| \Pi_{acc}Q_0 \ket{\psi} \|} |\bra{\psi}Q_0^{\dagger}\Pi_{acc}Q_0\ket{\psi}|^2
    = F(\ket{\alpha}\bra{\alpha},Q_0\ket{\psi}\bra{\psi}Q_0^{\dagger})^2\\
    &= F(Q_0^{\dagger}\ket{\alpha}\bra{\alpha}Q_0,\ket{\psi}\bra{\psi})^2,\\
    p_1 &= \| \Pi_{init}Q_1\ket{\psi}\|^2 =  \frac{1}{\| \Pi_{init}Q_1 \ket{\psi} \|} |\bra{\psi}Q_1^{\dagger}\Pi_{init}Q_1\ket{\psi}|^2
    = F(\ket{\beta}\bra{\beta},Q_1\ket{\psi}\bra{\psi}Q_1^{\dagger})^2\\
    &= F(Q_1^{\dagger}\ket{\beta}\bra{\beta}Q_1,\ket{\psi}\bra{\psi})^2.
\end{align*}
Therefore, using Lemma~\ref{lem:inequalities} the acceptance probability $p_{acc}$ of $\pi'$ is bounded by
\begin{align*}
    p_{acc}=\frac{1}{2}(p_0+p_1) &\leq \frac{1}{2}(1+F(Q_0^{\dagger}\ket{\alpha}\bra{\alpha}Q_0,Q_1^{\dagger}\ket{\beta}\bra{\beta}Q_1)) \\
    &=\frac{1}{2}(1+F(\ket{\alpha}\bra{\alpha},Q_0Q_1^{\dagger}\ket{\beta}\bra{\beta}Q_1Q_0^{\dagger})).
\end{align*}
We also have
\begin{align*}
    F(\ket{\alpha}\bra{\alpha},Q_0Q_1^{\dagger}\ket{\beta}\bra{\beta}Q_1Q_0^{\dagger}) = |\bra{\alpha}Q_0Q_1^{\dagger}\ket{\beta}|
    = |\bra{\alpha}\Pi_{acc}Q_0Q_1^{\dagger}\ket{\beta}|
    \leq \| \Pi_{acc}Q_0Q_1^{\dagger}\ket{\beta} \|
\end{align*}
from the fact that $\Pi_{acc}\ket{\alpha} = \ket{\alpha}$. 
The reduced state of $\ket{\beta}$ satisfies $\mathrm{tr}_{(\mathsf{P,R_2})}(\ket{\beta}\bra{\beta}) = \ket{0\cdots 0}\bra{0\cdots 0}_{\mathsf{R_1}}$ since $\Pi_{init}\ket{\beta} = \ket{\beta}$.
Therefore, from the soundness of $\pi$, for any $U_0,U_1,W_0,W_1$ acting on $(\mathsf{P,R_2})$,
\begin{align*}
    &\| \Pi_{acc}(I_{(\mathsf{P,R_2})} \otimes V_8)(W_0\otimes I_{\mathsf{R_1}})(I_{(\mathsf{P,R_2})} \otimes V_6)(U_0U_1^{\dagger}\otimes I_{\mathsf{R_1}})  (I_{(\mathsf{P,R_2})} \otimes V_4)(W_1\otimes I_{\mathsf{R_1}})(I_{(\mathsf{P,R_2})} \otimes V_2)\ket{\beta} \|^2 \\
    &=\| \Pi_{acc}Q_0Q_1^{\dagger}\ket{\beta} \|^2   \leq s.
\end{align*}
Thus we have $p_{acc}\leq \frac{1}{2} + \frac{\sqrt{s}}{2}$, which completes the proof of soundness.

\begin{figure}[htbp]
\begin{mdframed}
\begin{enumerate}
    \item \textbf{Turn 1:} The prover gives a $g(n)$-qubit quantum register $\mathsf{R}_{u,1}$ to each node $u$. The prover chooses arbitrary one node as a leader node $\mathsf{leader}$.
    \item \textbf{Turn 2:} $\mathsf{leader}$ chooses a bit $b\in\{0,1\}$ uniformly at random and sends it to the prover.
    \item \textbf{Turn 3:} The prover sends one bit $b_u$ and a $f(n)$-qubit quantum register $\mathsf{R}_{u,2}$ to each node $u$.
    \item \textbf{Turn 4:}
    If $b_u=0$, each node $u$ applies $V_{u,6}$ to $(\mathsf{R}_{u,1},\mathsf{R}_{u,2})$. If $b_u=1$, $u$ applies $V_{u,4}^{\dagger}$. $u$ sends $\mathsf{R}_{u,2}$ to the prover.
    \item \textbf{Turn 5:} The prover sends $\mathsf{R}_{u,2}$ to each node $u$. If $b_u=0$, each node $u$ applies $V_{u,8}$ to $(\mathsf{R}_{u,1},\mathsf{R}_{u,2})$. If $b_u=1$, $u$ applies $V_{u,2}^{\dagger}$. 
    
    \item \textbf{The verification phase:} If $b_u=0$, $u$ does the same verification as in the original 7-turn protocol. $u$ outputs "accept" iff the output of the original protocol is "accept" and all random bits $b_v$ where $v\in N(u)$ are the same as $b_u$.
    If $b_u=1$, $u$ outputs "accept" iff the register $\mathsf{R}_{u,1}$ is set to all-zero state and all random bits $b_v$ where $v\in N(u)$ are the same as $b_u$.
\end{enumerate}
\end{mdframed}
\caption{$\mathsf{dQIP}p[5](f(n)+g(n))$ protocol $\pi'$.}
\label{fig:protocol_7_to_5}
\end{figure}

%%%%%%%%%%%%%%%%%%%%%%%%%%%%%
\section{Proofs of Theorem~\ref{theorem:main_theorem} and Theorem~\ref{theorem:main_theorem_shared}.}\label{appendix:main_theorem}
%%%%%%%%%%%%%%%%%%%%%%%%%%%%%

We first prove Theorem~\ref{theorem:simulation}.
\begin{proof}
\noindent\textbf{Completeness}: Assume that $(G,I) \in \mathcal{L}$ and the prover receives the $\mathsf{M}_u$ part of the quantum state $\frac{1}{\sqrt{2^m}}
\sum_{r_{j-1}(u)\in\{0,1\}^m}
\ket{r_{j-1}(u)}_{\mathsf{M}_u}
\ket{r_{j-1}(u)}_{\mathsf{V}_u}$ 
from the node $u$ in the $(j-1)$-th turn.
At the $j$-th turn the honest prover applies the SWAP gate to $(\mathsf{P},\mathsf{M})$, obtaining
\begin{align*}
    \frac{1}{\sqrt{2^{\frac{j-1}{2}mn}}}
    \Biggl(\sum_{r_2,r_4,...,r_{j-1}\in \{0,1\}^{mn}}
    \ket{r_2,r_4,...,r_{j-1}}_{\mathsf{P}}
    \ket{0}_{\mathsf{M}}
    \ket{c_1,r_2,c_3,r_4,...,c_{j-2},r_{j-1}}_{\mathsf{V}}
    \Biggr).
\end{align*}
Then, the prover also applies $U_{c_j}$ to $(\mathsf{P},\mathsf{M}_u)$, obtaining the following state
\begin{align*}
    &\frac{1}{\sqrt{2^{\frac{j-1}{2}mn}}}
    \Biggl(\sum_{r_2,r_4,...,r_{j-1}\in \{0,1\}^{mn}}
    \ket{r_2,r_4,...,r_{j-1}}_{\mathsf{P}}
    \ket{c_j}_{\mathsf{M}}
    \ket{c_1,r_2,...,c_{j-2},r_{j-1}}_{\mathsf{V}}
    \Biggr)\\
    &=  \frac{1}{\sqrt{2^{\frac{j-1}{2}mn}}}\Biggl(\sum_{r_2,r_4,...,r_{j-1}\in \{0,1\}^{mn}}
    \ket{r_2,r_4,...,r_{j-1}}_{\mathsf{P}}
    \bigotimes_{u\in V}
    \ket{c_j(u)}_{\mathsf{M}_u}
    \ket{c_1,...,r_{j-1}}_{\mathsf{V}}
    \Biggr).
\end{align*}
At the verification phase, the quantum state in $\mathsf{V}$ is a mixed state
\begin{align*}
    %\frac{1}{\sqrt{2^{\frac{k-1}{2}mn}}}
    \frac{1}{2^{\frac{k-1}{2}mn}}
    \sum_{r_2,...,r_{k-1}\in \{0,1\}^{mn}}
    \ket{c_1,r_2,c_3,...,r_{k-1},c_k}
    \bra{c_1,r_2,c_3,...,r_{k-1},c_k}_{\mathsf{V}},
\end{align*}
and $u$ obtains one of the state $\ket{c_1(u),r_2(u),c_3(u),...,r_{k-1}(u),c_k(u)}$ uniformly at random as the outcome of its measurement. Then $u$ broadcasts the outcome to its adjacent nodes. From the completeness of the original $\mathsf{dAM}[k]$ protocol the acceptance probability of this verification phase is at least $c$. \par

\noindent\textbf{Soundness}: Assume that $(G,I) \notin \mathcal{L}$. 
%We label the nodes as $\{u_i|i\in [n]\}$.
%Let $P_j$ be the unitary transform that applied by the prover in the $j$-th turn. 
Since a malicious prover may use some other unitary instead of $U_{c_j}$ at the $j$-th turn. 
Let $\sum_{r_{j-1}\in\{0,1\}^{mn}} \ket{r_{j-1}}_{\mathsf{M}}\ket{r_{j-1}}_{\mathsf{V}}$ be the Bell pairs created by the verifier in the $(j-1)$-th turn. 
The witness provided by the prover in the $j$-th turn is stored into the private register $\mathsf{V}$. 
In the verification phase, 
node $u_i$ obtains 
%one of 
$\ket{x_i,r_{j-1}(u)}$ 
for some $x_i\in\{0,1\}^m$ 
as the outcome of its measurement.  
From the soundness of $\mathsf{dAM}$ protocols, the original protocol is accepted for at most $s$ of all random strings generated by Arthur.  On the other hand, each node $u$ obtains $\ket{r_2(u),r_4(u),...,r_{k-1}(u)}$ uniformly at random by its measurement regardless of the prover's action. Thus the acceptance probability of this $\mathsf{dQIP}$ protocol is at most $s$.
\end{proof}

Now we are ready to show how to parallelize the protocol to 5-turn.\par
\noindent{\textbf{Proofs of Theorem \ref{theorem:main_theorem} and Theorem \ref{theorem:main_theorem_shared}}}
Assume without loss of generality $k=4\ell + 1$ for $\ell \geq 1$ and $m=f(n)$.
For any $\mathsf{dAM}[k](m)$ protocol, we have a $\mathsf{dQIP}[k](m)$ protocol which simulates it using Theorem \ref{theorem:simulation}. Applying Theorem \ref{theorem_5_turn} we can parallelize it to a $\mathsf{dQIP}pp[2\ell + 3](m)$ protocol with completeness $\frac{1+c}{2}$ and soundness $\frac{1+\sqrt{s}}{2}$. 
Note that we can assume $g(n)=O(m)$ since
the register provided by the honest prover at the first turn of the parallelized protocol contains a $(4\ell + 1)mn$-qubit state 
in registers $\mathsf{V}$ and $\mathsf{M}$ 
such that the total state 
is represented as
\begin{align*}
    \frac{1}{\sqrt{2^{(\ell + 1)mn}}}\sum_{r_2,\ldots,r_{2\ell + 2}\in\{0,1\}^{mn}}
    \ket{r_2,\ldots,r_{2\ell+2}}_{\mathsf{P}}
    \ket{c_1,r_2,\ldots,c_{2\ell + 1},r_{2\ell + 2},0^{mn},\cdots,0^{mn}}_{(\mathsf{V},\mathsf{M})}%\otimes \sum_{r_2(u),r_4(u)\in\{0,1\}^m}\ket{c_1(u),r_2(u),c_3(u),r_4(u)}_{(\mathsf{V}_{u,2},\mathsf{M}_{u,2})}.
\end{align*}
and each node $u$ receives its reduced state of $(4\ell + 1)m=O(m)$-qubit on $(\mathsf{V}_u,\mathsf{M}_u)$.
Thus the size of witnesses and the size of messages in the verification phase are both $O(m)$.
We assume that the parameters $c$ and $s$ of the original $\mathsf{dAM}[k](m)$ protocol are $c = 1-\varepsilon$ and $s=\delta$ for small constant $\varepsilon>0$ and $\delta>0$ since we can use the standard technique of parallel repetition by \cite{crescenzi2019trade} (the protocol is executed in parallel a constant number of times, and the leader node, which is determined by the prover as a root of a spanning tree, adopts the majority of the outcomes in the verification phase). Note that the witness size does not change by parallel repetition since the protocol has $k\geq 3$ turn and the construction of a spanning tree can be done with $O(1)$-size witnesses using three turns~\cite{naor2020power}. 
Now we assume that the completeness and the soundness of the original $\mathsf{dAM}[k](m)$ protocol are $c = 1-\frac{1}{12a^2}$ and $s=\frac{1}{12a^2}$ for $a=k-1$, respectively. By Theorem~\ref{theorem:simulation}, the converted $\mathsf{dQIP}[k](m)$ protocol has the same completeness and soundness. By Theorem~\ref{theorem_5_turn}, which reduces the number of turns down to 7, and Theorem~\ref{thm:7_to_5}, which reduces the number of turns down to 5, and the same analysis as Lemma 4.2 of~\cite{kempe2009using}, we get a $\mathsf{dQIP}[5](m)$ protocol with completeness $1-\frac{2c}{k-1} = 1-\frac{1}{6a^3}$ and soundness $1-\frac{1-s}{(k-1)^2}<1-\frac{1}{2a^2}$. Then we use another parallel repetition for quantum interactive protocols developed in \cite{gutoski2010quantum}. (the parallel repetition in~\cite{gutoski2010quantum} accepts iff all outcomes in repetitions are "accept". See Theorem 4.9 in~\cite{gutoski2010quantum}.) More precisely, using $2a^3$ time repetitions the completeness and soundness become $\bigl(1-\frac{1}{6a^3}\bigr)^{2a^3} > 1-\frac{1}{3}=\frac{2}{3}$ and $\bigl(1-\frac{1}{2a^2}\bigr)^{2a^3} < \frac{1}{e^a}< \frac{1}{3}$, which completes the proof of Theorem~\ref{theorem:main_theorem}. \par
 In the case of $\mathsf{dQIP}^{sh}$, we can parallelize a $k=(4\ell + 1)$-turn protocol to a $(2\ell + 1)$-turn protocol by using Theorem~\ref{theorem:shared}. Therefore Theorem~\ref{theorem:main_theorem_shared} can be shown similarly to the proof of Theorem~\ref{theorem:main_theorem} by applying Theorem~\ref{theorem:shared}, instead of Theorem~\ref{theorem_5_turn} and Theorem~\ref{thm:7_to_5}.\qed

%%%%%%%%%%%%%%%%%%%%%%%%%%%%%%%%%%
\section{Proof of Theorem~\ref{thm:protocol_2}}\label{appendix:GHZ}
%%%%%%%%%%%%%%%%%%%%%%%%%%%%%%%%%%

We use the following characterization of the GHZ state, which states the GHZ state is \textit{locally equivalent} to a certain graph state.
\begin{lemma}\label{lem1}
\begin{align*}
    (I\otimes H^{\otimes n-1})\ket{GHZ} = \ket{S_n}
\end{align*}
where $\ket{S_n}$ is the graph state of a $n$-node star graph $S_n$ with the first qubit corresponds to the central node of $S_n$.
\end{lemma}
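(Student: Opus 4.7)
The plan is to compute both sides in the computational/Hadamard basis and observe they give the same expression, namely $\frac{1}{\sqrt{2}}(\ket{0}\ket{+}^{\otimes n-1} + \ket{1}\ket{-}^{\otimes n-1})$. There is essentially one obstacle-free calculation on each side, so the proof will be short and direct.

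First I would expand the left-hand side. Using $H\ket{0}=\ket{+}$ and $H\ket{1}=\ket{-}$ together with the definition $\ket{GHZ}=\frac{1}{\sqrt{2}}(\ket{0^n}+\ket{1^n})$, linearity of $I\otimes H^{\otimes n-1}$ gives
\[
(I\otimes H^{\otimes n-1})\ket{GHZ} \;=\; \tfrac{1}{\sqrt{2}}\bigl(\ket{0}\ket{+}^{\otimes n-1} + \ket{1}\ket{-}^{\otimes n-1}\bigr).
\]

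Next I would expand the right-hand side from the standard definition of the graph state $\ket{S_n}=\bigl(\prod_{j=2}^{n} CZ_{1,j}\bigr)\ket{+}^{\otimes n}$, where qubit $1$ is the center of $S_n$. Writing $\ket{+}^{\otimes n}=\frac{1}{\sqrt{2}}(\ket{0}+\ket{1})\otimes\ket{+}^{\otimes n-1}$ and using that each $CZ_{1,j}$ acts as the identity on the $\ket{0}$ branch of the control qubit and as $Z$ on qubit $j$ in the $\ket{1}$ branch, together with $Z\ket{+}=\ket{-}$, yields
\[
\ket{S_n} \;=\; \tfrac{1}{\sqrt{2}}\bigl(\ket{0}\ket{+}^{\otimes n-1} + \ket{1}\ket{-}^{\otimes n-1}\bigr),
\]
which matches the expansion above and proves the identity. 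The only ``step'' worth stating carefully is that the $CZ_{1,j}$ gates commute pairwise and act independently on distinct leaves, so their product can be evaluated factor by factor; everything else is substitution. I do not foresee any real obstacle — the statement is just the well-known observation that the star graph state and the GHZ state are local-Clifford equivalent via Hadamards on the leaves.
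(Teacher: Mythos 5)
Your proposal is correct and follows essentially the same route as the paper: both expand $\ket{S_n}=\prod_{j=2}^{n}CZ_{(1,j)}\ket{+}^{\otimes n}$ by splitting the central qubit into its $\ket{0}$ and $\ket{1}$ branches to obtain $\tfrac{1}{\sqrt{2}}(\ket{0}\ket{+}^{\otimes n-1}+\ket{1}\ket{-}^{\otimes n-1})$ and identify this with $(I\otimes H^{\otimes n-1})\ket{GHZ}$. The only cosmetic difference is that you also expand the left-hand side explicitly, whereas the paper states the final equality directly.
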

\begin{proof}
Let $Z= \ket{0}\bra{0}-\ket{1}\bra{1}$ be the Pauli Z gate and $CZ_{(i,j)}$ be the controlled-Z gate acting on $i$-th qubit and $j$-th qubit.
\begin{align*}
    \ket{S_n} &= \Pi_{j=2}^n CZ_{(1,j)}\ket{+}_1\ket{+}_2\cdots\ket{+}_n\\
    &= \frac{1}{\sqrt{2}} (\Pi_{j=2}^n CZ_{(1,j)}\ket{0}_1\ket{+}_2\cdots\ket{+}_n + \Pi_{j=2}^n CZ_{(1,j)}\ket{1}_1\ket{+}_2\cdots\ket{+}_n)\\
    &= \frac{1}{\sqrt{2}} (\ket{0}_1\ket{+}_2\cdots\ket{+}_n + \ket{1}_1\ket{-}_2\cdots\ket{-}_n)\\
    &= (I\otimes H^{\otimes n-1})\ket{GHZ}.
\end{align*}
\end{proof}

%\subsection{The protocol to verify graph states of 2-colorable graphs}

Assume that for $i\in [n]$, the $i$-th qubit of $\ket{GHZ}$ is owned by the $i$-th node of an $n$-node network.
We construct a $\mathsf{dQIP}$ protocol to verify the GHZ state in this setting. By Lemma~\ref{lem1}, $\ket{GHZ}$ is locally equivalent to the graph state of a 2-colorable graph $S_n$. 
In \cite{zhu2019efficient}, Zhu and Hayashi showed a protocol to verify graph states of 2-colorable graphs, which is called the coloring protocol. Thus we can leverage this to verify the star graph $S_n$. Once we verified the graph state of $S_n$, it can be transformed to the GHZ state without communication. 

Here we briefly explain their protocol. Let $A_0,A_1\subseteq V$ be the coloring of an $2$-colorable graph. Note that $A_0$ and $A_1$ are distinct with each other, and $A_0\cup A_1=V$. In the coloring protocol, there are two measurements $\{P_0,I-P_0\}$ and $\{P_1,I-P_1\}$, and the verifier performs one of the measurements uniformly at random. Here, $P_i$ is given by
\begin{align*}
    P_i = \prod_{u\in A_i}\Bigl(
      \frac{1+ K_u}{2}
    \Bigr)
\end{align*}
where
\begin{align*}
    K_u = X_u\otimes \prod_{v\in N(u)}Z_{v}.
\end{align*}
%Here $CZ_{u,v}$ is the controlled-Z operation on the qubits corresponds to $u$ and $v$. 
Therefore the verification operator $\Omega$ is written as
$
    \Omega = \frac{1}{2}(P_0+P_1).
$
%The test described by $P_i$ is performed as follows: 
%\begin{itemize}
%    \item All qubits associated with a
%given set $A_i$ are measured in the $X$ basis (i.e., the measurement $\{\ket{+}\bra{+},\ket{-}\bra{-}\}$),
%\item the other qubits ($V\backslash A_i$) are measured in the $Z$ basis (i.e., the measurement $\{\ket{0}\bra{0},\ket{1}\bra{1}\}$).
%\end{itemize}
The eigenstate of $\Omega$ with eigenvalue 1 is  stabilized by $K_u$ for all $u\in V$. Since $\ket{G}$ is the unique state that is stabilized by all $K_u$, the only state that can pass all tests $P_i$ with probability 1 is the target $2$-colorable graph state $\ket{G}$. 

Suppose that the verifer is given $n$-qubit registers $\mathsf{R}_1,...,\mathsf{R}_{N+1}$. The protocol of Zhu and Hayashi, which is refered as $\mathcal{P}_{ZH}$ in this paper, is shown in Figure~\ref{ZH}. The main result of~\cite{zhu2019efficient} is the following theorem.

\begin{theorem}[Zhu and Hayashi~\cite{zhu2019efficient}]\label{thm:ZH}
Let $G=(V,E)$ be a 2-colorable graph and $A_0,A_1$ be the coloring of $V$. Let $\mathcal{P}_{ZH}$ be the protocol shown in the Figure \ref{ZH} where $P_i = \prod_{u\in A_i}\Bigl(
      \frac{1+ K_u}{2}
    \Bigr)$ for $i\in \{0,1\}$.
For any constant parameters $\varepsilon,\delta$, there exists a constant $N=\Theta(\frac{1}{\varepsilon}\log\frac{1}{\delta})$ such that if $\mathcal{P}_{ZH}$ is accepted with probability $\delta$, then the reduced state $\rho$ of regsiter $\mathsf{R}_{N+1}$ satisfies
    \begin{align*}
        \bra{G}\rho \ket{G} \geq 1- \varepsilon.
    \end{align*}
\end{theorem}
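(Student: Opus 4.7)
My plan is to reduce the soundness of $\mathcal{P}_{ZH}$ to (i) a spectral-gap analysis of the single-shot verification operator $\Omega = \tfrac{1}{2}(P_0+P_1)$, and (ii) an amplification argument across the $N$ tested registers. For step (i), note first that for a $2$-coloring $A_0 \cup A_1 = V$ each stabilizer $K_u$ with $u \in A_i$ equals $X_u \otimes \prod_{v \in N(u)}Z_v$, so it is supported as $X$ on $A_i$ and $Z$ on $A_{1-i}$; hence the $\{K_u\}_{u \in A_i}$ pairwise commute and $P_i$ is a bona fide projector onto their joint $+1$-eigenspace. Diagonalizing in the joint stabilizer eigenbasis $\{\ket{s}\}_{s \in \{\pm 1\}^V}$, where $K_u \ket{s} = s_u \ket{s}$, a direct computation gives
\[
    \Omega \ket{s} \;=\; \tfrac{1}{2}\bigl([\,s|_{A_0} \equiv +1\,] + [\,s|_{A_1} \equiv +1\,]\bigr)\,\ket{s},
\]
so the eigenvalues of $\Omega$ lie in $\{0,\tfrac{1}{2},1\}$, the unique eigenvalue-$1$ eigenvector is $\ket{G}=\ket{s\equiv +1}$, and the second-largest eigenvalue is $\beta = \tfrac{1}{2}$. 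Consequently $\Omega \preceq \ket{G}\!\bra{G} + \tfrac{1}{2}(I - \ket{G}\!\bra{G})$, which yields the single-shot bound $\mathrm{tr}(\Omega\rho) \le \tfrac{1}{2}\bigl(1 + \bra{G}\rho\ket{G}\bigr)$ for every one-register state $\rho$.

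\textbf{Symmetrization and amplification.} Without loss of generality, I would insert a uniformly random permutation of $\mathsf{R}_1,\dots,\mathsf{R}_{N+1}$ at the beginning of the protocol; this commutes with the rest of the verifier's operation and only restricts the prover further, so the worst-case soundness is attained on a permutation-symmetric $(N+1)$-register state $\sigma$. In particular every single-register marginal of $\sigma$ equals the (pre-conditioning) marginal on the output register. For completeness ($\sigma = \ket{G}\!\bra{G}^{\otimes(N+1)}$ gives all tests accepted and $\rho = \ket{G}\!\bra{G}$). For soundness, suppose for contradiction that the post-selection reduced state $\rho_{\mathrm{out}}$ on $\mathsf{R}_{N+1}$ satisfies $\bra{G}\rho_{\mathrm{out}}\ket{G} < 1 - \varepsilon$; then by symmetry each tested marginal has overlap at most $1 - \varepsilon$ with $\ket{G}$, so the single-shot bound of the previous paragraph gives an individual-test acceptance probability $\le 1 - \varepsilon/2$. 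Using an operator-norm bound on $\Omega^{\otimes N}$ restricted to the permutation-symmetric subspace of $(N+1)$ registers, I would establish $p_{\mathrm{pass}} \le \exp(-cN\varepsilon)$ for an absolute constant $c>0$. Setting $N = \Theta\bigl(\tfrac{1}{\varepsilon}\log\tfrac{1}{\delta}\bigr)$ forces $p_{\mathrm{pass}} < \delta$, contradicting the hypothesis and completing the proof.

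\textbf{Main obstacle.} The delicate step is the last one: a malicious prover may entangle the output register with the tested ones, so the marginal on $\mathsf{R}_{N+1}$ conditioned on all $N$ tests accepting does not a priori inherit the spectral gap of an unconditional single test. The technical heart is therefore showing that entanglement across registers cannot help the prover more than the symmetric-product case. The cleanest route (which I would follow) is a de Finetti-style argument on the symmetric subspace, or, more directly, a spectral bound of the form $\|\Omega^{\otimes N}\|_{\mathrm{sym}} \le (\beta + (1-\beta)F)^{N}$ on the joint test operator restricted to states of fixed single-register fidelity $F$; this explicitly exhibits how each tested round contributes the factor $1-\varepsilon/2$ needed for the exponential amplification in $N$ and converts the per-round gap $1-\beta = \tfrac{1}{2}$ into the $\log(1/\delta)$ factor in the final scaling.
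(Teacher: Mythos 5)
First, a point of comparison: the paper does not prove this statement at all --- it is imported as a black box from Zhu and Hayashi~\cite{zhu2019efficient} --- so there is no in-paper argument to measure yours against; your proposal must stand on its own. Judged that way, it gets the easy half right and leaves the hard half unproven. The spectral analysis of $\Omega=\frac{1}{2}(P_0+P_1)$ is correct: the $K_u$ pairwise commute, $\Omega$ is diagonal in the joint stabilizer eigenbasis with eigenvalues in $\{0,\tfrac{1}{2},1\}$, $\ket{G}$ is the unique top eigenvector, and the single-shot bound $\mathrm{tr}(\Omega\rho)\le\tfrac{1}{2}(1+\bra{G}\rho\ket{G})$ follows. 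The symmetrization step is also fine. But the entire technical content of the Zhu--Hayashi theorem is the step you label ``Main obstacle'' and then do not carry out: controlling the fidelity of the output register \emph{conditioned on all $N$ tests passing} when the prover entangles $\mathsf{R}_{N+1}$ with the tested registers. Two concrete problems. (i) Your contradiction argument conflates pre- and post-selected states: permutation symmetry equates the \emph{unconditioned} marginals of $\sigma$, while the hypothesis $\bra{G}\rho_{\mathrm{out}}\ket{G}<1-\varepsilon$ concerns the \emph{post-selected} reduced state, so you cannot infer that each tested (pre-measurement) marginal has overlap at most $1-\varepsilon$. The whole difficulty of the adversarial scenario is that these can differ drastically; consider $\sqrt{1-p}\,\ket{G}^{\otimes(N+1)}+\sqrt{p}\,\ket{\beta}^{\otimes(N+1)}$ for a bad state $\ket{\beta}$, where conditioning reweights the two branches. (ii) Of the two routes you offer to close the gap, the de Finetti route cannot work here: quantum de Finetti errors scale with the local dimension, which is $2^{n}$ in this application, so it would not yield an $N=\Theta(\frac{1}{\varepsilon}\log\frac{1}{\delta})$ independent of $n$ --- and that dimension-independence is exactly what the host paper needs for an $O(1)$-message protocol. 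The alternative bound $\|\Omega^{\otimes N}\|_{\mathrm{sym}}\le(\beta+(1-\beta)F)^{N}$ ``restricted to states of fixed single-register fidelity $F$'' is not a well-posed operator inequality as stated, and no derivation is offered.

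The missing ingredient, in the actual Zhu--Hayashi proof, is a direct analysis of the permutation-invariant joint state expanded in the product eigenbasis of $\Omega$ over all $N+1$ registers: one bounds the ratio $\mathrm{tr}\bigl[(\Omega^{\otimes N}\otimes(I-\ket{G}\bra{G}))\sigma\bigr]/\mathrm{tr}\bigl[(\Omega^{\otimes N}\otimes I)\sigma\bigr]$ by a counting argument over how many registers lie outside the top eigenspace, each such register contributing a factor of at most $\beta=\tfrac{1}{2}$ to the numerator when it falls among the tested positions; permutation invariance converts this into a classical tail estimate that is dimension-free. Some argument of this kind --- operating on the conditioned state rather than on unconditioned marginals --- is what your write-up needs and does not contain.
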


\begin{figure}[htbp]
\begin{mdframed}
\begin{enumerate}
    \item Let $0<\varepsilon<1$, $0<\delta<1$ be two positive numbers and $N$ be a positive integer. The prover sends $n$-qubit registers $\mathsf{R}_1,...,\mathsf{R}_{N+1}$. 
    \item The verifier applies a random permutation on $\mathsf{R}_1,...,\mathsf{R}_{N+1}$. 
    For each $i\in [N]$, the verfier performs a test described as  $\Omega$ %$p\Omega + (1-p)I$ 
    on the register $\mathsf{R}_i$ where $\Omega = \frac{1}{2}(P_0+P_1)$. %and $p=1/e$.
    If all tests are passed, the verifier accepts and outputs the register $\mathsf{R}_{N+1}$.
\end{enumerate}
\end{mdframed}
\caption{The protocol of Zhu and Hayashi}
\label{ZH}
\end{figure}

\begin{figure}[htbp]
\begin{mdframed}
\begin{enumerate}
	\item \textbf{Turn 1-3:} Construct a spanning tree rooted at $\mathsf{leader}$.
    \item \textbf{Turn 1:} Let $0<\varepsilon<1$, $0<\delta<1$ be two positive numbers and $N$ be a positive integer that appeared in Theorem~\ref{thm:ZH}. The prover sends $n$-qubit registers $\mathsf{R}_1,...,\mathsf{R}_{N+1}$ (i.e., sends $\mathsf{R}_1(u),...,\mathsf{R}_{N+1}(u)$ to each node $u$).
    \item \textbf{Turn 2:} $\mathsf{leader}$ prepares a random string $b_{test}\in \{0,1\}^N$ and $b_{target}\in \{0,1\}^{\lceil \log N \rceil + 1}$ which corresponds to one element in $[N]$, then sends $b_{test}$ and $b_{target}$ to the prover. W.l.o.g., we assume that $b_{target}=N+1$ in the remaining turns.
    \item \textbf{Turn 3:} The prover sends $b_{test}(u)\in\{0,1\}^N$ and $b_{target}(u)\in \{0,1\}^{\lceil \log N \rceil + 1}$ to each node $u$.
    
    \item \textbf{Turn 4:} Let $b_{test}^i(u)$ be the $i$-th bit of $b_{test}(u)$. For $i\in [N]$, each node $u$ except $\mathsf{leader}$ measures all registers owned by $u$ except $\mathsf{R}_{b_{target}(u)}(u)$ in the Z basis if $b_{test}^i(u)=0$, otherwise in the X basis. $\mathsf{leader}$ measures all registers owned by $\mathsf{leader}$ except $\mathsf{R}_{b_{target}}(\mathsf{leader})$ in the Z basis if $b_{test}^i=0$, otherwise in the X basis. 
    Each node $u$ sends the outcome $o_u$ as an $N$-bit string (if $\ket{0}$ or $\ket{+}$ is measured, $o^i_u$ is 0, otherwise $o^i_u$ is 1) to the prover. If $u$ is not $\mathsf{leader}$, $u$ applies the Hadamard gate to $\mathsf{R}_{b_{target}}(u)$. 
    \item \textbf{Turn 5: }The prover sends N-bit $s_u\in \{0,1\}^N$ to each node $u$. 
    \item (Verification Phase)
    Each node $u$ sends $b_{test}(u)$, $b_{target}(u)$, $s_u$ and its outcome $o_u$ of the measurement to the neighbors, and checks the following for all $i\in [N]$.
    \begin{enumerate}
        \item Check if $b_{test}^i(u)$, $b_{target}(u)$ are the same with the neighbors' messages.
        \item Verify the spanning tree $T$ constructed at step 1.
        \item If $b_{test}^i=0$, check $\underset{\text{$v:$ the child of $u$ in $T$}}{\sum}s^i_v + o^i_u$ is equal to $s^i_u$ modulo 2. \\
        $\mathsf{leader}$ checks $\underset{\text{$v:$ the child of $\mathsf{leader}$ in $T$}}{\sum}s^i_v + o_{\mathsf{leader}}^i\text{ modulo 2} = 0$. 
        \item If $b_{test}^i=1$, check the value $o^i_u$ is the same as those of neighbors.
    \end{enumerate}
    If above conditions hold, $u$ accepts the protocol and outputs $\mathsf{R}_{b_{target}}$. 
    %The node $\mathsf{leader}$ accepts if the above conditions hold and $s_{\mathsf{leader}} = 0$.
\end{enumerate}
\end{mdframed}
\caption{$\mathsf{dQIP}$ protocol $\mathcal{P}_{GHZ}$ for verification of the GHZ state}
\label{dQIP_GHZ}
\end{figure}

We now construct a distributed implementation of $\mathcal{P}_{ZH}$ as a $\mathsf{dQIP}$ protocol $\mathcal{P}_{GHZ}$ (see Figure~\ref{dQIP_GHZ}), and show the following theorem.
\begin{theorem}
$\mathcal{P}_{GHZ}$ has the following properties:
\begin{itemize}
    \item (completeness): If the prover is honest, the protocol is accepted with probability 1.
    \item (soundness): If the protocol is accepted with probability $\delta$, then the reduced state $\rho$ of the output register $\mathsf{R}_{b_{target}}(u)$ satisfies
    \begin{align*}
        \bra{GHZ}\rho \ket{GHZ} \geq 1- \varepsilon.
    \end{align*}
\end{itemize}
\end{theorem}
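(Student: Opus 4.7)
The plan is to reduce the analysis of $\mathcal{P}_{GHZ}$ to a distributed implementation of the Zhu--Hayashi protocol $\mathcal{P}_{ZH}$ (Theorem~\ref{thm:ZH}) applied to the star graph $S_n$. By Lemma~\ref{lem1}, $\ket{S_n} = (I \otimes H^{\otimes n-1})\ket{GHZ}$, and since $H$ is self-inverse, verifying that the output register holds $\ket{S_n}$ and applying the non-leader Hadamards in Turn 4 yields $\ket{GHZ}$ in that register. The random choice of $b_{target}$ in Turn 2 will play the role of the random permutation of registers in $\mathcal{P}_{ZH}$, and the $N$ bits $b_{test}^1, \ldots, b_{test}^N$ will select one of the two stabilizer tests $P_0, P_1$ of $\ket{S_n}$ per test register.

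The distributed implementation of the stabilizer tests works as follows. Each stabilizer test for $\ket{S_n}$ measures the two color classes of the star graph in opposite bases ($X$ at the leader and $Z$ at the leaves, or vice versa) and checks a global parity condition on the outcomes. Since such a global parity cannot be computed in a single round of local communication, the prover is asked in Turn 5 to supply subtree-XOR values $s_u^i$ relative to the spanning tree built in Turns 1--3; each node then only checks the local relation $s_u^i = o_u^i \oplus \bigoplus_{v \in \mathrm{children}(u)} s_v^i$, and the leader enforces $o_{\mathsf{leader}}^i \oplus \bigoplus_v s_v^i = 0$. Telescoping these equalities up the tree, passing all such checks is equivalent to the global XOR of the $o_u^i$ vanishing, which is exactly the centralized parity condition used in $\mathcal{P}_{ZH}$.

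Completeness will then be immediate: the honest prover sends $\ket{S_n}^{\otimes N+1}$ distributed across the nodes, broadcasts consistent copies of $b_{test}$ and $b_{target}$, supplies a valid spanning tree, and sends the true subtree-XOR values $s_u^i$; every stabilizer test outputs $+1$, all local consistency checks pass, and the output register becomes $\ket{GHZ}$ after the non-leader Hadamards. For soundness, the neighbor checks on $b_{test}^i(u)$ and $b_{target}(u)$ force any cheating prover either to use a single consistent set of test parameters across the network or to be rejected, and the tree consistency check on $s_u^i$ forces the prover's claimed parities to match the actual outcomes. Conditioned on passing these checks, the global dynamics of $\mathcal{P}_{GHZ}$ reduce to $\mathcal{P}_{ZH}$ acting on the state the prover committed to in Turn 1, so Theorem~\ref{thm:ZH} applied with $N = \Theta(\varepsilon^{-1}\log \delta^{-1})$ yields $\bra{S_n}\rho'\ket{S_n} \geq 1 - \varepsilon$ for the surviving register; the non-leader Hadamards then give $\bra{GHZ}\rho\ket{GHZ} \geq 1 - \varepsilon$.

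The main obstacle will be arguing rigorously that the distributed structure grants the prover no additional cheating power beyond that already available in the centralized $\mathcal{P}_{ZH}$. The crucial observations are that $b_{test}$ and $b_{target}$ are generated by the leader only in Turn 2 and are hidden from the prover during Turn 1, so the prover must commit to the full state on $\mathsf{R}_1, \ldots, \mathsf{R}_{N+1}$ before learning either which register will be preserved or which stabilizer test will be applied to each of the others, which is exactly the guarantee needed by the Zhu--Hayashi analysis. All remaining cheating moves (sending incorrect $b_{test}(u), b_{target}(u)$ in Turn 3 or incorrect $s_u^i$ in Turn 5) are neutralized by the neighbor consistency checks, so no additional slackness is introduced by the distributed implementation.
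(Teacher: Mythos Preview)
Your proposal is correct and follows essentially the same route as the paper: reduce $\mathcal{P}_{GHZ}$ to a distributed implementation of the Zhu--Hayashi coloring protocol $\mathcal{P}_{ZH}$ for the star graph $\ket{S_n}$, argue that the tree-based subtree-XOR checks (condition (c)) and the local agreement check (condition (d)) realize exactly the two stabilizer tests $P_0,P_1$, and then invoke Theorem~\ref{thm:ZH} together with Lemma~\ref{lem1} and the final Hadamards to conclude. Your treatment is in fact slightly more explicit than the paper's in justifying why the distributed structure grants the prover no extra cheating power (commitment in Turn~1 before $b_{test},b_{target}$ are revealed, neighbor consistency forcing globally agreed test parameters, and telescoping of the $s_u^i$ values forcing the claimed parities to match the true ones), but this is elaboration of the same argument rather than a different one.
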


\begin{proof} 
We consider the first qubit is the qubit of $\mathsf{leader}$.
Then the test $P_0$ in Figure 1 are described as follows:
\begin{align*}
P_0 = \frac{1}{2}\Big( I + X_1\bigotimes_{i =2}^n Z_i^{\otimes n-1} \Big) 
	&= \frac{1}{2}\Big( I + \underset{z\in \{0,1\}^{n-1}}{\sum}(-1)^{|z|}\big[ 
	\ket{+,z}\bra{+,z} - \ket{-,z}\bra{-,z}
\big] \Big) \\
&=  \underset{z:(-1)^{|z|}=1}{\sum} \ket{+,z}\bra{+,z} + \underset{z:(-1)^{|z|}=-1}{\sum}  \ket{-,z}\bra{-,z}.
\end{align*}
Here $|z|$ is the number of 1's in $z$ and $Z_i$ is the Z gate acting on the $i$-th qubit. We also have
\begin{align*}
P_1 &= \prod_{i = 2}^n \frac{1}{2}\Big( I + Z_1\otimes X_i \bigotimes_{j\neq 1,i} I_{j} \Big) \\
 &= \prod_{i = 2}^n \Big( \ket{0}\bra{0}_1 \otimes  \ket{+}\bra{+}_{i}  \bigotimes_{j\neq 1,i} I_{j} + \ket{1}\bra{1}_1 \otimes \ket{-}\bra{-}_{i}  \bigotimes_{j\neq 1,i} I_{j} \Big) \\
 &= \ket{0,+^{n-1}}\bra{0,+^{n-1}} + \ket{1,-^{n-1}}\bra{1,-^{n-1}}.
\end{align*}
Therefore the conditions (c) and (d) in the verification phase actually checks tests $P_0,P_1$, respectively.
We can analyze the completeness and the soundness as follows:

\noindent
\textbf{Completeness:} The honest prover simply sends $\ket{S_n}^{\otimes N+1}$ in the first turn and broadcasts $b_{test}$, $b_{target}$ of $\mathsf{leader}$ in the third turn. Therefore the protocol is accepted with probability 1. From Lemma~\ref{lem1} the output state is $\ket{GHZ}$.

\noindent
\textbf{Soundness:} Since the protocol simulates $\mathcal{P}_{ZH}$ shown in Figure~\ref{ZH}, by using Theorem~\ref{thm:ZH}, if the protocol is accepted with probability $\delta$, then the reduced state $\rho'$ of regsiter $\mathsf{R}_{b_{target}}$ satisfies
$
    \bra{S_n}\rho' \ket{S_n} \geq 1- \varepsilon.
$
Since the verifier applies $(I\otimes H^{\otimes n-1})$ to $\mathsf{B}_{b_{target}}$, the output state $\rho$ satisfies
\begin{align*}
        \bra{GHZ}\rho \ket{GHZ} \geq 1- \varepsilon.
\end{align*}
\end{proof}

%%%%%%%%%%%%%%%%%%%%%%%%%%%%%%%%%%%%%%%%%%%%
\section{Proof of Theorem~\ref{thm:closeness}}\label{appendix:dqct}
%%%%%%%%%%%%%%%%%%%%%%%%%%%%%%%%%%%%%%%%%%%%

We show the protocol $\mathcal{P}_{\mathsf{DQCT}}$ in Figure~\ref{dSWAP} and analyze the completeness and the soundness.

\begin{lemma}\label{lem:acc_prob_swap}
%Let $\mathcal{H}$ be the Hilbert space corresponds to the state in $\mathsf{R_1,R_2}$. 
Let $\mathsf{R}_1$ and $\mathsf{R}_2$ be the two input registers of $\mathsf{DQCT}_N$.
Assume that  after step~1 of $\mathcal{P}_{\mathsf{DQCT}}$, the state in $(\mathsf{B,R_1,R_2})$ is $\ket{GHZ}\bra{GHZ}_{\mathsf{B}}\otimes \ket{\psi}\bra{\psi}_{\mathsf{R}_1}\otimes \ket{\phi}\bra{\phi}_{\mathsf{R}_2}$, %where $\rho_1$ and $\rho_2$ are two arbitrary density matrices on $\mathcal{H}$.
Then, the acceptance probability of $\mathcal{P}_{\mathsf{DQCT}}$ is at most $\frac{1}{2}+\frac{1}{2}|\braket{\psi|\phi}|^2$.
%$\mathrm{tr}[\Pi U\ket{GHZ}\bra{GHZ}_{\mathsf{B}}\otimes \rho_1\otimes \rho_2 U^{\dagger}]\leq \frac{1}{2}+\frac{1}{2}\mathrm{tr}(\rho_1\rho_2)$, where $\Pi=\ket{0}\bra{0}_{\mathsf{B}}\otimes I_{(\mathsf{R_1,R_2})}$.
\end{lemma}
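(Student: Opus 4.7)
The plan is to reduce the remaining steps of $\mathcal{P}_{\mathsf{DQCT}}$, under the stated assumption on the post-step-1 state, to a distributed incarnation of the ordinary SWAP test in which the GHZ state plays the role of the control qubit. The key structural observation is that $\ket{GHZ}=\tfrac{1}{\sqrt 2}(\ket{0^n}+\ket{1^n})$ is supported only on the two constant strings, so when every node $u$ applies a local controlled-SWAP on its share of $(\mathsf{R}_1,\mathsf{R}_2)$ controlled by its qubit of $\mathsf{B}$, the global effect is either ``do nothing'' (on the $\ket{0^n}$ branch) or ``swap the whole $\mathsf{R}_1$ with $\mathsf{R}_2$'' (on the $\ket{1^n}$ branch). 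This is exactly what the standard SWAP test's controlled-SWAP achieves, but realized without any prior entanglement between nodes other than the GHZ state they already share.

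First, I would compute the state after the local controlled-SWAPs and obtain
\[
\tfrac{1}{\sqrt{2}}\bigl(\ket{0^n}_{\mathsf{B}}\ket{\psi}_{\mathsf{R}_1}\ket{\phi}_{\mathsf{R}_2} + \ket{1^n}_{\mathsf{B}}\ket{\phi}_{\mathsf{R}_1}\ket{\psi}_{\mathsf{R}_2}\bigr).
\]
Next, each node applies a local Hadamard to its GHZ qubit and measures in the computational basis. Using $H^{\otimes n}\ket{0^n}=\ket{+^n}$ and $H^{\otimes n}\ket{1^n}=\ket{-^n}$ and expanding in the computational basis, each outcome $y\in\{0,1\}^n$ carries amplitude proportional to $\ket{\psi}\ket{\phi}+(-1)^{|y|}\ket{\phi}\ket{\psi}$. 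A direct calculation of the norm squared gives $\Pr[y]=2^{-n}\bigl(1+(-1)^{|y|}|\braket{\psi|\phi}|^2\bigr)$. Summing over the $2^{n-1}$ strings of even Hamming weight yields exactly $\tfrac{1}{2}\bigl(1+|\braket{\psi|\phi}|^2\bigr)$.

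Finally, I would connect this to the actual acceptance predicate of $\mathcal{P}_{\mathsf{DQCT}}$: acceptance requires the joint parity of the local outcomes to be even, which is checked by aggregating XORs along the spanning tree (exactly as in step~(c) of the verification phase of $\mathcal{P}_{GHZ}$). In the honest execution this parity check accepts iff $|y|$ is even, so the acceptance probability equals $\tfrac{1}{2}+\tfrac{1}{2}|\braket{\psi|\phi}|^2$. The subtlety, and the main obstacle, is the ``at most'' qualifier: one must rule out that a malicious prover boosts this probability by cheating on the auxiliary messages used to carry the XORs to the leader. I would handle this by noting that after step~1 the prover is (by assumption) product with $(\mathsf{B},\mathsf{R}_1,\mathsf{R}_2)$, so its later messages are purely classical and uncorrelated with the verifier's measurement outcomes; for any prover strategy the joint accept event is a function of the outcomes and these classical messages, and maximizing over the messages just amounts to correctly reporting the XORs. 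Hence the acceptance probability is upper bounded by the even-parity probability computed above, which is exactly the claim.
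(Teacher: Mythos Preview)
Your argument analyzes a different protocol from $\mathcal{P}_{\mathsf{DQCT}}$. In the actual protocol (Figure~\ref{dSWAP}), after the controlled-SWAPs the nodes do \emph{not} apply local Hadamards to $\mathsf{B}$ and aggregate a parity via classical XOR messages. Instead, the leader first copies its GHZ qubit into a private register $\mathsf{B}'$ via a CNOT, then in Turn~4 every node sends its \emph{quantum} register $\mathsf{B}(u)$ to the prover, who may apply an arbitrary unitary on $(\mathsf{B},\mathsf{P})$ before returning $\mathsf{B}$ in Turn~5; finally the leader applies a CNOT (control $\mathsf{B}'$, target $\mathsf{B}(\mathsf{leader})$) and a Hadamard on $\mathsf{B}'$, and each node accepts iff its qubit of $\mathsf{B}$ (and, for the leader, $\mathsf{B}'$) measures to $\ket{0}$. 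So the acceptance predicate is ``all of $(\mathsf{B}',\mathsf{B})$ equal $0$'', not ``even parity of local Hadamard outcomes''.

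Consequently, your soundness paragraph misses the actual attack surface. The prover's later action is not a classical XOR relay that is ``uncorrelated with the verifier's measurement outcomes''; it is an arbitrary quantum operation on the entire $\mathsf{B}$ register, applied \emph{before} any measurement of $\mathsf{B}$ takes place. The reason this cannot raise the acceptance probability above $\tfrac{1}{2}+\tfrac{1}{2}|\braket{\psi|\phi}|^2$ is the role of $\mathsf{B}'$: it stays with the leader throughout and is used only as a control, so the prover cannot touch it. Writing the post-prover, post-CNOT, post-Hadamard state as
\[
\tfrac{1}{2}\ket{0}_{\mathsf{B}'}\bigl(\ket{\xi_0}_{(\mathsf{B},\mathsf{P})}\ket{\psi}\ket{\phi}+\ket{\xi_1}_{(\mathsf{B},\mathsf{P})}\ket{\phi}\ket{\psi}\bigr)+\tfrac{1}{2}\ket{1}_{\mathsf{B}'}(\cdots),
\]
acceptance projects onto $\ket{0}_{\mathsf{B}'}\ket{0^n}_{\mathsf{B}}$, and optimizing over the prover's choice of $\ket{\xi_0},\ket{\xi_1}$ gives the claimed bound (attained at $\ket{\xi_0}=\ket{\xi_1}=\ket{0^n}\otimes\ket{\alpha}$). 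Your parity computation is correct for a purely local distributed SWAP test, but it does not establish the lemma for $\mathcal{P}_{\mathsf{DQCT}}$ as written.
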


\begin{proof}
After applied the controlled SWAP gates at step 2, the state is written as
\begin{align*}
    &\frac{1}{\sqrt{2}} \ket{0}_{\mathsf{B}'} \ket{0^n}_{\mathsf{B}} \ket{\psi}_{\mathsf{R}_1}\ket{\phi}_{\mathsf{R}_2}
+ \frac{1}{\sqrt{2}} \ket{1}_{\mathsf{B}'} \ket{1^n}_{\mathsf{B}}\ket{\phi}_{\mathsf{R}_1}\ket{\psi}_{\mathsf{R}_2}.
\end{align*}
After received the register $\mathsf{B}$ in the second turn, the prover performs arbitrary quantum operation, which is followed by $\mathsf{leader}$'s CNOT operation. Note that these procedures do not change the state in $\mathsf{B}'$ since $\mathsf{B}'$ is used as the control qubit.
Thus the entire state can be written  
\begin{align*}
    &\frac{1}{\sqrt{2}} \ket{+}_{\mathsf{B}'} \ket{\xi_0}_{(\mathsf{B,P})} \ket{\psi}_{\mathsf{R}_1}\ket{\phi}_{\mathsf{R}_2}
      + \frac{1}{\sqrt{2}} \ket{-}_{\mathsf{B}'} \ket{\xi_1}_{(\mathsf{B,P})} \ket{\phi}_{\mathsf{R}_1}\ket{\psi}_{\mathsf{R}_2}\\
      &= \frac{1}{2} \ket{0}_{\mathsf{B}'} \Big( 
      	\ket{\xi_0}_{(\mathsf{B,P})} \ket{\psi}_{\mathsf{R}_1}\ket{\phi}_{\mathsf{R}_2} + \ket{\xi_1}_{(\mathsf{B,P})} \ket{\phi}_{\mathsf{R}_1}\ket{\psi}_{\mathsf{R}_2}
	\Big)\\
      &\hspace{4mm} + \frac{1}{2} \ket{1}_{\mathsf{B}'} \Big(
      	\ket{\xi_0}_{(\mathsf{B,P})} \ket{\psi}_{\mathsf{R}_1}\ket{\phi}_{\mathsf{R}_2} - \ket{\xi_1}_{(\mathsf{B,P})} \ket{\phi}_{\mathsf{R}_1}\ket{\psi}_{\mathsf{R}_2}
      \Big)
\end{align*}
for some quantum state $\ket{\xi_0}, \ket{\xi_1}$, where $\mathsf{P}$ is the prover's private space.
Since $\mathsf{leader}$ rejects when the state in $\mathsf{B}'$ is $\ket{1}$, we are interested in the squared value of the amplitude of

\begin{align*}
    \frac{1}{2} \ket{0}_{\mathsf{B}'} \Big( 
      	\ket{\xi_0}_{(\mathsf{B,P})} \ket{\psi}_{\mathsf{R}_1}\ket{\phi}_{\mathsf{R}_2} + \ket{\xi_1}_{(\mathsf{B,P})} \ket{\phi}_{\mathsf{R}_1}\ket{\psi}_{\mathsf{R}_2}
	\Big).
\end{align*}
Since the verifier only accepts when the state in $\mathsf{B}$ is $\ket{0^n}$,
the acceptance probability is maximized when $\ket{\xi_{0}}=\ket{0^n}_{\mathsf{B}}\ket{\alpha_0}_{\mathsf{P}}$ and $\ket{\xi_{1}} = \ket{0^n}_{\mathsf{B}}\ket{\alpha_1}_{\mathsf{P}}$ for some $\ket{\alpha_0}_{\mathsf{P}}$ and $\ket{\alpha_1}_{\mathsf{P}}$. The maximum value is $\frac{1}{2}+\frac{1}{2}|\braket{\psi|\phi}|^2$, which corresponds to $\ket{\alpha_0}=\ket{\alpha_1}$.

\end{proof}

\begin{lemma}\label{lem:}
The protocol $\mathcal{P}_{\mathsf{DQCT}}$ is accepted with probability at most $\frac{1}{2}(1+|\braket{\psi|\phi}|^2) + \sqrt{2\varepsilon}$ where $\varepsilon$ is the parameter appeared in Step 1 of the protocol.
\end{lemma}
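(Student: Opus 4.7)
The plan is to decouple the analysis at the boundary between Step~1 of $\mathcal{P}_{\mathsf{DQCT}}$ (the GHZ-verification subprotocol $\mathcal{P}_{GHZ}$) and the SWAP-test part already analyzed in Lemma~\ref{lem:acc_prob_swap}. Fix any malicious prover strategy and let $\rho_{(\mathsf{B},\mathsf{P})}$ denote the joint state of the verifier's GHZ register $\mathsf{B}$ and the prover's private register $\mathsf{P}$ at the end of Step~1. Since Step~1 is precisely $\mathcal{P}_{GHZ}$ applied to $\mathsf{B}$, Theorem~\ref{thm:protocol_2} guarantees that $\bra{GHZ}\rho_{\mathsf{B}}\ket{GHZ}\ge 1-\varepsilon$, equivalently $F(\rho_{\mathsf{B}},\ket{GHZ}\bra{GHZ})\ge\sqrt{1-\varepsilon}$.

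Next I would use Uhlmann's theorem to lift this closeness from the marginal on $\mathsf{B}$ to a joint statement. Pick a purification $\ket{\Psi}$ of $\rho_{(\mathsf{B},\mathsf{P})}$ on $(\mathsf{B},\mathsf{P},\mathsf{E})$ for an ancilla $\mathsf{E}$. Uhlmann's theorem then produces a pure state $\ket{\eta}$ on $(\mathsf{P},\mathsf{E})$ with $|\braket{\Psi | GHZ\otimes\eta}|^{2}=\bra{GHZ}\rho_{\mathsf{B}}\ket{GHZ}\ge 1-\varepsilon$. Applying the pure-state form of Lemma~\ref{lem:inequalities}(1) together with the elementary estimate $\sqrt{1-F^{2}}=\sqrt{(1-F)(1+F)}\le\sqrt{2(1-F)}$ and $1-\sqrt{1-\varepsilon}\le\varepsilon$ yields
\begin{align*}
    \mathrm{dist}\!\left(\ket{\Psi}\bra{\Psi},\;\ket{GHZ}\bra{GHZ}\otimes\ket{\eta}\bra{\eta}\right)\le\sqrt{2\varepsilon}.
\end{align*}
Tensoring with the fixed inputs $\ket{\psi}\bra{\psi}_{\mathsf{R}_1}\otimes\ket{\phi}\bra{\phi}_{\mathsf{R}_2}$ preserves this trace distance, and the ancilla $\mathsf{E}$ can be absorbed into the prover's workspace because the remaining turns never touch it.

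Finally I would combine this with Lemma~\ref{lem:acc_prob_swap}: applied to the ideal reference state $\ket{GHZ}_{\mathsf{B}}\otimes\ket{\eta}_{\mathsf{P}}\otimes\ket{\psi}_{\mathsf{R}_1}\otimes\ket{\phi}_{\mathsf{R}_2}$ (an instance of the product form that lemma assumes, with the factor $\ket{\eta}$ living in the prover's workspace), it bounds the Step 2 onward acceptance probability by $\tfrac{1}{2}(1+|\braket{\psi|\phi}|^{2})$. Since the overall acceptance probability is a POVM expectation on the joint post-Step-1 state, it differs between the actual state $\ket{\Psi}\otimes\ket{\psi}\otimes\ket{\phi}$ and the ideal reference by at most their trace distance, giving the claimed bound $p_{acc}\le\tfrac{1}{2}(1+|\braket{\psi|\phi}|^{2})+\sqrt{2\varepsilon}$.

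The main obstacle is cleanly handling the possible entanglement between $\mathsf{B}$ and the prover's private register, since Lemma~\ref{lem:acc_prob_swap} is stated only for a product initial configuration on $(\mathsf{B},\mathsf{R}_1,\mathsf{R}_2)$. This is resolved by the Uhlmann step above together with the observation that the proof of Lemma~\ref{lem:acc_prob_swap} is uniform in the prover's subsequent strategy (the states $\ket{\xi_0},\ket{\xi_1}$ in that proof are arbitrary), so any fixed reference state $\ket{\eta}$ in the prover's register is tolerated. Once this reduction is in place, the rest follows from the two preceding lemmas almost mechanically.
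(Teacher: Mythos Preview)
Your proposal is correct and follows the paper's approach: invoke Theorem~\ref{thm:protocol_2} for the fidelity bound on $\mathsf{B}$, convert it to a trace-distance bound, and add that error to the ideal-case bound of Lemma~\ref{lem:acc_prob_swap}. Your Uhlmann step (lifting closeness from the marginal on $\mathsf{B}$ to a purified joint state including the prover's register) is a refinement the paper leaves implicit---it simply works with the reduced state $\rho$ on $\mathsf{B}$ and bounds $\mathrm{dist}(\rho,\ket{GHZ}\bra{GHZ})\le\sqrt{1-\bra{GHZ}\rho\ket{GHZ}^{2}}\le\sqrt{2\varepsilon}$---but the skeleton of the argument is identical.
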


\begin{proof}
Assume that $\mathcal{P}_{GHZ}$ in Figure~\ref{dSWAP} is accepted with at least probability $\delta$. %Define a quantity $d= 1-\delta - \Big( \frac{1}{2}+\frac{1}{2}\mathrm{tr}(\rho_1\rho_2) \Big)$. If $d\leq 0$, it means that the required soundness condition holds. Thus we assume that $d>0$. 
From Theorem~\ref{thm:protocol_2}, the reduced state $\rho$ in $\mathsf{B}$ after the step 1 of the protocol satisfies $\bra{GHZ}\rho\ket{GHZ}\geq 1-\varepsilon$ if the verification of the subprotocol $\mathcal{P}_{GHZ}$ is accepted. Let $U$ be the unitary that is applied by the prover and the verifier at steps 2 and 3 of the protocol and $\Pi = \ket{0}\bra{0}_{(\mathsf{B},\mathsf{B}')}\otimes I$. We can see $\mathrm{tr}[\Pi U\ket{GHZ}\bra{GHZ}_{\mathsf{B}}\otimes \ket{\psi}\bra{\psi}\otimes \ket{\phi}\bra{\phi} U^{\dagger}]\leq \frac{1}{2}+\frac{1}{2}|\braket{\psi|\phi}|^2$ by Lemma~\ref{lem:acc_prob_swap}.
Therefore the acceptance probability of $\mathcal{P}_{\mathsf{DQCT}}$ is
\begin{align*}
    %\delta &\cdot \mathrm{tr}[\Pi U\rho\otimes \rho_1\otimes \rho_2 U^{\dagger}] \\
    &\mathrm{tr}[\Pi U\rho\otimes \ket{\psi}\bra{\psi}\otimes \ket{\phi}\bra{\phi} U^{\dagger}] \\
    &\leq  \mathrm{tr}[\Pi U\ket{GHZ}\bra{GHZ}_{\mathsf{B}}\otimes \ket{\psi}\bra{\psi}\otimes \ket{\phi}\bra{\phi} U^{\dagger}] + \mathrm{dist}(\rho,\ket{GHZ}\bra{GHZ})\\
    &\leq \mathrm{tr}[\Pi U\ket{GHZ}\bra{GHZ}_{\mathsf{B}}\otimes \ket{\psi}\bra{\psi}\otimes \ket{\phi}\bra{\phi} U^{\dagger}] + \sqrt{1-|\bra{GHZ}\rho\ket{GHZ}|^2}\\
    &\leq \frac{1}{2}+\frac{1}{2}|\braket{\psi|\phi}|^2 + \sqrt{2\varepsilon}.
\end{align*}

\end{proof}

We are now ready to prove Theorem~\ref{thm:closeness}.

\setcounter{theorem}{3}
\begin{theorem}[restated]
There is a $\mathsf{dQIP}[5](O(1))$ protocol for $\mathsf{DQCT}_N$, where the completeness and the soundness conditions are defined as follows:
\begin{itemize}
    \item \textbf{Completeness:} If $\ket{\psi} = \ket{\phi}$ and the prover is honest, the protocol is accepted with probability~1.
    %\item  For any $\ket{\psi}$ and $\ket{\phi}$, the protocol is accepted with probability at most $\frac{1}{2}+\frac{1}{2}|\braket{\psi|\phi}|^2 + \varepsilon$ for any small constant $\varepsilon > 0$.
    \item \textbf{Soundness:} If the protocol is accepted with probability $1-1/z$, $\mathrm{dist}(\ket{\psi},\ket{\phi}) \leq \sqrt{2/z} + \varepsilon$ for any small constant $\varepsilon > 0$.
\end{itemize}
\end{theorem}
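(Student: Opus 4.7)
The plan is to combine the GHZ-verification subprotocol $\mathcal{P}_{GHZ}$ from Theorem~\ref{thm:protocol_2} with a distributed implementation of the SWAP test. Concretely, the five-turn protocol $\mathcal{P}_{\mathsf{DQCT}}$ would interleave the turns of $\mathcal{P}_{GHZ}$ with the SWAP-test operations so that no extra turns are needed: in the first three turns the prover supplies $N+1$ putative copies of $|S_n\rangle$ (together with spanning-tree information, which fits in $O(1)$ per turn via~\cite{naor2020power}) and the leader picks a random test index and a random target index; in the fourth turn every non-target copy is tested by the coloring measurements of $\mathcal{P}_{GHZ}$, while on the target copy each non-leader node applies a Hadamard (converting $|S_n\rangle$ into $|GHZ\rangle$ by Lemma~\ref{lem1}) and then uses its qubit of that register as the control of a local controlled-SWAP between its shares of $\mathsf{R}_1$ and $\mathsf{R}_2$; finally, the leader Hadamards its control qubit and, with two additional rounds of communication with the prover over turns 4--5, the network disentangles the remaining control qubits back to $|0^{n-1}\rangle$ so that only the leader's single qubit needs to be measured. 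Since all exchanged registers contain a constant number of qubits and only five turns are used, this yields a $\mathsf{dQIP}[5](O(1))$ protocol.

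\textbf{Completeness.} Assuming $|\psi\rangle=|\phi\rangle$ and an honest prover, the prover sends $N+1$ exact copies of $|S_n\rangle$; the coloring tests all accept with probability one (Theorem~\ref{thm:ZH}) and the target copy becomes an exact $|GHZ\rangle$. Since the two input states coincide, the controlled-SWAPs act as the identity and the control register remains $|GHZ\rangle$, so the leader's Hadamard-and-measure step returns $|0\rangle$ deterministically, giving acceptance probability $1$.

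\textbf{Soundness.} This is the subtle step and the one I expect to require the most care, because a cheating prover can entangle the ``GHZ register'' with its private workspace and with the messages sent in the last two turns, so Lemma~\ref{lem:acc_prob_swap} cannot be applied verbatim. The strategy is two-stage: first, whatever state the prover supplies on the target copy, conditioning on passing the coloring tests forces the reduced state $\rho$ on that copy to satisfy $\langle GHZ|\rho|GHZ\rangle \geq 1-\varepsilon$ by Theorem~\ref{thm:protocol_2}; second, Lemma~\ref{lem:acc_prob_swap} bounds the post-GHZ acceptance probability by $\tfrac{1}{2}(1+|\langle \psi|\phi\rangle|^2)$ under the assumption of a perfect GHZ. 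Chaining these two bounds via the Fuchs--van de Graaf inequality (Lemma~\ref{lem:inequalities}(1)) gives
\[
P_{\mathrm{acc}} \;\leq\; \tfrac{1}{2}\bigl(1+|\langle \psi|\phi\rangle|^2\bigr) + \sqrt{2\varepsilon},
\]
exactly the bound stated in the unnamed lemma preceding the theorem.

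\textbf{From the bound to the distance.} Suppose $P_{\mathrm{acc}} \geq 1-1/z$. Rearranging the above inequality yields $|\langle \psi|\phi\rangle|^2 \geq 1 - 2/z - 2\sqrt{2\varepsilon}$. Applying Lemma~\ref{lem:inequalities}(1) once more, $\mathrm{dist}(|\psi\rangle,|\phi\rangle) \leq \sqrt{1-|\langle \psi|\phi\rangle|^2} \leq \sqrt{2/z + 2\sqrt{2\varepsilon}}$; the subadditivity $\sqrt{a+b} \leq \sqrt{a}+\sqrt{b}$ then gives $\mathrm{dist}(|\psi\rangle,|\phi\rangle) \leq \sqrt{2/z} + \sqrt{2\sqrt{2\varepsilon}}$, and choosing the internal parameter $\varepsilon$ of $\mathcal{P}_{GHZ}$ small enough that $\sqrt{2\sqrt{2\varepsilon}} \leq \varepsilon$ (which is legal since $\varepsilon$ in Theorem~\ref{thm:protocol_2} can be made arbitrarily small at the cost only of a constant $N=\Theta((1/\varepsilon)\log(1/\delta))$) completes the soundness proof with the stated error $\varepsilon$.
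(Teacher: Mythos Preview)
Your overall strategy---run $\mathcal{P}_{GHZ}$, use the target copy to drive a distributed controlled-SWAP, then chain Theorem~\ref{thm:protocol_2} with Lemma~\ref{lem:acc_prob_swap} via the Fuchs--van~de~Graaf inequality---is exactly the paper's approach, and your final soundness calculation matches the paper line for line.

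The gap is in the protocol design, and it breaks both completeness and your appeal to Lemma~\ref{lem:acc_prob_swap}. You have the leader keep its GHZ qubit while the remaining $n-1$ control qubits go to the prover to be ``disentangled back to $|0^{n-1}\rangle$''. But after the controlled-SWAPs the two branches of the superposition carry the orthogonal states $|0^{n-1}\rangle$ and $|1^{n-1}\rangle$ on those qubits, and no unitary on the prover's side can map two orthogonal inputs to the same output; the which-branch information necessarily survives in the prover's workspace, so the interference underlying the SWAP test is destroyed and the leader's Hadamard-then-measure returns $|0\rangle$ with probability exactly $\tfrac12$ regardless of $\langle\psi|\phi\rangle$. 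Your completeness sentence (``the control register remains $|GHZ\rangle$, so the leader's Hadamard-and-measure step returns $|0\rangle$ deterministically'') is also false as written: Hadamard on one qubit of $|GHZ\rangle$ does not yield $|0\rangle$ with certainty.

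The paper's protocol resolves this with an extra local qubit $\mathsf{B}'$: before the controlled-SWAPs the leader CNOTs its share $\mathsf{B}(\mathsf{leader})$ into $\mathsf{B}'$, and then sends \emph{all} of $\mathsf{B}$ (including the leader's share) to the prover in turn~4. The honest prover can now coherently compress $|0^n\rangle,|1^n\rangle$ onto the single qubit $\mathsf{B}(\mathsf{leader})$; when it is returned in turn~5 the leader uncomputes it against $\mathsf{B}'$, and only then applies the Hadamard to $\mathsf{B}'$. Every node then checks $\mathsf{B}(u)=|0\rangle$ (not just the leader). This $\mathsf{B}'$ trick is precisely what makes Lemma~\ref{lem:acc_prob_swap} go through: its proof hinges on the observation that the prover's action and the leader's CNOT ``do not change the state in $\mathsf{B}'$ since $\mathsf{B}'$ is used as the control qubit'', which pins the branch structure in place against a malicious prover.
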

\setcounter{theorem}{26}
\begin{proof}
{\bf Completeness:} Assume that the contents of two input registers $\mathsf{R}_1$ and $\mathsf{R}_2$ are identical. The prover simulates honest operations for the run of ${\mathcal P}_{GHZ}$ in step 1.
After received the register $\mathsf{B}$ in Turn 4, the prover applies $n-1$ CNOT gates where the control qubit is $\mathsf{B}(\mathsf{leader})$ and the target qubit is the other part of $\mathsf{B}$. Therefore the state in $\mathsf{B}',\mathsf{B}$ becomes
\begin{align*}
\frac{1}{\sqrt{2}}
\Big(
\ket{0}_{\mathsf{B}'}\ket{0}_{\mathsf{B}(\mathsf{leader})} 
+
\ket{1}_{\mathsf{B}'}\ket{1}_{\mathsf{B}(\mathsf{leader})}  
\Big) \otimes \ket{0^{n-1}}_{\mathsf{B}\backslash \mathsf{B}(\mathsf{leader})}.
\end{align*}
 It can be checked easily that after the local operations of $\mathsf{leader}$ at the verification phase, the resulting state in $(\mathsf{B',R_1,R_2})$ is identical to the state before the measurement of the SWAP test. Thus the protocol is accepted with probability~1.
 
 \noindent
 {\bf Soundness:} From Lemma~\ref{lem:acc_prob_swap}, we have $\frac{1}{2}+ \frac{1}{2}|\braket{\psi|\phi}|^2 + \sqrt{2\varepsilon} \geq 1-\frac{1}{z}$.
 Then we have
\begin{align*}
	1-\frac{2}{z}-\sqrt{8\varepsilon} \leq  |\braket{\psi | \phi}|^2 = 
 F(\ket{\psi}\bra{\psi},\ket{\phi}\bra{\phi})^2.
\end{align*}
%from the concavity of the fidelity, we have
By Lemma~\ref{lem:inequalities}, 
\begin{align*}
\mathrm{dist}(\ket{\psi},\ket{\phi})\leq \sqrt{1-F(\ket{\psi}\bra{\psi},\ket{\phi}\bra{\phi})^2}\leq \sqrt{1-(1-2/z-\sqrt{8\varepsilon})} \leq \sqrt{\frac{2}{z} + \sqrt{8\varepsilon}},
\end{align*}
which is bounded by $\sqrt{2/z} + \varepsilon'$ for arbitrary small $ \varepsilon'>0$ by taking sufficiently small $\varepsilon>0$.
Thus the proof is completed.
\end{proof}

\begin{figure}[htbp]
\begin{mdframed}
\textbf{INPUT:} each node $u$ has two registers $\mathsf{R}_1(u)$ and $\mathsf{R}_2(u)$ that are not entangled each other.
\begin{enumerate}
    \item \textbf{Turn 1-3:} Let $0<\varepsilon<1$, $0<\delta<1$ be two positive numbers and $N\in \Theta(\frac{1}{\varepsilon}\log\frac{1}{\delta})$ be an integer apprearing in Theorem~\ref{thm:ZH}. The prover and the verifier perform the protocol $\mathcal{P}_{GZH}$ in the Figure 1 with these parameters, except the verification phase. %twice in parallel.
     \item \textbf{Turn 1-3:} The prover and the verifier construct a spanning tree rooted at $\mathsf{leader}$.
     \item Let $\mathsf{B}$ be the output register of $\mathcal{P}_{GHZ}$, which (ideally) contains $\ket{GHZ}$. $\mathsf{leader}$ prepares 1-qubit register $\mathsf{B'}$ and applies the CNOT gate with the target qubit $\mathsf{B}'$ and the control qubit $\mathsf{B(\mathsf{leader})}$.
 Each node $u$ performs controlled-SWAP gate to $\mathsf{R}_1(u)$ and $\mathsf{R}_2(u)$ using controlled register $\mathsf{B}(u)$. 
  \item \textbf{Turn 4:}  Each node sends $\mathsf{B}(u)$ to the prover.
    \item \textbf{Turn 5:} The prover sends $\mathsf{B}(u)$ to each node $u$.
    \item (Verification Phase) 
    $\mathsf{leader}$ applies the CNOT gate with the control qubit $\mathsf{B}'$ and the target qubit $\mathsf{B(\mathsf{leader})}$, then, applies the Hadamard gate to $\mathsf{B}'$. Each node $u$ performs the verification of the constructed spanning tree, and rejects if the verification fails.
    Each node $u$ runs the verification phase of $\mathcal{P}_{GHZ}$ and measures $\mathsf{B}(u)$ in the computational basis. Each node $u$ accepts iff $\mathcal{P}_{GHZ}$ is accepted and the content of $\mathsf{B}(u)$ is $\ket{0}$.
\end{enumerate}
\end{mdframed}
\caption{$\mathsf{dQIP}$ protocol $\mathcal{P}_{\mathsf{DQCT}}$}
\label{dSWAP}
\end{figure}

%%%%%%%%%%%%%%%%%%%%%%%%%%%%%%%%%%%%%%%%%
\section{Perfect completeness}\label{section:perfect}
%%%%%%%%%%%%%%%%%%%%%%%%%%%%%%%%%%%%%%%%%%
In this appendix we consider the $\mathsf{dQIP}c$ model, and
show how to transform a protocol with two-sided bounded error into a protocol with perfect completeness. 
%one-sided error. 
The number of turn of the transformed protocol increases by four turns and the size of message registers remains unchanged. This is shown by implementing the result of \cite{kitaev2000parallelization} in a distributed manner. The main difference from the centralized setting is the rejection condition. \par
In the case of the distributed setting, assume each node outputs 0 if it accepts otherwise outputs 1 in the case of the distributed setting.
Then the protocol is rejected iff the output is not all-zero. The method of Kitaev and Watrous in the case of the centralized setting includes the operation that an additional register is prepared by the verifier, and it is incremented iff the private register is in the rejection state (i.e., $\ket{1}$). In the distributed setting, the prover determines the leader node, and the leader performs this operation. However, as mentioned above, in the case of distributed setting, the protocol rejects even if the leader is in the acceptance state but some other node is in the rejection state. Hence the leader needs the help of the prover to confirm that the protocol is in the rejected state. These operations require four additional turns but do not change the size of the message register.

Let $\mathcal{L}$ be a language that has a $k$-turn $\mathsf{dQIP}c$ system with completeness $c$ and soundness $s$ where $c-s>\delta$ for some constant $\delta > 0$. We show a distributed implementation of Kitaev and Watrous in Figure~\ref{fig:protocol_perfect}. Note that this protocol works for $\mathsf{dQIP}c$ but does not work for $\mathsf{dQIP}$ (and $\mathsf{dQIP}^{sh}$) since nodes need to communicate with each other in the middle of the protocol in order to check if its private register is in the acceptance state. The following theorem can be shown easily by adapting the proof of~\cite{kitaev2000parallelization} to the distributed setting.
\begin{theorem}\label{theorem:completeness}
Any $\mathsf{dQIP}c[k](f(n))$ protocol with completeness $c$ and soundness $s$ where $c-s>\delta$ for some constant $\delta > 0$ that uses $(g(n)+f(n)\mathrm{deg}(u))$-qubit private register at node $u$ can be transformed to a $\mathsf{dQIP}c[k+4](f(n)\Delta +g(n))$ protocol that uses $(g(n)+f(n)\mathrm{deg}(u))$-qubit private register at node $u$, with perfect completeness and soundness $1-\delta^2$ where $\Delta$ is the maximum degree of the network.
\end{theorem}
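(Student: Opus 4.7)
The plan is to adapt the Kitaev-Watrous construction~\cite{kitaev2000parallelization} for achieving perfect completeness in centralized $\mathsf{QIP}$ to the distributed setting, following the sketch already given in the preamble of Appendix~\ref{section:perfect}. Recall that the centralized KW transformation appends a constant number of extra turns in which (i) the original protocol is run coherently rather than measured, (ii) the verifier conditionally increments a bookkeeping register on the would-be rejection qubit, and (iii) the prover coherently uncomputes the resulting state; a standard analysis then shows that if the original protocol has gap $c-s>\delta$, the new protocol is complete with probability $1$ and sound with error at most $1-\delta^2$. My goal is to simulate this procedure using only the local operations and nearest-neighbor interactions allowed in the $\mathsf{dQIP}c$ model, adding exactly four turns.

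First I would fix a leader node $\mathsf{leader}$ together with a rooted spanning tree (which, as already used in Section~\ref{section:general}, costs no extra turns because it can be piggybacked on the first turn of the original protocol). The $k$-turn $\mathsf{dQIP}c$ protocol is then run in a reversible way: no node ever measures, and each node $u$ stores, on a single private qubit $\mathsf{A}_u$, the output of its POVM $(\Pi_{\text{acc},u},I-\Pi_{\text{acc},u})$ by means of a controlled-NOT into $\mathsf{A}_u$ followed by uncomputation of the ancillas. At this point the global state encodes acceptance in the all-zero content of the joint register $\bigotimes_u \mathsf{A}_u$, which is precisely the setting for the KW amplification. The main issue is that the single "would-reject" qubit required by KW is here distributed across the network as the OR of the $\mathsf{A}_u$'s, and $\mathsf{leader}$ cannot compute it on its own.

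This is the obstacle addressed by the four extra turns. I would implement the coherent OR as follows: in the first extra turn each node $u$ sends $\mathsf{A}_u$ to the prover; in the second the prover returns a qubit that should equal $\bigvee_u \mathsf{A}_u$ to $\mathsf{leader}$, together with auxiliary registers returned to every other node; in the third turn $\mathsf{leader}$ performs the KW conditional increment on its counter register and the nodes return their auxiliary registers to the prover; in the fourth turn the prover is asked to coherently uncompute the OR, returning each $\mathsf{A}_u$ back to $u$. Because the prover is untrusted, correctness of this OR subcomputation is not assumed but enforced by the overall KW argument: any deviation by the prover during the four extra turns is absorbed into the adversarial prover of the amplified protocol, and the analysis of~\cite{kitaev2000parallelization} still yields soundness $1-\delta^2$. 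Completeness is immediate, since the honest KW prover strategy, combined with the honest simulator for the OR computation, drives the counter register to $\ket{0}$ deterministically in yes-instances.

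The hard part, which I would carry out as a routine but careful verification rather than a new idea, is the bookkeeping on register sizes. The KW counter is $O(\log(1/\delta))$ qubits and can be absorbed into the $g(n)$ term, so the private register at node $u$ remains $g(n)+f(n)\mathrm{deg}(u)$ qubits. During the four extra turns each node sends $O(f(n))$ qubits to the prover in each direction, matching the size constraints of the class. The only place where the $\Delta$ blow-up appears is in the middle communication among neighbors used to synchronize the auxiliary registers returned by the prover: node $u$ may need to exchange $O(f(n))$ qubits with each of its $\mathrm{deg}(u)\le\Delta$ neighbors, giving the claimed message size $f(n)\Delta+g(n)$ and establishing Theorem~\ref{theorem:completeness}.
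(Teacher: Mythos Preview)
Your sketch misses the step that actually drives both perfect completeness and the $f(n)\Delta+g(n)$ message bound. In the Kitaev--Watrous transformation, after the bookkeeping qubit $\mathsf{B}$ has been conditionally flipped, the prover must receive the verifier's \emph{entire} workspace and coherently map it to $\ket{0\cdots 0}$, so that $\mathsf{B}$ becomes disentangled from everything else; only then does the rotation $T_c$ send $\sqrt{c}\ket{0}+\sqrt{1-c}\ket{1}$ exactly to $\ket{0}$. Your four extra turns merely compute and then uncompute the OR of the local outputs $\mathsf{A}_u$; the private registers $\mathsf{V}_u$ and the communication registers $\mathsf{W}_{u,v}$ never leave the nodes. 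Without that reset, after applying $T_c$ the state is still $\bigl(c\ket{\psi_{\text{acc}}}+(1-c)\ket{\psi_{\text{rej}}}\bigr)\ket{0}_{\mathsf{B}}+\sqrt{c(1-c)}\bigl(\ket{\psi_{\text{rej}}}-\ket{\psi_{\text{acc}}}\bigr)\ket{1}_{\mathsf{B}}$, and the acceptance probability is $c^2+(1-c)^2<1$, so perfect completeness fails. In the paper's protocol this reset is Step~5 of Figure~\ref{fig:protocol_perfect}: every node $u$ ships $\mathsf{V}_u$ together with all $\mathsf{W}_{u,v}$, $(u,v)\in E$, to the prover.

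This is also the true source of the $f(n)\Delta+g(n)$ blow-up. The register $\mathsf{V}_u\cup\{\mathsf{W}_{u,v}\}_{v}$ has $g(n)+f(n)\mathrm{deg}(u)$ qubits, so the message register $\mathsf{M}_u$ in the new protocol must accommodate $g(n)+f(n)\Delta$ qubits. Your attribution of the $\Delta$ factor to ``middle communication among neighbors'' cannot be right: in the complexity measure each $\mathsf{W}_{u,v}$ is counted per edge, so exchanging $O(f(n))$ qubits with every neighbor is still an $O(f(n))$ cost, not $O(f(n)\Delta)$. Finally, two smaller points: the KW bookkeeping register is a single qubit (two in the paper, $\mathsf{B}$ and $\mathsf{B}'$), not $O(\log(1/\delta))$ qubits; and the paper does not trust the prover's OR blindly---in Step~3 the prover broadcasts a qubit to all nodes and the network uses the mid-interaction communication of the $\mathsf{dQIP}c$ model to check that every node received the same value, which is what your soundness sketch is missing to make the ``absorbed into the adversary'' claim go through.
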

%Using parallel repetition for quantum protocols~\cite{gutoski2010quantum}, 
Then we can use the $\mathsf{dQIP}c$ variant of Theorem \ref{theorem_5_turn} and the parallel repetition of~\cite{gutoski2010quantum} repeatedly to reduce the number of turns to $5$.
\begin{corollary}\label{corollary:completeness}
Any $\mathsf{dQIP}c[k](f(n))$ protocol that uses $(g(n)+f(n)\mathrm{deg}(u))$-qubit private register at node $u$ can be transformed to a $\mathsf{dQIP}c[5](f(n)\Delta + g(n))$ protocol with perfect completeness where $\Delta$ is the maximum degree of the network.
\end{corollary}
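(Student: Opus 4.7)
The strategy combines the four results already available: Theorem~\ref{theorem:completeness} to obtain perfect completeness at the cost of four additional turns, the $\mathsf{dQIP}c$ analogues of Theorem~\ref{theorem_5_turn} and Theorem~\ref{thm:7_to_5} to bring the number of turns back down to $5$, and finally the parallel-repetition result of~\cite{gutoski2010quantum} to amplify the soundness. The key observation driving the argument is that both turn-halving transformations send a completeness $c$ to $(1+c)/2$, so perfect completeness ($c=1$) is a fixed point and survives every reduction step.

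First, I would apply Theorem~\ref{theorem:completeness} to the given $\mathsf{dQIP}c[k](f(n))$ protocol, obtaining a $\mathsf{dQIP}c[k+4](f(n)\Delta + g(n))$ protocol with perfect completeness and soundness $1-\delta^2$, still using $(g(n)+f(n)\mathrm{deg}(u))$-qubit private registers per node, where $\delta$ is the initial completeness--soundness gap. I would then repeatedly apply the $\mathsf{dQIP}c$ variant of Theorem~\ref{theorem_5_turn}, halving the number of turns each time. Since the current completeness is always~$1$, the hypothesis $c^2>s$ reduces to $s<1$, and this is preserved at every iteration because the map $s\mapsto(1+\sqrt{s})/2$ sends $[0,1)$ strictly into itself. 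I would iterate until the number of turns reaches either $5$ or $7$; in the latter case, a single application of Theorem~\ref{thm:7_to_5} descends to $5$ turns. After these $O(\log k)$ reductions, the completeness is still~$1$ and the soundness is some constant $s^{\star}<1$ that depends only on $k$ and $\delta$.

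Finally, I would apply the parallel-repetition theorem of~\cite{gutoski2010quantum}, in which the verifier accepts iff every parallel copy accepts. Running $N$ independent copies in parallel preserves completeness~$1$ and drops soundness to $(s^{\star})^{N}$. Choosing $N$ to be a constant (depending on $k$ and $\delta$ but not on $n$) large enough that $(s^{\star})^{N}\leq 1/3$ yields the desired $5$-turn protocol with perfect completeness and soundness~$1/3$.

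The main obstacle is verifying that the message-size bound $O(f(n)\Delta + g(n))$ is stable under this entire cascade. Theorem~\ref{theorem:completeness} is the only step that multiplies the message register by a factor of $\Delta$; each subsequent turn-halving grows the message size only additively by the verifier's per-node private register size, which does not itself grow under the snapshot construction of the proofs of Theorems~\ref{theorem_5_turn} and~\ref{thm:7_to_5} (the verifier's private state is absorbed into the prover's message register rather than into a larger private register on the verifier's side). Hence, after finitely many halvings and a constant number of parallel copies, the message size remains $O(f(n)\Delta + g(n))$, as required.
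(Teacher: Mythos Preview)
Your proposal is correct and follows essentially the same route as the paper, which only sketches the argument in one sentence (apply Theorem~\ref{theorem:completeness}, then the $\mathsf{dQIP}c$ variant of Theorem~\ref{theorem_5_turn} together with the parallel repetition of~\cite{gutoski2010quantum} repeatedly). Your write-up is in fact more careful on two points: you explicitly invoke the $7\to 5$ reduction (Theorem~\ref{thm:7_to_5}) for the final step, and you observe that because $(1+c)/2$ fixes $c=1$, perfect completeness survives every halving, so the hypothesis $c^2>s$ is automatically maintained and one does \emph{not} need to interleave amplification between the turn reductions---a single parallel repetition at the end suffices. This is a genuine simplification over the paper's phrasing, which suggests repeated amplification.
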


\begin{figure}[htbp]
\begin{mdframed}
\begin{enumerate}
	\item Run the original protocol except outputting accept or reject. Construct a spanning tree with the root $\ell$.
	\item Each node sends to the prover the first qubit of its private register which corresponds to the output.
	\item The prover sends one-qubit registers to all nodes. The network checks if all of qubits provided by the prover are the same. If not, the network rejects the protocol.
	\item Let $\ell$ be the leader node which is also the root of a spanning tree. The leader $\ell$ prepares one-qubit registers $\mathsf{B}$ and $\mathsf{B}'$ in the state $\ket{0}\ket{0}$ and increments both $\mathsf{B}$ and $\mathsf{B}'$ iff the first qubit of its private register is $\ket{1}$.
	\item Each node $u$ sends its private register $\mathsf{V}_u$ and $\mathsf{W}_{u,v}$ for all $(u,v)\in E$ to the prover. The leader sends $\mathsf{B}$ to the prover.
	\item The prover sends $\mathsf{B}'$ to $\ell$, and $\ell$ subtracts $\mathsf{B}$ from $\mathsf{B}'$ (i.e., flips $\mathsf{B}'$ iff the content of $\mathsf{B}$ is $\ket{1}$). The leader $\ell$ applies $T_c$ to $\mathsf{B}$ where $T_c$ is given by
	\begin{align*}
	    T_c(\ket{0})= \sqrt{c}\ket{0} - \sqrt{1-c}\ket{1}\\
	    T_c(\ket{1})= \sqrt{1-c}\ket{0} + \sqrt{c}\ket{1}.
	\end{align*}
	Then the leader measures $\mathsf{B}$ and accepts iff the outcome is $\ket{0}$.
\end{enumerate}
\end{mdframed}
\caption{$(k + 4)$-turn $\mathsf{dQIP}c$ protocol with perfect completeness.}
\label{fig:protocol_perfect}
\end{figure}

\end{appendix}

\end{document}